\documentclass[11pt]{article}
\usepackage[letterpaper, margin=1in]{geometry}
\usepackage{mathtools, mleftright, amssymb, amsthm, physics, tikz, ifthen, multirow, adjustbox, array, float, soul, bbm, makecell, nicefrac, booktabs, subcaption, fontawesome, graphicx, bbold, authblk}
\usepackage{todonotes}
\usepackage[usestackEOL]{stackengine}

\usepackage{hyperref}
\hypersetup{
    colorlinks,
    linkcolor={red!80!black},
    citecolor={blue},
    urlcolor={blue!80!black}
}
\usepackage{cleveref}

\newcommand*\cno[1][0.8ex]{\tikz\draw (0,0) circle (#1);}
\newcommand*\cboth[1][0.8ex]{%
  \begin{tikzpicture}
  \draw[fill] (0,0) -- (180:#1) arc (180:360:#1) -- cycle;
  \draw (0,0) circle (#1);
  \end{tikzpicture}}
\newcommand*\cyes[1][0.8ex]{\tikz\fill (0,0) circle (#1);}

\newcommand{\drop}[1]{}
\usepackage{algorithm}
\usepackage{algorithmicx}
\usepackage{algpseudocode}
\algtext*{EndWhile}
\algtext*{EndIf}
\algtext*{EndFor}
\algrenewcommand\algorithmicrequire{\textbf{Input:}}
\algrenewcommand\algorithmicensure{\textbf{Output:}}

\newcommand{\Deg}{\textnormal{\texttt{DEG}}}
\newcommand{\Neigh}{\textnormal{\texttt{NEIGH}}}
\newcommand{\Jump}{\textnormal{\texttt{JUMP}}}
\newcommand{\NeighSorted}{\textnormal{\texttt{NEIGH-SORTED}}}
\newcommand{\Adj}{\textnormal{\texttt{ADJ}}}

\newcommand{\pw}{\textnormal{\texttt{PowerMethod}}}

\newcommand{\RW}{\textnormal{\texttt{RandomWalk}}}
\newcommand{\RBS}{\textnormal{\texttt{RandPush}}}
\newcommand{\BackPush}{\texttt{BacwardsPush}}
\newcommand{\BiPPRAvg}{\texttt{BiPPRAvg}}
\newcommand{\BackPushNew}{\textnormal{\texttt{BackwardsPushAvg}}}

\newcommand{\SingleNode}{\textnormal{\texttt{SingleNode}}}

\theoremstyle{plain}
\newtheorem{theorem}{Theorem}[subsection]
\newtheorem{lemma}[theorem]{Lemma}

\theoremstyle{definition}
\newtheorem{definition}[theorem]{Definition}

\theoremstyle{remark}

\newcommand{\p}[1]{\mleft( #1 \mright)}
\newcommand{\np}[1]{( #1 )}

\newcommand{\curly}[1]{\mleft\{ #1 \mright\}}
\newcommand{\ncurly}[1]{\{ #1 \}}
\renewcommand{\square}[1]{\mleft[ #1 \mright]}

\newcommand{\floor}[1]{\mleft\lfloor #1 \mright\rfloor}
\newcommand{\plainfrac}[2]{#1/#2}

\newcommand{\mc}{\mathcal}
\renewcommand{\P}[1]{\mathbb{P} \square{ #1 } }
\newcommand{\E}[1]{\mathbb{E} \square{ #1 } }
\newcommand{\Var}[1]{\text{Var} \square{ #1 } }
\newcommand{\Cov}[1]{\text{Cov} \p{ #1 } }
\newcommand{\Ind}[1]{\mathbbm{1} \curly{ #1 } }

\newcommand{\pih}{\hat{\pi}}
\newcommand{\residue}{r}
\newcommand{\reserve}{p}
\newcommand{\rh}{\hat{r}}
\newcommand{\ph}{\hat{p}}
\newcommand{\rp}{r'}
\newcommand{\pp}{p'}

\newcommand{\highlight}[1]{{\color{red} #1}}
\newcommand{\hcolor}{\highlight{red} }

\title{Personalized PageRank Estimation in Undirected Graphs}

\author{
  Christian Bertram\thanks{\texttt{chbe@di.ku.dk}, \href{https://orcid.org/0009-0009-7940-1002}{ORCID 0009-0009-7940-1002}}
  \ and
  Mads Vestergaard Jensen\thanks{\texttt{mvje@di.ku.dk}, \href{https://orcid.org/0009-0007-0929-5924}{ORCID 0009-0007-0929-5924}}
}

\affil{BARC, University of Copenhagen\thanks{The authors are part of BARC, Basic Algorithms Research Copenhagen, supported by the VILLUM Foundation grant 54451.}}

\date{}

\begin{document}
\maketitle

\pagenumbering{roman}
\setcounter{page}{1}
\begin{abstract}
Given an undirected graph $G=(V, E)$, the Personalized PageRank (PPR) of $t\in V$ with respect to $s\in V$, denoted $\pi(s,t)$, is the probability that an $\alpha$-discounted random walk starting at $s$ terminates at $t$.
We study the time complexity of estimating $\pi(s,t)$ with constant relative error and constant failure probability, whenever $\pi(s,t)$ is above a given threshold parameter $\delta\in(0,1)$.
We consider common graph-access models and furthermore study the single source, single target, and single node (PageRank centrality) variants of the problem.

We provide a complete characterization of PPR estimation in undirected graphs by giving tight bounds (up to logarithmic factors) for all problems and model variants in both the worst-case and average-case setting.
This includes both new upper and lower bounds.
    
Tight bounds were recently obtained by Bertram, Jensen, Thorup, Wang, and Yan \cite{DirectedPPR} for directed graphs.
However, their lower bound constructions rely on asymmetry and therefore do not carry over to undirected graphs.
At the same time, undirected graphs exhibit additional structure that can be exploited algorithmically.
Our results resolve the undirected case by developing new techniques that capture both aspects, yielding tight bounds.
\end{abstract}

\newpage
\tableofcontents
\newpage
\pagenumbering{arabic}
\setcounter{page}{1}
\section{Introduction}\label{sec:introduction}

Estimation of \emph{Personalized PageRank (PPR)} is a standard tool for ranking vertices in a graph.
Given an undirected graph $G=(V,E)$ with $n$ vertices, the PPR $\pi(s,t)$ of a target $t \in V$ with respect to a source $s \in V$ is defined as the probability that an $\alpha$-discounted random walk starting at $s$ terminates at $t$.
An $\alpha$-discounted random walk is a random walk which in each step terminates with probability $\alpha$, and otherwise moves to a uniformly random neighbor.
Following prior work, we assume $\alpha$ to be a constant~\cite{thorup2025pagerank, STOC-BIPPR}.
If we choose the source $s$ uniformly at random, the probability of terminating at $t$ is $\pi(t)=\sum_s\pi(s,t)/n$, which is referred to as the \emph{PageRank centrality} of $t$.
This global ranking notion is the quantity studied in the seminal work of Page, Brin, Motwani, and Winograd~\cite{page1999pagerank} in the context of ranking web pages.
These PageRank problems have been extensively studied for undirected graphs, with some of the most important results in~\cite{DirectedPPR, UndirectedBiPPR, Lofgren_backpush, BackMC, RBS}.
PageRank also has numerous applications in information retrieval~\cite{topic-sensitive, page1999pagerank}, recommendation systems~\cite{robust-reputations, eigenvalue-trust}, machine learning~\cite{clustering, supervised-learning}, and graph partitioning~\cite{local-partitioning, heat-patitioning}.

In this work, we study the computational complexity of estimating $\pi(s,t)$.
Concretely, for an approximation threshold $\delta\in (0,1]$, we require an estimate of $\pi(s,t)$ that is within a constant relative error whenever $\pi(s,t)>\delta$, with constant failure probability.
If instead $\pi(s,t)\leq \delta$, we only require an additive error of $O(\delta)$. 
We formalize this estimation requirement later in the introduction.
We further consider the following query problems from the litterature:
\begin{itemize}
  \setlength\itemsep{2pt}
    \item \emph{Single pair:} Given a pair $(s,t)\in V^2$, estimate $\pi(s,t)$,
    \item \emph{Single source:} Given a source $s\in V$, estimate $\pi(s,t)$ for each target $t\in V$,
    \item \emph{Single target:} Given a target $t\in V$, estimate $\pi(s,t)$ for each source $s\in V$, and
    \item \emph{Single node:} Given a vertex $t \in V$, estimate $\pi(t)$.
\end{itemize}
We measure the computational complexity in the standard \emph{adjecency-list model}~\cite{DirectedPPR, STOC-BIPPR}, where the algorithm can access the input graph only through an oracle supporting the query $\Deg(v)$ to obtain the degree of $v$ and $\Neigh(v,i)$ to obtain the $i$-th neighbor of $v$, each in constant time.
These queries can only be applied to vertices given in the input or returned by previous queries.
Following the literature, we also consider extensions of this model in which the algorithm additionally has access to any subset of the following queries: 
$\Jump()$ which returns a uniformly random vertex~\cite{thorup2025pagerank, STOC-BIPPR},
$\NeighSorted(v,i)$ which returns the $i$-th neighbor of $v$ when neighbors are ordered by degree in increasing order~\cite{DirectedPPR, RBS}, and $\Adj(u,v)$ which tests whether $uv \in E$~\cite{DirectedPPR}.
In the literature, the adjacency-list model with $\Jump$ is referred to as the \emph{graph access model}~\cite{STOC-BIPPR}.

We also consider \emph{average-case} complexity, in the following sense.
For the single pair problem, this measure the expected running time when \((s,t)\) is drawn uniformly at random from \(V^2\).
For the single source, single target, and single node problems, this measures the expected running time under a uniformly random choice of the designated source or target vertex, respectively.
Note that the graph is still ``worst-case'' and that the algorithm must still be correct (with constant probability) for every source/target.

Recently, all of the above problems and model variants were studied for directed graphs, where tight bounds were established in~\cite{DirectedPPR}.
A main goal of this work is to understand how these bounds change in the undirected setting.
In~\cite{DirectedPPR}, they provide lower bounds that exploit asymmetry, and these do not directly carry over to undirected graphs.
In contrast, undirected graphs satisfy the \emph{reversibility property}~\cite{UndirectedBiPPR}, which relates $\pi(s,t)$ and $\pi(t,s)$, by the identity
\begin{align}
    d(s)\pi(s,t)=d(t)\pi(t,s),\label{eq:reverse-intro}
\end{align}
where $d(s)$ and $d(t)$ are the degrees of $s$ and $t$, respectively.
This gives additional structure in the undirected case and can be exploited algorithmically.
For instance, it excludes a situation where $\pi(s,t)=\Omega(1)$ and $\pi(t,s)=0$, which is possible in directed graphs.

\paragraph{Our result: A complete characterization.}
We give tight bounds, up to polylogarithmic factors, for all problems (single pair, single source, single target, and single node) on undirected graphs, in the adjacency-list model with every combination of additional queries from $\{\Jump, \allowbreak~\NeighSorted, ~\Adj\}$.
An overview of all bounds can be found in~\Cref{tab:results}.
The bounds are stated as functions of the number of vertices $n = \abs{V}$, number of edges $m=\abs{E}$, approximation threshold $\delta$, and average degree $d=m/n$.
It should be noted, that some of the known upper bounds have been parameterized in $d(t)$ in prior work, but in this work, we do not include this parameter when constructing the lower bounds.
Overall, \Cref{tab:results}, gives a complete picture of the computational landscape of (Personalized) PageRank estimation in undirected graphs, and shows how the different query types affect the computational complexity.
This could be important when deciding which queries to support when designing an application programming interface (API) for large graphs.

\begin{table}[!h]
    \centering
    \renewcommand{\arraystretch}{1.4}
    \hspace{-0.3cm}
    \scalebox{1}{
    \begin{adjustbox}{center}
    \begin{tabular}{|c|c|p{0.2cm}|p{0.2cm}|p{0.2cm}|c|c|c|}
        \hline
        \multirow{2}{*}{\textbf{Problem}} & \multirow{2}{*}{\textbf{Case}} & \multicolumn{3}{c|}{\textbf{Model}} & \multirow{2}{*}{\textbf{Ours}} & \multicolumn{2}{c|}{\textbf{Previous best}}\\ \cline{3-5}\cline{7-8}
        & & \textbf{J} & \textbf{S} & \textbf{A} & & \textbf{Lower} & \textbf{Upper}\\ \hline\hline
        \multirow{4}{*}{\makecell{Single\\pair}} & Worst & \cboth & \cboth & \cboth & $\tilde\Theta\np{\min\ncurly{m,(\plainfrac{n}{\delta})^{1/2},\plainfrac{1}{\delta}}}$  & \makecell{$\Omega(\min\{n, 1/\delta\})$\\$\star$}  & \makecell{$\tilde O\np{\min\ncurly{m,(\plainfrac{n}{\delta})^{1/2},\plainfrac{1}{\delta}}}$\\\cite{page1999pagerank, UndirectedBiPPR, MC}} \\\cline{2-8}
        & \multirow{3}{*}{Avg.} & \cboth & \cno & \cboth &  \multirow{2}{*}{$\tilde\Theta\np{\min\ncurly{m, (d/\delta)^{1/2}, \highlight{(1/\delta)^{2/3}+d}, 1/\delta}} $ } & \multirow{2}{*}{---} & \multirow{2}{*}{\makecell{$\tilde O\np{\min\ncurly{m, \np{\plainfrac{d}{\delta}}^{\plainfrac12}, \plainfrac{1}{\delta}}}$\\\cite{page1999pagerank, UndirectedBiPPR, MC}}} \\ \cline{3-5}
        & & \cboth & \cboth & \cno &  & & \\\cline{3-8}
        & & \cboth & \cyes & \cyes & $\tilde\Theta\np{\min\ncurly{m, \np{\plainfrac{d}{\delta}}^{\plainfrac12}, \np{\plainfrac{1}{\delta}}^{\plainfrac23}}}$  & --- & \makecell{$\tilde O\np{\min\ncurly{m, \np{\plainfrac{d}{\delta}}^{\plainfrac12}, \np{\plainfrac{1}{\delta}}^{\plainfrac23}}}$\\\cite{page1999pagerank, UndirectedBiPPR, DirectedPPR}}\\ \hline\hline
        \multirow{2}{*}{\makecell{Single\\source}} & Worst & \multirow{2}{*}{\cboth} & \multirow{2}{*}{\cboth} & \multirow{2}{*}{\cboth} & \multirow{2}{*}{$\tilde\Theta\np{\min\ncurly{m, \plainfrac{1}{\delta}}}$ } & \multirow{2}{*}{\makecell{$\Omega\np{\min\ncurly{n, \plainfrac{1}{\delta}}}$\\$\star$}} & \multirow{2}{*}{\makecell{$\tilde O\np{\min\ncurly{m, \plainfrac{1}{\delta}}}$~\cite{page1999pagerank, MC}}} \\ \cline{2-2}
        & Avg. & & & & & & \\ \hline\hline
        \multirow{6}{*}{\makecell{Single\\target}} & \multirow{3}{*}{Worst} & \cno & \cno & \cboth & $\tilde\Theta\np{\min\ncurly{m, \plainfrac{n}{\delta}}}$  & \multirow{3}{*}{\makecell{$\Omega(n)$\\ $\star$}} & \multirow{3}{*}{\makecell{$\tilde O(\min\ncurly{m,n/\delta})$~\cite{page1999pagerank, Lofgren_backpush}}} \\ \cline{3-6}
        & & \cyes & \cboth & \cboth & \multirow{2}{*}{$\tilde\Theta\np{\min\ncurly{m, \highlight{\plainfrac{n}{\delta^{1/2}}}}}$ } & & \\ \cline{3-5}
        & & \cboth & \cyes & \cboth & & & \\ \cline{2-8}
        & \multirow{3}{*}{Avg.} & \cno & \cno & \cboth & $\tilde\Theta\np{\min\ncurly{m, \plainfrac{d}{\delta}, \highlight{(1/\delta)^2+d}}}$  & \multirow{4}{*}{\makecell{$\Omega\np{\min\ncurly{n, \plainfrac{1}{\delta}}}$\\$\star$}}  & \makecell{$\tilde O\np{\min\ncurly{m, \plainfrac{d}{\delta}}}$~\cite{page1999pagerank, Lofgren_backpush}} \\  \cline{3-6}\cline{8-8}
        & & \cyes & \cno & \cboth & \makecell{$\tilde\Theta\np{\min\{m, \np{\plainfrac{m}{\delta}}^{1/2}, d/\delta,$\hspace{1.3cm}\,\\\,\hspace{1.3cm}$\highlight{(n/\delta)^{2/3}+d}, \highlight{(1/\delta)^2+d}\}}$} & & \makecell{$\tilde O\np{\min\ncurly{m, \np{\plainfrac{m}{\delta}}^{\plainfrac12}, \plainfrac{d}{\delta}}}$\\\cite{page1999pagerank, DirectedPPR, Lofgren_backpush}} \\ \cline{3-6}\cline{8-8}
        & & \cboth & \cyes & \cboth & $\tilde\Theta\np{\min\ncurly{m, \plainfrac{1}{\delta}}}$  & & $\tilde O\np{\min\ncurly{m, \plainfrac{1}{\delta}}}$~\cite{page1999pagerank, RBS} \\ \hline\hline
        \multirow{9}{*}{\makecell{Single\\node}} & \multirow{5}{*}{Worst} & \cno & \cno & \cno & \multirow{4}{*}{$\Theta(m^{1/2})$ } & $\Omega(m^{1/2})$~\cite{BackMC} &  \multirow{5}{*}{$O(m^{1/2})$~\cite{BackMC}} \\ \cline{3-5}\cline{7-7}
        & & \cno & \cboth & \cboth & & \multirow{4}{*}{---} & \\ \cline{3-5}
        & & \cboth & \cno & \cboth & & & \\\cline{3-5}
        & & \cboth & \cboth & \cno & & & \\\cline{3-6}
        & & \cyes & \cyes & \cyes & $\Theta(\highlight{n^{1/2}})$  & & \\\cline{2-8} 
        & \multirow{4}{*}{Avg.} & \cno & \cboth & \cboth & \multirow{3}{*}{$\Theta\np{d}$ } & \multirow{4}{*}{---} & \multirow{4}{*}{$O(d)$~\cite{BackMC}} \\ \cline{3-5}
        & & \cboth & \cno & \cboth & & & \\ \cline{3-5}
        & & \cboth & \cboth & \cno & & & \\ \cline{3-6}
        & & \cyes & \cyes & \cyes & $\Theta(\min\{d, \highlight{n^{1/2}}\}$  & & \\ \hline
    \end{tabular}

    \end{adjustbox}}
    \caption{Overview of results.
    In the Case column, we indicate whether the given bounds are for a worst-case target vertex or averaged over all $n$ possible target vertices.
    In the Model column, circles indicate presence or absence of operations.
    The letters J, S, and A are abbreviations of $\Jump$, $\NeighSorted$, and $\Adj$, respectively.
    A full circle \cyes{} indicates that the operation is present in the model, and an empty circle \cno{} indicates that the operation is absent in the model.
    A half-full circle \cboth{} acts as a wildcard, indicating that the bounds hold both when the operation is present and absent.
    The results marked with $\star$ refer to folklore results and are formally proven in~\cite{DirectedPPR}.
    The terms highlighted in \hcolor indicate the new upper bounds.
    Note that all model combinations are covered, and the bounds are tight (up to polylogarithmic factors) for all possible values of $n$, $m$, and $\delta$, with $d=m/n$.
    }
    \label{tab:results}
\end{table}

\paragraph{Lower bounds.}
We construct tight lower bounds for all problems and model variants.
Before this work, only basic folklore lower bounds existed~\cite{DirectedPPR}, together with a tight lower bound for the single node problem~\cite{BackMC} in the bare adjacency-list model.
Our lower bounds are inspired by the directed lower bound constructions of~\cite{DirectedPPR}, which in turn are inspired by the directed single node construction of~\cite{STOC-BIPPR}.
As their constructions are asymmetric, they do not directly carry over to undirected graphs.
For each Personalized PageRank problem, we give an undirected hard instance with tunable parameters that can be set to match the different model variants.
This construction is more complex than in~\cite{DirectedPPR}, since the undirected setting requires additional parameters.

\paragraph{Upper bounds.}
We provide four main new upper bounds.
Let us briefly present the novel ideas behind each of them.
Some of them build on existing algorithms, which we modify or combine to get tight bounds.
\Cref{tab:algorithms} summarizes prior upper bounds together with our new ones.

\emph{Single node algorithm [\Cref{thm:sn-wc}].}
The previous best single node algorithm is known as \emph{backwards monte carlo}~\cite{BackMC} and runs in $O(m^{1/2})$ time. It works by generating $\alpha$-discounted random walks from $t$ and then using the reversibility property~\labelcref{eq:reverse-intro} to get an estimate for $\pi(t)$.
This is tight in all models except when we have access to all queries $\Jump$, $\NeighSorted$, and $\Adj$, in which case, we improve the running time to $O(n^{1/2})$.
This is the first application of this query combination in undirected graphs.
The main observation is that we can treat high-degree and low-degree neighbors of $t$ (separated efficiently using $\NeighSorted$) differently.
For low-degree neighbors we can efficiently compute their contribution to $\pi(t)$ using backwards monte carlo.
For high-degree neighbors, backwards sampling becomes too costly, and instead we estimate their contribution to $\pi(t)$ by moving to a random vertex using $\Jump$ and walking from there.
The intuition is that high-degree vertices have a higher probability of being hit by a random walk.
To check if we actually hit a high-degree neighbor, we use the $\Adj$ query.

\begin{table}[H]
  \centering
  \renewcommand{\arraystretch}{1.4}
  \begin{tabular}{|l|c|c|c|c|c|c|}
    \hline
    \multirow{2}{*}{\textbf{Method}} & \multicolumn{3}{c|}{\textbf{Model}} & \multirow{2}{*}{Prob.} & \multirow{2}{*}{WC} & \multirow{2}{*}{AC}  \\ \cline{2-4}
                                     & \textbf{J} & \textbf{S} & \textbf{A} & & &\\ \hline\hline
    
    Power Method \cite{page1999pagerank} &\cno &\cno &\cno & SS/ST & $\tilde{O}(m)$ & $\tilde{O}(m)$ \\\hline\hline
    
    Monte Carlo \cite{MC} &\cno &\cno &\cno & SS & $O(1/\delta)$ & $O(1/\delta)$ \\\hline\hline
    
    Backwards Push \cite{Lofgren_backpush} &\cno &\cno &\cno & ST & $O(n/\delta)$ & $O(d/\delta)$ \\\hline
    RBS \cite{RBS} &\cno &\cyes &\cno & ST & $\tilde O(n /\delta)$ & $\tilde O(1/\delta)$  \\\hline
    BiPPR ST \cite{DirectedPPR} &\cyes &\cno &\cno & ST & - & $O((m/\delta)^{1/2})$ \\\hline
    \Cref{thm:bi-st}  &\cyes &\cno &\cno & ST & $O(n/\delta^{1/2})$ & $O((m/\delta)^{1/2})$ \\\hline
    \Cref{thm:st-wc} &\cno &\cyes &\cno & ST & $\tilde O(n/\delta^{1/2})$ & $\tilde O((m/\delta)^{1/2})$ \\\hline
    \Cref{thm:back-improved} &\cno &\cno &\cno & ST & $O(n/\delta)$ & $O((1/\delta)^2+d)$ \\\hline
    \Cref{thm:back-improved} &\cno &\cyes &\cno & ST & $O(n/\delta)$ & $O((1/\delta)^2)$ \\\hline
    \Cref{thm:bi-st-avg} &\cyes &\cno &\cno & ST & - & $O((n/\delta)^{2/3}+d)$ \\\hline\hline

    BiPPR SP \cite{UndirectedBiPPR} &\cno &\cno &\cno & SP & $O((n/\delta)^{1/2})$ & $O((d/\delta)^{1/2})$ \\\hline
    \Cref{thm:back-improved-sp} &\cno &\cno &\cno & SP & - & $O((1/\delta)^{2/3}+d)$ \\\hline
    \Cref{thm:back-improved-sp} &\cno &\cyes &\cyes & SP & - & $O((1/\delta)^{2/3})$ \\\hline
    BJTWY \cite{DirectedPPR} &\cno &\cyes &\cyes & SP & - & $\tilde O((1/\delta)^{2/3})$ \\\hline\hline

    Backwards Monte Carlo \cite{BackMC} &\cno &\cno &\cno & SN & $O(m^{1/2})$ & $O(d)$ \\\hline
    \Cref{thm:sn-wc} &\cyes &\cyes &\cyes & SN & $O(n^{1/2})$ & $O(\min\{d,n^{1/2}\})$ \\\hline

  \end{tabular}
  \caption{
Overview of algorithms for (personalized) PageRank estimation.
The {Prob.} column specifies the problem variant (SS = single source, ST = single target, SP = single pair, SN = single node).
Columns {WC} and {AC} state worst-case and average-case expected time complexity.
For the meaning of the access models {J}, {S}, {A} and the circle notation, see \Cref{tab:results}.
  }
    \label{tab:algorithms}
\end{table}

\emph{Single target worst-case algorithm [\Cref{thm:st-wc}].}
The previous best worst-case algorithm for the single target problem is known as \emph{backwards push}~\cite{Lofgren_backpush} and runs in time $O(n/\delta)$.
It works by placing probability mass $1$ at $t$ and then repeatedly pushing it out into the graph, with each node $s$ maintaining an underestimate $p(u)$ approaching $\pi(u,t)$ as we distribute the probability mass more and more.
This algorithm is optimal if we disallow $\Jump$ and $\NeighSorted$.
When $\NeighSorted$ is allowed, we give an algorithm with running time $O(n/\delta^{1/2})$, using a novel idea of combining deterministic backwards push with the randomized backwards push of~\cite{RBS}, which uses $\NeighSorted$ to push to neighbors with probability inversely proportional to their degrees.
We find that it is more efficient to perform an initial phase of deterministic pushes, followed by a phase of randomized pushes.

\emph{Single target average-case algorithm [\Cref{thm:back-improved}].}
For the average-case single target problem, we provide an algorithm with running time $O(1/\delta^2+d)$.
The novel idea behind the $O(1/\delta^2)$ algorithm is to ignore all walks to $t$ that in the step before $t$ are at a neighbor $u$ with degree at least $\Omega(1/\delta)$.
We show that discarding these walks introduces an additive error of at most $O(\delta)$.
Backwards push~\cite{Lofgren_backpush} then runs in time $O(h_t/\delta + d(t))$ where $h_t$ is the number of neighbors of $t$ with degree at most $O(1/\delta)$.
Globally, $\sum_t h_t\leq O(n/\delta)$ since each node can contribute at most $O(1/\delta)$ to this sum, giving an average running time of $O((1/\delta)^2+d)$.
The $d$ term comes from scanning all neighbors of $t$ to find out which neighbors to ignore.
We note that this algorithm is deterministic, making it suitable for combination with randomized methods in a single pair setting.

\emph{Single pair average-case algorithm [\Cref{thm:back-improved-sp}].}
Here, we provide an $O((1/\delta)^{2/3}+d)$ algorithm in all models.
For this algorithm we use techniques similar to the bidirectional estimators of e.g.~\cite{UndirectedBiPPR}, combining our new hybrid backwards push from $t$ with monte carlo sampling from $s$.
Balancing these two parts gives the stated running time.
We further show how to remove the $d$ term if we allow $\NeighSorted$ and $\Adj$.
In this model our algorithm avoids the logarithmic factors present in the directed bound of~\cite{DirectedPPR}.

\paragraph{Estimation requirements.}
Similarly to~\cite{DirectedPPR}, we say that an algorithm solves the \emph{single pair problem} if for every undirected graph $G=(V,E)$, source $s\in V$, target $t\in V$, and approximation threshold $\delta\in (0,1]$, the algorithm outputs an estimate $\hat \pi(s,t)$ of $\pi(s,t)$ such that
\begin{align}
    \P{|\pih(s,t) - \pi(s,t)| \geq c \max\{\pi(s,t),\delta\}} \leq p_f \label{eq:estimation-requirement}
\end{align}
where $c$ and $p_f$ are small constants.

We say that an algorithm solves the \emph{single source problem} if, for every $G=(V,E)$, every source $s\in V$, and every $\delta\in(0,1]$, it outputs an an estimate $\pih(s,t)$ satisfying \Cref{eq:estimation-requirement} for every target $t\in V$.
The \emph{single target problem} is defined analogously, with the roles of source and target swapped.
Finally, we say that an algorithm solves the \emph{single node problem} if, for every $G=(V,E)$ and node $t\in V$, it outputs an an estimate $\pih(t)$ satisfying $\P{|\pih(t) - \pi(t)| \geq c \pi(t)} \leq p_f$.

\paragraph{Paper orginization.}
In~\Cref{sec:preliminaries} we cover preliminaries. In \Cref{sec:upper} we present upper bounds, and in \Cref{sec:lower} we present lower bounds.

\section{Preliminaries}\label{sec:preliminaries}

We denote the underlying (simple) undirected graph by $G=(V,E)$, where $n=|V|$ and $m=|E|$. 
For a vertex $v\in V$, we let $d(v)$ denote its degree, $\mc N(v)$ denote its set of neighbors, and $\mc N[v] = \mc N(v) \cup\{u\}$ denote the closed neighborhood. 
We write $d=m/n$ for the average degree. 
We use $\delta\in(0,1]$ for the approximation threshold of our estimation requirements, and we use $\tilde{O}(\cdot)$ notation to hide polylogarithmic factors in $n$ and $\delta$.
For any integer $k$, we denote by $[k]$ the set of integers $\{1,\dots,k\}$.

\subsection{Pagerank and PPR}

An $\alpha$-discounted walk starts at a vertex and, at each step, terminates with probability $\alpha$ and otherwise moves to a uniformly random neighbor.
The PageRank score of a vertex $v$ in an undirected graph $G$, denoted $\pi_G(v)$, is the probability that an $\alpha$-discounted random walk starting at a uniform random vertex ends in $v$.
The personalized PageRank (PPR) score of a vertex $v$ with respect to $u$ in $G$, denoted $\pi_G(u,v)$, is the probability that an $\alpha$-discounted random walk starting at $u$ ends in $v$.
When $G$ is clear from context, we drop the subscript and write $\pi(u,v)$ (or $\pi(v)$) instead of $\pi_G(u,v)$ (or $\pi_G(v)$).
Observe that $\pi(v) = \frac{1}{n}\sum_{u\in V} \pi(u,v)$.
By definition the following equations holds for each $v\in V$:
\begin{align}
    \pi(v) &= \sum_{u\in \mc N(v)} \frac{(1-\alpha)\pi(u)}{d(u)} + \frac{\alpha}{n},\label{eq:rec-node}\\
    \pi(u,v) &= \sum_{w\in \mc N(v)} \frac{(1-\alpha)\pi(u,w)}{d(w)} + \alpha \Ind{u=v}.\label{eq:rec-target}
\end{align}
Since we are considering an undirected graph $G$, the following reversibility property holds \cite{UndirectedBiPPR}.
\begin{lemma}\label{lem:reverse}
    For any vertices $u,v\in V$, we have $\pi(u,v)d(u) = \pi(v,u)d(v)$.
\end{lemma}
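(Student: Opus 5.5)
We will prove the reversibility identity by expressing $\pi(u,v)$ as a sum over walks and pairing each walk with its reversal. For an $\alpha$-discounted walk from $u$, the probability that it follows a specific path $u=w_0,w_1,\dots,w_k=v$ and then terminates at $v$ is
\begin{align*}
\alpha(1-\alpha)^k\prod_{i=0}^{k-1}\frac{1}{d(w_i)}.
\end{align*}
This holds because each of the $k$ steps continues with probability $1-\alpha$ and picks $w_{i+1}$ with probability $1/d(w_i)$, and the walk then stops with probability $\alpha$. Summing over all $k\ge 0$ and all such paths gives $\pi(u,v)$. Equivalently, with $P=D^{-1}A$ the transition matrix, we have $\pi(u,v)=\alpha\sum_{k\ge0}(1-\alpha)^k (P^k)_{uv}$, and the series converges since $P$ is stochastic and $\alpha>0$.

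Next, we multiply by $d(u)=d(w_0)$. This cancels the first factor of the product, so the term for the path becomes
\begin{align*}
\alpha(1-\alpha)^k\prod_{i=1}^{k-1}\frac{1}{d(w_i)}.
\end{align*}
This expression depends only on the interior vertices of the path and is symmetric under reversal. Since $G$ is undirected, reversal $w_k,\dots,w_0$ is a bijection between length-$k$ walks from $u$ to $v$ and those from $v$ to $u$. The reversed walk's term, multiplied by $d(v)$, gives exactly the same quantity. Summing over all paths and all $k$ then yields $d(u)\pi(u,v)=d(v)\pi(v,u)$.

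In matrix form the argument reads as follows. We have $DP=A$, which is symmetric, and hence $DP^k=A(D^{-1}A)^{k-1}$ is symmetric as well. The only care needed is for $k=0$, where both sides equal $\alpha d(u)\Ind{u=v}$, and for isolated vertices, where the walk simply stops and the identity holds trivially. Neither step poses a real obstacle; the only point to check is the bookkeeping of the degree factors, which the path expansion handles cleanly.
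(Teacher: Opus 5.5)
Your proof is correct and follows essentially the same route as the paper: you expand $\pi(u,v)$ as a sum over walks, note that multiplying each walk's probability by the degree of its starting vertex gives a weight invariant under reversal, and sum using the reversal bijection between walks from $u$ to $v$ and walks from $v$ to $u$. The matrix form (symmetry of $A(D^{-1}A)^{k-1}$) and your separate treatment of the $k=0$ term and isolated vertices restate the same argument compactly.
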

\begin{proof}
    We denote by $\mc P_{u,v}$ the set of all paths in $G$ from $u$ to $v$.
    For a path $p=v_1,v_2,\dots,v_k$ we write $\bar p = v_k,v_{k-1},\dots,v_1$ for the reverse path.
    We let $\pi(p)$ denote the probability that an $\alpha$-discounted random walk starting at $v_1$ follows the path $p$ and terminates at $v_k$.
    Then $\pi(p)=\frac{1}{d(v_1)},\dots \frac{1}{d(v_{k-1})} (1-\alpha)^{k-1}\alpha$, and we have $\pi(p)d(v_1)=\pi(\bar p) d(v_{k})$.
    From this, it follows that
    \begin{align*}
        \pi(u,t)d(u)
        = \sum_{p\in\mc P_{u,v}} \pi(p)d(u) 
        = \sum_{p\in \mc P_{u,v}} \pi(\bar p)d(v) 
         = \sum_{p\in \mc P_{v,u}} \pi(p)d(v)
         = \pi(v,u)d(v).
    \end{align*}
\end{proof}

\subsection{Monte Carlo Sampling}

Arguably, the simplest way of solving the single source problem is by generating multiple independent $\alpha$-discounted random walks starting at the source vertex, and estimate $\pi(s,u)$ for each $u\in V$ by the fraction of walks that end at $u$.
This is known as Monte Carlo sampling~\cite{MC}.
A small relative error is required only when $\pi(s,u)>\delta$, so standard Chernoff bounds imply that $O(1/\delta)$ walks are sufficient to achieve a constant relative error with constant success probability.
Since the expected length of an $\alpha$-discounted walk is $1/\alpha = O(1)$, this method solves the single source problem in time $O(1/\delta)$. 
In \Cref{sec:app-pseudocode}, we give an algorithm for sampling an $\alpha$-discounted walk from a vertex $u$, $\RW(G,u,\alpha)$, in expected time $O(1/\alpha)$.

In a similar way, we can also solve the single target problem by generating multiple independent $\alpha$-discounted random walks starting at the target vertex (known as \textit{backwards Monte Carlo sampling}) \cite{BackMC}.
This gives estimates $\pih(t,u)$ of $\pi(t, u)$ for every $u\in V$. Using the \textit{reversability property} stated in \Cref{lem:reverse}, this also gives estimates $\pih(u,t) = \pih(t,u) d(u)/d(t)$ of $\pi(u,t)$. 
A small relative error is required only when $\pi(u,t)>\delta$, or equivalently when $\pi(t,u)> \delta d(u)/d(t) = \delta'$.
Standard Chernoff bounds imply that $O(1/\delta')=O(d(t)/\delta)$ walks are sufficient to achieve a constant relative error with constant success probability. 
The worst-case complexity is $O(d(t)/\delta)=O(n/\delta)$, and averaged over all targets $t$, the complexity is $O(d/\delta)$.

\subsection{Local Push}\label{sec:local-push}

Another paradigm for computing PageRank starts with all probability mass at the target and then uses \Cref{eq:rec-target} to \emph{push}, or distribute, probability mass to neighboring vertices.
This algorithm is known as \texttt{BackwardsPush} \cite{Lofgren_backpush}, and the pseudocode is given in \Cref{alg:bacwards-push}.

\texttt{BacwardsPush} maintains two values, $p(u)$ and $r(u)$, for each vertex $u$.
The \emph{reserve} $p(u)$ serves as an underestimate of $\pi(u,t)$, and will approach $\pi(u,t)$ as we keep pushing probability mass around in the graph.
The \emph{residual} $r(u)$ is the remaining probability that currently sits at $u$ and still needs to be propagated according to \Cref{eq:rec-target}. 
When pushing at a vertex $v$, we increment the residual $r(u)$ of each neighbor $u$ by $(1-\alpha)r(v)/d(u)$, increment the reserve $p(v)$ by $\alpha r(v)$ and set $r(v)=0$.
This procedure maintains, for each $u \in V$, the invariant
\begin{align}
    \pi(u,t)= p(u)+\sum_{w\in V} r(w) \pi(u,w). \label{eq:backpush-invariant}
\end{align}
\begin{algorithm}[H]
    \caption{$\BackPush(v)$ }
    \label{alg:bacwards-push}
    \textbf{Input:} graph $G$, target vertex $t$, residual threshold $r_{\max}$, decay factor $\alpha$ \\
    \textbf{Output:} reserves $p()$ and residuals $r()$
    \begin{algorithmic}[1]
    \State $r(), p()\gets$ \text{empty dictionary with default value $0$}
    \State $r(t) \gets 1$
    \While{ exists $v$ with $r(v) > r_{\max}$}
        \State $\reserve(v)\gets \reserve(v) + \alpha r(v)$
        \For{$i$ from $1$ to $\Deg(v)$}
            \State $u \gets \Neigh(v,i)$
            \State $\residue(u) \gets \residue(u) + (1-\alpha)r(v) / \Deg(u)$
        \EndFor
        \State $\residue(v)\gets 0$
    \EndWhile
    \State \Return $\residue()$ and $\reserve()$
    \end{algorithmic}
\end{algorithm}

The algorithm continues pushing until $r(u)\leq r_{\max}$ for every vertex $u$, where $r_{\max}\in[0,1)$ is a fixed threshold. Upon termination, each reserve $p(u)$ will be within an additive error $r_{\max}$ of $\pi(u,t)$, since
$\pi(u,t)\geq p(u) = \pi(u,t)-\sum_{w\in V} r(w) \pi(u,w) \geq \pi(u,t)-r_{\max}\sum_{w\in V} \pi(u,w) = \pi(u,t)-r_{\max}$.
Hence, setting $r_{\max}\leq c\delta=O(\delta)$, \texttt{BackwardsPush} solves the single target problem.
In \cite{Lofgren_backpush} it is shown that the running time is bounded by $O(\sum_{u\in V}\frac{\pi(u,t) d(u)}{r_{\max}})$.
Using the reversibility property, \Cref{lem:reverse}, we can rewrite this as follows
\begin{align*}
    O\p{\sum_{u\in V}\frac{\pi(u,t) d(u)}{r_{\max}}} = 
    O\p{\frac{d(t)}{r_{\max}}\sum_{u\in V} \pi(t,u)} = 
    O\p{\frac{d(t)}{r_{\max}}}.
\end{align*}
Averaged over all targets $t$, the running time is $O(d/r_{\max})$.

\subsection{Power Method}

The local pushing algorithm $\BackPush$~\cite{Lofgren_backpush}, introduced above, works by pushing only from vertices $v$ whose residual exceeds the threshold $r_{\max}$.
If we instead run $\BackPush$ in a global manner, where we push at every vertex synchronously over $L$ rounds, we get the standard \emph{power method} for PageRank~\cite{page1999pagerank}.
Each round takes $O(m)$ time, since we push across every edge twice.

Let $r_i(v)$ be the residual at vertex $u$ after $i$ rounds. 
In one round, the residual value is scaled by a factor of $(1-\alpha)$, implying that $r_i(u)\leq (1-\alpha)^i$ for every $u\in V$.
Hence, after $L=\log_{1-\alpha} c\delta =O(\log 1/\delta)$ rounds, we have $r_L(u)\leq c\delta$ for every $u\in V$. 
By the same argument as in \Cref{sec:local-push}, this suffices to solve the single target problem.
The total running time is $O(mL)=\tilde O(m)$.

Similarly, one can run a forward variant that starts with all probability mass at the source and applies the same synchronous updates, thereby solving the single source problem in time $\tilde O(m)$.

\subsection{Randomized local push}\label{sec:rbs}
In \cite{RBS}, they give a randomized version of $\BackPush$ known as $\RBS$, utilizing the 
$\allowbreak \NeighSorted$ query.
The pseudocode is given in \Cref{alg:rbs}.
The algorithm runs in $L$ rounds, which is normally set to $\Theta(\log 1/\delta)$, and in each round, we push at each $v$ with $r_i(v)>0$.
The pushing is done deterministically (as in $\BackPush$) to a neighbor $u$ if the increment $\Delta_{i+1}(u,v)=(1-\alpha)r_i(u)/d(u)$ to $r_i(u)$ exceeds a predefined threshold $\theta\in(0,1)$.
Otherwise, it increases $r_i(u)$ by $\theta$ with probability $\Delta_{i+1}(u,v)$, and thus in expectation increases $r_i(u)$ with $\Delta_{i+1}(u,v)$.

Making a random decision for each $u \in \mc N(v)$ takes $\Omega(d(u))$ time.
Instead, since the increment to $r_i(u)$ is inversely proportional to $d(u)$, we can sample a random threshold $\tau$ uniformly from $[0, \theta]$ and use the $\NeighSorted$ query to increment each $r_i(u)$ by $\max\{\Delta_{i+1}(u,v), \theta\}$ if $\Delta_{i+1}(u,v) \geq \tau$. 
This still results in an expected increase of $\Delta_{i+1}(u,v)$.
Using this approach, it can be shown that the expected time complexity is at most $\tilde O(n\pi(t)/\theta)$ and setting $\theta=O(\delta)$ solves the single target problem in expected time $\tilde O(n\pi(t)/\delta)$.

\begin{algorithm}[H]\label{alg:rbs}
    \caption{$\RBS(v)$ }
    \textbf{Input:} reserves $\pp()$ and residuals $\rp()$ \\
    \textbf{Output:} 
    \begin{algorithmic}[1]
    \State $\hat r_i(), \hat p_i()\gets$ \text{empty dictionary with default value $0$, for each $i\in\{0,\dots,L\}$.}    
    \For{$i=0,1,2,\dots,L-1$}
        \For{each $v\in V$ with $r_i(v)>0$}
            \For{each $u\in \mc N(v)$}
            \State $\Delta_{i+1}(u,v)\gets \frac{(1-\alpha)\rh_i(v)}{\Deg(u)}$.
                \If{$\Delta_{i+1}(u,v)\geq \theta$}
                    \State $\rh_{i+1}(u)\gets \rh_{i+1}(u)+\Delta_{i+1}(u,v)$.
                \Else
                    \State $\rh_{i+1}(u)\gets \rh_{i+1}(u)+ \theta$ with probability $\Delta_{i+1}(u,v)/\theta$.
                \EndIf
            \EndFor
            \State $\ph_i(v)\gets \alpha \rh_i(v)$.
        \EndFor
    \EndFor
    
    \State \Return $\ph(u)=\sum_i \ph_i(u)$ for each $u$.
    \end{algorithmic}
\end{algorithm}

\section{Upper bounds}\label{sec:upper}

In this section, we present our new upper bounds, namely the results highlighted in \hcolor in \Cref{tab:results}.
In \Cref{sec:sn} we present an algorithm for the single node problem using $\NeighSorted, \Jump$, and $\Adj$, with time complexity $O(\min\{d(t),n^{1/2}\})$ and average time complexity $O(\min\{d,n^{1/2}\})$. 
In \Cref{sec:st-wc} we present an algorithm for the single target problem using $\NeighSorted$, with time complexity $O((d(t)n/\delta)^{1/2})$.
In \Cref{sec:st-avg} we present an algorithm for the single target problem with average time complexity $O((1/\delta)^2+d)$.
In \Cref{sec:sp} we present an algorithm for the single pair problem, with average time complexity $O((1/\delta)^{2/3}+d)$, which can be improved to $O((1/\delta)^{2/3})$ using $\NeighSorted$ and $\Adj$.
In \Cref{sec:up-jump}, we present the remaining two upper bounds, namely two algorithms for the single target problem with access to $\Jump$, one with time complexity $O((d(t)n/\delta)^{1/2})$ and the other with average time complexity $O((n/\delta)^{2/3}+d)$.

Combining our new upper bounds with the existing ones gives all upper bounds stated in \Cref{tab:results}.
For example, combining the $\tilde O(m)$ time complexity achieved by the $\pw$, the $O(d/\delta)$ time complexity achieved by $\BackPush$, and the $O((1/\delta)^2+d)$ time complexity from \Cref{sec:st-avg}, we get obtain an average time complexity of $\tilde O(\min\{m, d/\delta, (1/\delta)+d\})$.

\subsection{Single node}\label{sec:sn}

In this section, we present an algorithm solving the single node problem that achieves the following upper bound in the adjacency-list model with queries $\Jump,~ \NeighSorted$, and $\Adj$.
Note that the $d(t)$ of this upper bound does not appear in~\Cref{tab:results}, since our lower bounds do not parameterize $d(t)$.

\begin{theorem}\label{thm:sn-wc}
    Consider the adjacency-list model with $\Jump$, $\NeighSorted$, and $\Adj$.
    There exists an algorithm solving the single node problem with target $t$ in expected time $O(\min\{d(t), n^{1/2}\}=O(n^{1/2})$ and, when averaging over all targets $t$, expected time $O(\min\{d, n^{1/2}\})$.
\end{theorem}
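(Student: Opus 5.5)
Throughout, I aim for estimators whose samples are bounded in $[0,M]$ with mean at least $\mu$, so that $O(M/\mu)$ independent samples give constant relative error with constant probability (Chebyshev, using $\Var{Z}\le M\,\E{Z}$). The anchor is that $\pi(t)\ge \alpha/n$ always, and likewise $\pi(u)\ge\alpha/n$ for every $u$.

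\emph{The $O(d(t))$ bound.} This is backwards Monte Carlo. By \Cref{lem:reverse},
\[
\pi(t)=\frac1n\sum_s\pi(t,s)\frac{d(t)}{d(s)}=\frac{d(t)}{n}\,\mathbb{E}_X\square{\frac{1}{d(X)}},
\]
where $X$ is the endpoint of an $\alpha$-discounted walk from $t$. Each sample $d(t)/(n\,d(X))$ lies in $[0,d(t)/n]$, and the mean is at least $\alpha/n$. Hence $O(d(t))$ walks suffice, each of expected length $O(1)$.

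\emph{The $O(n^{1/2})$ bound.} Set $\tau=n^{1/2}$. Use \Cref{eq:rec-node} to write
\[
\pi(t)=\frac{\alpha}{n}+(1-\alpha)\p{A_L+A_H},\qquad A_L=\sum_{u\in\mc N(t),\,d(u)\le\tau}\frac{\pi(u)}{d(u)},\qquad A_H=\sum_{u\in\mc N(t),\,d(u)>\tau}\frac{\pi(u)}{d(u)}.
\]
Since neighbors are sorted by degree, I locate the boundary index $\ell=|\{u\in\mc N(t):d(u)\le\tau\}|$ with $\NeighSorted$ by binary search. I would then try to remove the resulting logarithmic factor, for instance by galloping from the low end, or else accept it.

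\emph{Estimating $A_H$.} Draw a start with $\Jump()$, run an $\alpha$-discounted walk to its endpoint $X$, and output $\Ind{\Adj(X,t)\wedge d(X)>\tau}/d(X)$. This is an unbiased estimator of $A_H$, since $\pi(u)=\P{X=u}$, and it is bounded by $M=1/\tau$. Because we only need error relative to $\pi(t)\ge\alpha/n$, $O(n/\tau)=O(n^{1/2})$ samples suffice.

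\emph{Estimating $A_L$.} Pick $u$ uniformly among the $\ell$ low neighbors, run a walk from $u$ ending at $Y$, and output $\ell/(n\,d(Y))$. By reversibility, $\pi(u)/d(u)=\frac1n\sum_s\pi(u,s)/d(s)$, so this is unbiased for $A_L$, with $M=\ell/n$. For the lower bound on the mean, each low $u$ has $\pi(u)/d(u)\ge\alpha/(n\tau)$, so
\[
A_L\ge\frac{\ell\alpha}{n\tau},\qquad\text{and also}\qquad \pi(t)\ge\frac{\alpha}{n}.
\]
The required number of samples is therefore $O(M/\max\{\pi(t),A_L\})=O(\min\{\ell,\tau\})=O(n^{1/2})$.

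Adding the two estimates with their weights, and applying a union bound over the two failure events, gives the estimate of $\pi(t)$. The final algorithm runs the first method if $d(t)\le n^{1/2}$ (it can read $\Deg(t)$) and the split method otherwise, for expected time $O(\min\{d(t),n^{1/2}\})$.

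\emph{Average case.} Averaging over $t$ gives $\frac1n\sum_t\min\{d(t),n^{1/2}\}\le\min\{\frac1n\sum_t d(t),\,n^{1/2}\}=O(\min\{d,n^{1/2}\})$.

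\emph{Main obstacle.} The delicate step is the variance analysis of the low part. The naive bound $M/\mu=\ell$ can be as large as $n$, and it is the lower bound $A_L\ge \ell\alpha/(n\tau)$, which holds only because every $\pi(u)\ge\alpha/n$, that caps the sample count at $\tau$. This bound on $A_L$ is what forces the choice $\tau=n^{1/2}$, which balances it against the $n/\tau$ cost of the $\Jump$ side. A secondary issue is sampling a uniform low neighbor without a $\log n$ overhead from the binary search.
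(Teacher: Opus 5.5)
Your proposal is correct and follows essentially the paper's proof. Both use the same ingredients:
\begin{itemize}
\item a split of $\mc N(t)$ at degree $\tau$, located by binary search with $\NeighSorted$;
\item reversibility-based walks from a uniform low-degree neighbor for the low part;
\item $\Jump$, then a walk, then an $\Adj$-plus-degree check for the high part;
\item the lower bound $\pi(t)\ge\max\{\ell\alpha(1-\alpha)/(n\tau),\alpha/n\}$, which caps the low-side sample count at $O(\min\{\ell,\tau\})$;
\item a choice of $\tau$ according to whether $d(t)\le n^{1/2}$.
\end{itemize}
Your plain backwards Monte Carlo from $t$ in the small-degree case is equivalent to the paper's choice $\tau=n$. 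Your logarithmic-overhead concern is moot: the binary search runs once, each later low-neighbor sample is a single $\NeighSorted$ call, and $O(\log d(t))=O(n^{1/2})$ in the branch where the search is used.
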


\paragraph{Overview.}

By \Cref{eq:rec-node}, we can estimate $\pi(t)$ by estimating $\pi(u)$ for each neighbor $u$.
The algorithm treats low- and high-degree neighbors of $t$ differently.
Specifically, we fix a threshold $1\leq \tau\leq n$ and partition the neighbors into two sets $X_L$ containing all neighbors with degree at most $\tau$ and $X_H$ containing the remaining neighbors (degree larger than $\tau$).
The algorithm estimates the contribution of $X_L$ to $\pi(t)$ denoted $\pi_L(t)=\sum_{u\in X_L} {(1-\alpha)\pi(u)}/{d(u)}$, the contribution of $X_H$ to $\pi(t)$, denoted $\pi_H(t)=\sum_{u\in X_H} {(1-\alpha)\pi(u)}/{d(u)}$, and then combines this into an estimate $\hat\pi(t) = \pi_L(t)+\pi_H(t)+\frac{\alpha}{n}$ of $\pi(t)$.

To estimate $\pi_L(t)$, we first use $\NeighSorted$ to identify the set $X_L$ without scanning all $d(t)$ neighbors.
Because neighbors are sorted by degree, we can find $X_L$ in time $O(\log d(t))$.
We then generate $\alpha$-discounted random walks, each starting from a random vertex in $X_L$, and use the resulting samples to estimate $\pi_L(t)$ using the reversibility property of~\Cref{eq:reverse-intro}.
Using the fact that the degree of each vertex in $X_L$ is at most $\tau$, we show that it suffices to generate $O(\min\{d(t), \tau\})$ random walks to estimate $\pi_L(t)$ within a constant relative error.
Without any bound on the degree, we would get the same time complexity as in \cite{BackMC} of $O(\min\{d(t), m^{1/2}\})$.

To estimate $\pi_H(t)$ it becomes expensive to generate enough random walks from $X_H$, so we instead switch to forward sampling, which is more efficient as high-degree vertices have a higher probability of getting hit by a random walk starting from a uniformly random source.
Specifically, the algorithm repeatedly uses $\Jump$ to pick a uniform source vertex, generates an $\alpha$-discounted random walk, and takes its endpoint $x$. 
We then check whether $x\in X_H$ by using the $\Adj$ query and checking that $d(x)>\tau$.
If $x\in X_H$, the sample is included in the estimate of $\pi_H(t)$.
Since all vertices in $X_H$ have degree larger than $\tau$, we can show that $O(n/\tau)$ samples suffice to obtain a constant relative error of $\pi_H(t)$. 

The total running time is $O(\min\{d(t),\tau\} + \log d(t) + n/\tau)$.
Then, if $d(t) \leq n^{1/2}$, we set $\tau=n$, giving a running time of $O(d(t))=O(\min\{d(t), n^{1/2}\})$ and if $d(t)>n^{1/2}$, we set $\tau=n^{1/2}$, giving a running time of $O(n^{1/2}) = O(\min\{d(t), n^{1/2}\})$.

We will now show this formally.

\paragraph{Algortihm.}
The pseudocode is given in \Cref{alg:single-node}.
Initially, we set the estimate to $\alpha/n$, and identify the size of $X_L$.
Then we sample $w_L$ $\alpha$-discounted random walks from a random neighbor of $X_L$.
If a walk terminates at $u$, we increment $\pih(t)$ with $\frac{(1-\alpha)|X_L|}{n w_L d(u)}$, a quantity justified below.
Afterwards, we generate $w_H$ $\alpha$-discounted random walks, jumping to a uniformly random source vertex. 
If a walk terminates at $x\in X_H$, then we increment $\pih(t)$ with $\frac{1-\alpha}{w_H d(x)}$.
Note that line~\ref{alg-line:np-rw-4} corresponds to checking whether $x\in X_H$.

\begin{algorithm}[H]
\caption{$\SingleNode(G,t,\alpha,w_L,w_H,\tau)$ }
\label{alg:single-node}
\textbf{Input:} Undirected graph $G=(V,E)$, target $t\in V$, stopping probability $\alpha$, number of random walks $w_L$ and $w_H$, degree threshold $\tau$.\\
    \textbf{Output:} $\hat \pi(t)$ as an estimate of $\pi(t)$.
\begin{algorithmic}[1]
    \State $\pih(t) \leftarrow \alpha/n$.
    \State $|X_L|\leftarrow \max\{j\in [\Deg(t)] \mid \Deg(\Neigh(j))\leq \tau\}$ /\!/ {\color{gray} Binary search using $\NeighSorted$.}
    \For{$i\in [w_L]$}
        \State $x \gets \Neigh(\texttt{rand}(|X_L|))$ {\color{gray} /\!/ Random neighbor of $X_L$.}\label{alg-line:np-rw-1}
        \State $u \gets  \RW(G, x, \alpha)$ {\color{gray} /\!/ See \Cref{sec:app-pseudocode}}.\label{alg-line:np-rw-2}
        \State $\pih(t) \gets \pih(t) + \frac{1-\alpha}{n\cdot w_L}\cdot\frac{|X_L|}{\Deg(u)} $
    \EndFor
    \For{$i\in [w_H]$}
        \State $u \gets \Jump()$
        \State $x \leftarrow  \RW(G, u, \alpha)$ \label{alg-line:np-rw-3}
        \If{$x\in N(t) ~\land~ \Deg(x)> \tau$} \label{alg-line:np-rw-4}
            \State $\pih(t) \gets \pih(t) + \frac{1-\alpha}{w_H \cdot \Deg(x)}$
        \EndIf
    \EndFor
    \State \textbf{return} $p()$
\end{algorithmic}
\end{algorithm}

\paragraph{Analysis.}
This section presents the proof of \Cref{thm:sn-wc}.

Let $u_1,\dots,u_{w_L}$ and $x_1,\dots,x_{w_L}$ be the result of $\RW$ on line~\ref{alg-line:np-rw-2} and line~\ref{alg-line:np-rw-3} respectively of \Cref{alg:single-node}. 
Define the indicator variable $\chi_{i,u} = \Ind{u_i = u}$ for each $u\in V$ and $i\in [w_L]$, and the indicator variable $\phi_{i,x} = \Ind{x_i = x}$ for each $x\in X_H$ and $i\in [w_H]$.
It is easy to see that we can express $\pih(t)$ as a sum of these indicator variables
\begin{align*}
    \pih(t) 
    = \frac{(1-\alpha) \cdot |X_L|}{n \cdot w_L}\sum_{i\in [w_L]} \sum_{u\in V} \frac{\chi_{i,u}}{d(u)}
    + \frac{1-\alpha}{w_H}\sum_{i\in [w_H]} \sum_{x\in X_H} \frac{\phi_{i,x}}{d(x)}
    + \frac{\alpha}{n}.
\end{align*}
We will start by showing that $\pih(t)$ is an unbiased estimator.

\begin{lemma}\label{lem:sn-unbiased}
    $\E{\pih(t)}=\pi(t)$.
\end{lemma}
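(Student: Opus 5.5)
By linearity of expectation, it suffices to compute $\E{\chi_{i,u}}$ and $\E{\phi_{i,x}}$. We then show that the three terms of $\pih(t)$ have expectations $\pi_L(t)$, $\pi_H(t)$ and $\alpha/n$. By \Cref{eq:rec-node}, their sum is $\pi(t)$.

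\emph{High-degree part.} Each $x_i$ is the endpoint of an $\alpha$-discounted walk started at a uniformly random vertex returned by $\Jump$. By definition of PageRank, $\E{\phi_{i,x}}=\P{x_i=x}=\pi(x)$. Hence
\[
\E{\frac{1-\alpha}{w_H}\sum_{i\in[w_H]}\sum_{x\in X_H}\frac{\phi_{i,x}}{d(x)}}=\sum_{x\in X_H}\frac{(1-\alpha)\pi(x)}{d(x)}=\pi_H(t).
\]

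\emph{Low-degree part.} Here $x$ is uniform in $X_L$, so $\E{\chi_{i,u}}=\frac{1}{|X_L|}\sum_{v\in X_L}\pi(v,u)$. The expectation of the first term is therefore
\[
\frac{1-\alpha}{n}\sum_{v\in X_L}\sum_{u\in V}\frac{\pi(v,u)}{d(u)}.
\]
Applying \Cref{lem:reverse}, we have $\pi(v,u)/d(u)=\pi(u,v)/d(v)$. The inner sum becomes $\frac{1}{d(v)}\sum_{u}\pi(u,v)=\frac{n\pi(v)}{d(v)}$, using $\pi(v)=\frac1n\sum_u\pi(u,v)$. Thus the term equals $\sum_{v\in X_L}(1-\alpha)\pi(v)/d(v)=\pi_L(t)$.

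\emph{Main obstacle.} The only nontrivial step is this reversibility swap. It converts a sum over walk endpoints into the PageRank of the starting vertices, which is exactly what justifies the weight $|X_L|/(n w_L d(u))$.

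\emph{Edge cases and conclusion.} Since $X_L\cup X_H=\mc N(t)$ is a disjoint union, $\pi_L(t)+\pi_H(t)+\alpha/n=\pi(t)$. The case $X_L=\emptyset$ is trivial, since we then take no low-degree walks and the term is $0=\pi_L(t)$.
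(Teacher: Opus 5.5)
Your proof is correct and follows essentially the same route as the paper. You compute $\E{\chi_{i,u}}$ and $\E{\phi_{i,x}}$, apply linearity of expectation, use \Cref{lem:reverse} to turn the low-degree endpoint sum into $\sum_{v\in X_L}(1-\alpha)\pi(v)/d(v)$, and finish with \Cref{eq:rec-node}; applying reversibility after summing rather than inside $\E{\chi_{i,u}}$, and writing out the step $\sum_u \pi(u,v)=n\pi(v)$ that the paper leaves implicit, are only cosmetic differences.
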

\begin{proof}
    In each of the $w_L$ samples, the algorithm first picks a start vertex $x$ uniformly at random from $X_L$, and then samples an $\alpha$-discounted walk from $x$.
    Therefore, for any $u \in V$
    \begin{align*}
        \E{\chi_{i,u}} 
        = \P{u_i = u} 
        = \frac{1}{|X_L|}\sum_{x\in X_L} \pi(x,u) 
        = \frac{1}{|X_L|}\sum_{x\in X_L} \pi(u,x) \frac{d(u)}{d(x)}.
    \end{align*}
    The last equality follows from \Cref{lem:reverse}.
    In each of the $w_H$ samples, the algorithm generates a random walk from a uniformly random start vertex. 
    For any $x\in X_H$, this sampled random walk terminates at $x$ with probability $\pi(x)$, and hence $\E{\phi_{i,x}}=\pi(x)$.

    Using the above, it follows by linearity of expectation that,
    \begin{align*}
        \E{\pih(t)} 
        &= \frac{1-\alpha}{n} \sum_{u\in V}\sum_{x\in X_L} \frac{\pi(u,x)}{d(x)}
        + (1-\alpha)\sum_{x\in X_H} \frac{\pi(x)}{d(x)}
        + \frac{\alpha}{n}\\
        &= \sum_{x\in \mc N(y)} \frac{(1-\alpha)\pi(x)}{d(x)}  + \frac{\alpha}{n}
    \end{align*}
    Then using \Cref{eq:rec-node} it follows that $\E{\pih(t)}=\pi(t)$.
\end{proof}

Next, the following lemma upper bounds the variance of the estimator $\pih(t)$ returned by the algorithm.
\begin{lemma}\label{lem:sn-var}
    $\Var{\pih(t)}\leq \p{\frac{|X_L|}{n w_L} 
        + \frac{1}{w_H \tau}}(1-\alpha) \pi(t)$
\end{lemma}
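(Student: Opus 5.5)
The estimator $\pih(t)$ is a constant $\alpha/n$ plus two independent sums, one over the $w_L$ backward samples and one over the $w_H$ forward samples, and within each sum the samples are i.i.d. So the plan is to write
\[
\Var{\pih(t)} = \frac{(1-\alpha)^2|X_L|^2}{n^2 w_L}\Var{Y_L} + \frac{(1-\alpha)^2}{w_H}\Var{Y_H},
\]
where $Y_L = 1/d(u_1)$ is the per-sample quantity of the first loop and $Y_H = \Ind{x_1\in X_H}/d(x_1)$ is that of the second. I then bound each variance by its second moment, $\Var{Y}\le\E{Y^2}$, and compare against the per-sample means. These means were computed in the proof of \Cref{lem:sn-unbiased}: $\E{Y_L} = \frac{1}{|X_L|}\sum_{x\in X_L}\sum_u \pi(u,x)/d(x) = \frac{n}{|X_L|}\sum_{x\in X_L}\pi(x)/d(x)$, and $\E{Y_H}=\sum_{x\in X_H}\pi(x)/d(x)$.

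For the high-degree part, $Y_H\in\{0\}\cup\{1/d(x):x\in X_H\}$ and every $x\in X_H$ has $d(x)>\tau$. Hence $Y_H^2\le Y_H/\tau$, so $\E{Y_H^2}\le \frac{1}{\tau}\sum_{x\in X_H}\pi(x)/d(x) = \frac{\pi_H(t)}{\tau(1-\alpha)}$. This term therefore contributes at most $\frac{(1-\alpha)\pi_H(t)}{w_H\tau}$.

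For the low-degree part, $Y_L = 1/d(u_1)\le 1$, so $Y_L^2\le Y_L$ and $\E{Y_L^2}\le\E{Y_L} = \frac{n\,\pi_L(t)}{|X_L|(1-\alpha)}$. The contribution is then at most $\frac{(1-\alpha)^2|X_L|^2}{n^2w_L}\cdot\frac{n\pi_L(t)}{|X_L|(1-\alpha)} = \frac{(1-\alpha)|X_L|}{n w_L}\pi_L(t)$.

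The main step is noticing that the crude bound $1/d\le 1$ suffices in the low-degree part; the degree bound $\tau$ is only needed later, when choosing $w_L$. Summing the two contributions and using $\pi_L(t),\pi_H(t)\le\pi(t)$ gives
\[
\Var{\pih(t)} \le \p{\frac{|X_L|}{nw_L}+\frac{1}{w_H\tau}}(1-\alpha)\pi(t).
\]
What remains is bookkeeping to make sure the factors $(1-\alpha)$ cancel correctly.
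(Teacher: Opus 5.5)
Your proposal is correct and follows essentially the paper's argument: independence of the two loops, bounding the variance by a second moment, using $1/d(u)\le 1$ for the low-degree samples and $d(x)>\tau$ for the high-degree ones, reversibility to evaluate the means, and finally $\pi_L(t),\pi_H(t)\le\pi(t)$ via \Cref{eq:rec-node}. The only difference is cosmetic. You apply $\Var{Y}\le\E{Y^2}$ to each i.i.d.\ sample directly. The paper instead splits each sample into indicators $\chi_{i,u},\phi_{i,x}$, uses their nonpositive covariances, and uses that a Bernoulli variable's variance is at most its mean; this lands on exactly the same quantity $\E{Y^2}$.
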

\begin{proof}
    Since the two groups of random walks are generated independently, we have that the $\chi$'s and $\phi$'s are independent of each other. This implies that
    \begin{align*}
        \Var{\pih(t) }
        &= \p{\frac{(1-\alpha) |X_L|}{n w_L}}^2\Var{\sum_{i\in [w_L]} \sum_{u\in V} \frac{\chi_{i,u}}{d(u)}}
        + \p{\frac{1-\alpha}{w_H}}^2\Var{\sum_{i\in [w_H]} \sum_{x\in X_H} \frac{\phi_{i,x}}{d(x)}}.
    \end{align*}
    Considering a pair of two different indicator variables $(\chi_{i,u},\chi_{i',u'})$, if $i\neq i'$ they are independent, and otherwise they are negatively correlated, hence $\Cov{\chi_{i,u},\chi_{i',u'}}\leq0$.
    This also holds for the $\phi$'s, and implies that
    \begin{align*}
        \Var{\pih(t) } 
        \leq \p{\frac{(1-\alpha) |X_L|}{n w_L}}^2 \sum_{i\in [w_L]} \sum_{u\in V} \frac{\Var{\chi_{i,u}}}{d(u)^2}
        + \p{\frac{1-\alpha}{w_H}}^2\sum_{i\in [w_H]} \sum_{x\in X_H} \frac{\Var{\phi_{i,x}}}{d(x)^2}.
    \end{align*}
    Each of the $\chi$'s and $\phi$'s are Bernoulli variables and therefore $\Var{\chi_{i,u}}\leq \E{\chi_{i,u}}$ and $\Var{\phi_{i,x}}\leq \E{{\phi_{i,x}}}$.
    Using the expression of $\E{\chi_{i,u}}$ and $\E{\phi_{i,x}}$ found in the proof of \Cref{lem:sn-unbiased}, we further get
    \begin{align*}
        \Var{\pih(t)}
        &\leq \frac{(1-\alpha)^2|X_L|}{n^2 w_L} \sum_{u\in V}\sum_{x\in X_L} \frac{\pi(u,x)}{d(x)d(u)}
        + \frac{(1-\alpha)^2}{w_H}\sum_{x\in X_H} \frac{\pi(x)}{d(x)^2}\\
        &\leq \frac{(1-\alpha)|X_L|}{n w_L} \sum_{x\in X_L} \frac{(1-\alpha)\pi(x)}{d(x)}
        + \frac{(1-\alpha)}{w_H}\sum_{x\in X_H} \frac{(1-\alpha)\pi(x)}{d(x)^2}
    \end{align*}
    Finally, recall that $d(x)>\tau$ for each $x\in X_H$, so using \Cref{eq:rec-node}, it follows that
    \begin{align*}
        \Var{\pih(t)} &\leq \p{\frac{|X_L|}{n w_L} + \frac{1}{w_H \tau}}(1-\alpha) \pi(t),
    \end{align*}
    which finishes the proof.
\end{proof}
We are now ready to prove the theorem.
\begin{proof}[Proof of \Cref{thm:sn-wc}]
    We will set $w_L, w_H$ and $\tau$ so that $\Var{\pih(t)}\leq (c\pi(t))^2 p_f$. Then, since by \Cref{lem:sn-unbiased} the estimator is unbiased, we can use Chebyshev's inequality, and we get that $\P{|\pih(t)-\pi(t)|\geq c\pi(t)}\leq p_f$, hence solving the single node problem.

    By \Cref{lem:sn-var}, it suffices to set $w_L\geq\frac{2(1-\alpha)|X_L|}{n c^2 \pi(t) p_f}$ and $w_H\geq\frac{2(1-\alpha)}{\tau c^2 \pi(t) p_f}$.
    However, we do not know the value of $\pi(t)$ at the beginning of the algorithm. 
    To alleviate this, we will derive a lower bound of $\pi(t)$, so we need not know it. 
    We have
    \begin{align*}
         \pi(t) \geq \sum_{x\in X_L} \frac{(1-\alpha) \pi(x)}{d(x)} + \frac{\alpha}{n} \geq |X_L| \frac{(1-\alpha)\alpha}{\tau n} + \frac{\alpha}{n} 
         \geq \max\curly{|X_L| \frac{(1-\alpha)\alpha}{\tau n}, \frac{\alpha}{n}},
    \end{align*}
    where the first inequality comes from \Cref{eq:rec-node}.
    Using this lower bound and that $|X_L|\leq d(t)$, we set 
    \begin{align*}
        w_L = \min\curly{\frac{2\tau}{c^2\alpha p_f}, \frac{2(1-\alpha) d(t)}{c^2\alpha p_f}} 
        = O\p{ \min\curly{\tau, d(t)}},
    \end{align*}
    and
    \begin{align*}
        w_H = \frac{2(1-\alpha) n}{\tau c^2 \alpha p_f} = O\p{\frac{n}{\tau}},
    \end{align*}
    and note that these satisfy the wished inequalities.

    Now we will argue how to set $\tau$ to get the claimed running time.
    First of all, to find $|X_L|$ by using the $\NeighSorted$, we can find $|X_L|$ in $O(\log d(t))$ using binary search.
    Then, combining this with the sampling process, the total running time is
    \begin{align*}
        O(\log d(t) + w_H + w_L) = O\p{\log d(t) + \min\{\tau, d(t)\} + \frac{n}{\tau}}.
    \end{align*}
    Clearly, setting $\tau=\Theta(n^{1/2})$ suffices to achieve a worst-case running time of $O(n^{1/2})$.
    However, we can sometimes get a better running time by doing the following.
    If $d(t)\leq n^{1/2}$ then we set $\tau=n$, giving a running time of $O(d(t)) = O(\min\{d(t),n^{1/2}\})$.
    Otherwise, we set $\tau = n^{1/2}$ as before, giving a running time of $O(\sqrt{n}) = O(\min\{d(t),n^{1/2}\})$.
    Finally, averaging over all targets, this running time is $O(\min\{d,n^{1/2}\})$.
\end{proof}

\subsection{Single target worst-case}\label{sec:st-wc}

In this section, we will present an algorithm solving the single target problem, achieving the following upper bound in the model with $\NeighSorted$.

\begin{theorem}\label{thm:st-wc}
    Consider the adjacency-list model with $\NeighSorted$.
    There exists an algorithm solving the single target problem with target $t$ in expected time $O((d(t) n/\delta)^{1/2})=O(n/\delta^{1/2})$ and, when averaging over all targets $t$, expected time $O((m/\delta)^{1/2})$.
\end{theorem}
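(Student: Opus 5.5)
The plan is to run the two-phase algorithm described in the introduction. First, run deterministic $\BackPush$ from $t$ with a residual threshold $r_{\max}$, to be chosen later. Then feed the resulting reserves $p()$ and residuals $r()$ into $\RBS$ with sampling threshold $\theta=\Theta(\delta)$, running for $L=O(\log 1/\delta)$ rounds. The output estimate is $\pih(u)=p(u)+\ph(u)$, where $\ph()$ is what $\RBS$ produces when it starts from residuals $r()$ instead of a unit mass at $t$.

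\textbf{Correctness.} By the invariant \Cref{eq:backpush-invariant}, the remaining quantity to estimate is $\sum_w r(w)\pi(u,w)$. This is a linear combination of PPR values with nonnegative weights $r(w)\le r_{\max}$.
\begin{itemize}
\item The $\RBS$ analysis of \cite{RBS} is linear in the initial residual vector. Each randomized push is unbiased, so $\ph(u)$ is an unbiased estimator of $\sum_w r(w)\pi(u,w)$, up to the truncation error $(1-\alpha)^L\le c\delta$.
\item The variance bound of \cite{RBS} has the form $O(\theta\cdot \E{\ph(u)})$. I would re-derive it for a general starting residual vector: each round adds variance at most $\theta$ times the increment, and these contributions sum.
\item Chebyshev's inequality then gives the guarantee \Cref{eq:estimation-requirement}.
\end{itemize}

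\textbf{Running time.}
\begin{itemize}
\item The deterministic phase costs $O(d(t)/r_{\max})$ by the bound in \Cref{sec:local-push}.
\item For the randomized phase, I would bound the expected cost by the expected number of (deterministic or sampled) increments, plus the $O(\log)$ binary-search overhead per push via $\NeighSorted$. In round $i$, pushing at $v$ touches neighbor $u$ deterministically or with probability $(1-\alpha)r_i(v)/(d(u)\theta)$.
\item The key claim is that the total expected cost is $O\bigl(\frac{1}{\theta}\sum_u\sum_w r(w)\pi(u,w)\bigr)$. This is the RBS bound $n\pi(t)/\theta$ with the unit mass at $t$ replaced by the residual vector.
\item Using $\sum_u \pi(u,w)=n\pi(w)$ and $r(w)\le r_{\max}$, this is at most $\frac{r_{\max}}{\theta}\, n\sum_w\pi(w)=\frac{n r_{\max}}{\theta}$, since $\sum_w\pi(w)=1$.
\end{itemize}

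Balancing the two phases gives the parameters. The total is $O(d(t)/r_{\max}+n r_{\max}/\delta)$. Setting $r_{\max}=(d(t)\delta/n)^{1/2}$ yields $O((d(t)n/\delta)^{1/2})$, which is at most $O(n/\delta^{1/2})$. Averaging over $t$ gives $\frac1n\sum_t (d(t)n/\delta)^{1/2}\le (\bar d\, n/\delta)^{1/2}$ by Jensen's inequality, where $\bar d=\frac1n\sum_t d(t)=2m/n$. This is $O((m/\delta)^{1/2})$. If $r_{\max}\ge 1$, there is no deterministic phase, and the bound only improves.

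\textbf{Main obstacle.} The step I expect to be hardest is the cost bound for $\RBS$ started from an arbitrary residual vector. The issue is deterministic increments, which occur when $(1-\alpha)r_i(v)/d(u)\ge\theta$ and cost one unit each without a probability discount. I would charge each such increment to at least $\theta$ units of expected residual arriving at $u$. Sampled increments I would charge via their probability $\Delta/\theta$. In both cases the cost is at most $(1/\theta)$ times the expected mass pushed into $u$. Summing over rounds, the total expected mass reaching $u$ is at most the sum over $w$ of $r(w)$ times the expected number of visits to $u$, which is $O(\sum_w r(w)\pi(u,w)/\alpha)$. A secondary point is that the scan of neighbors of $v$ must stop at the threshold found by sorted access, so that cost stays proportional to increments.
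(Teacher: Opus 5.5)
Your algorithm and running-time accounting match the paper's proof. The \BackPush{} phase costs $O(d(t)/r_{\max})$. The \RBS{} phase, seeded with the resulting $p(),r()$, costs $\frac{1}{\theta}\sum_i\sum_v r_{i+1}(v)=\frac{1}{\alpha\theta}\sum_v(p(v)-p_0(v))\le nr_{\max}/(\alpha\theta)$, since every vertex has at most $r_{\max}$ of mass left after \BackPush{}; your route via $\sum_u\pi(u,w)=n\pi(w)$ gives the same bound. The balancing is identical, and your Jensen step for the average over $t$ is a correct detail the paper leaves implicit. One small omission: you charge overhead per push but never count the pushes. Round-$0$ pushes happen at the $O(d(t)/r_{\max})$ vertices with nonzero residual after \BackPush{}, and later pushes are bounded by the increments received. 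This is the paper's $N$ term.

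The genuine gap is the variance step. The bound the paper takes from \cite{RBS}, \Cref{lem:rbs-prop}(3), is $\Var{\ph(u)}\le L\theta p(u)$ with $L=\Theta(\log(1/\delta))$, not $O(\theta\,\E{\ph(u)})$. With your choice $\theta=\Theta(\delta)$, Chebyshev bounds the failure probability only by a quantity of order $L$, which is useless. That is why the paper sets $\theta=c^2\delta p_f/(4L)$ and pays a $\log(1/\delta)$ factor; compare the $\tilde O$ entries in \Cref{tab:algorithms}.

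Your one-line re-derivation (``each round adds variance $\theta$ times the increment, and these sum'') does not remove $L$. Noise injected at $w$ in round $j$ reaches $\ph(u)$ with weight about $\pi(u,w)$, so the total variance is about $\theta\sum_j\sum_w\pi(u,w)^2 r_j(w)$. Bounding $\pi(u,w)^2\le\pi(u,w)$ and summing over the rounds only gives $O(L\theta\,\pi(u,t))$, because this sum weights each walk by its length. To genuinely drop $L$ you would need two further ingredients:
\begin{itemize}
\item an inequality such as $\sum_w\pi(u,w)^2\pi(w,v)\le\pi(u,v)$, which follows from $\pi(u,w)\pi(w,v)\le\pi(w,w)\pi(u,v)$ by a first-visit decomposition;
\item independence, or non-positive correlation, of the sampling decisions for different neighbors within a push.
\end{itemize}
The single-random-threshold scan of the \NeighSorted{} list does not provide the second: it makes sibling decisions positively correlated. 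So either take $\theta=\Theta(\delta/L)$ as the paper does and accept the logarithmic factor, or supply both ingredients.
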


\paragraph{Overview.}
The algorithm will use a modified version of $\RBS$ \cite{RBS} described in \Cref{sec:rbs}.
In each push of $\RBS$ at a vertex $u$, we increase the estimate $\ph(u)$ by at least $\theta$, where $\theta$ is the push threshold.
Since $\ph(u)$ is always an underestimate of $\pi(u,t)$ in expectation, the expected number of pushes is $O\p{\sum_u \frac{\pi(u,t)}{\theta}}=O(n \pi(t)/\theta)$, which also bounds the expected running time due to the randomized pushing.
Setting $\theta=O(\delta)$ solves the single target problem, as mentioned in \Cref{sec:rbs}.
As we can have $n\pi(t)=\Omega(d(t))$, this gives an expected running time of $O(d(t)/\delta)$, which is not the running time that we opt for.
Moreover, we cannot set $\theta = o(\delta)$, as this would cause the additive error to be larger than $\delta$.

Instead, we begin with a phase where we push deterministically as long as there exists a vertex $u$ with a large residual $r(u)$, that is, when $r(u)> \tau$, where $\tau\in (0,1)$ is a predefined threshold.
This will accelerate the running time, as we will quickly distribute large chunks of probability mass, instead of performing many small pushes.
Once this phase terminates, we will perform the remaining pushes using the standard randomized pushing with push threshold $\delta$.

Considering the running time, the deterministic phase costs at most $O(d(t)/\tau)$, which is the worst-case running time of $\BackPush$ \cite{Lofgren_backpush}.
After this, we know that for each $u$ the current estimate satisfies $\pi(u,t)-\ph(u) \leq \tau$.
It then follows that, in expectation, $\RBS$ will perform at most $\tau/\delta$ pushes at each vertex, resulting in a running time of $O(n\tau/\delta)$.
Balancing both phases, by setting $\tau=(d(t)\delta/n)^{1/2}$, results in an expected running time of $O((d(t)n/\delta)^{1/2})$.

\paragraph{Algorithm.} 
The pseudocode is given in \Cref{alg:st-wc}. 
We start with $\BackPush$ with $r_{\max}$, which returns all the reserve and residual values.
We then run $\RBS$ with a push threshold of $\theta$, and it will run for $L$ rounds.
These parameters will be set in later proofs.

\begin{algorithm}[H]
\caption{}
\label{alg:st-wc}
\textbf{Input:} Undirected graph $G=(V,E)$, target vertex $t\in V$, stopping probability $\alpha$, residual threshold $r_{\max}$, number of pushing rounds $L$, push threshold $\theta$. \\
\textbf{Output:} Dictionary for estimates $\hat p()$ .
\begin{algorithmic}[1]
    \State $p(),r() \leftarrow \BackPush(t, r_{\max})$
    \State $\hat p() \leftarrow \RBS(t, L, \theta)$ initialized with $\hat p_0(), \hat r_0() \leftarrow p(), r()$
    \State \textbf{return} $\hat p()$
\end{algorithmic}
\end{algorithm}

\paragraph{Analysis.}
We let $p_i(u)$ and $r_i(u)$ be the deterministic version of $\ph_i(u)$ and $\rh_i(u)$ of the $\RBS$ algorithm, that is, the reserve and residual we get if we push $\Delta_{i+1}(u,v)$ to every neighbor $u$.
To show that our estimate satisfies the requirement of \Cref{eq:estimation-requirement}, we want to show that the actual reserve and residual values, i.e. $\ph_i(u)$ and $\rh_i$, do not deviate too far from the ``ideal'' deterministic version.
This basically follows from the below lemma, which is shown in \cite{RBS}.
\begin{lemma}\label{lem:rbs-prop}
  The following statements hold for the $\RBS$ algorithm,
  \begin{itemize}
    \item[(1)] $\E{\ph_i(u)} = p_i(u)$ for each $i\in\{0,\dots,L\}$,
    \item[(2)] $\E{\rh_i(u)} = r_i(u)$ for each $i\in\{0,\dots,L\}$, and
    \item[(3)] $\Var{\ph(u)} \leq L \theta p(u)$.
  \end{itemize}
\end{lemma}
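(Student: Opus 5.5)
We prove the three statements by induction on the round index $i$, conditioning on the state after round $i$. Set $\mc F_i$ to be the randomness up to the end of round $i$. The deterministic quantities satisfy $r_{i+1}(u)=\sum_{v\in\mc N(u)}(1-\alpha)r_i(v)/d(u)$ and $p_i(v)=\alpha r_i(v)$. The base case is immediate, since $\rh_0=r_0$ and $\ph_0=p_0$ are the deterministic initial values.

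For (2), fix a neighbor pair $(u,v)$ in round $i+1$. If $\Delta_{i+1}(u,v)\ge\theta$, the increment to $\rh_{i+1}(u)$ is exactly $\Delta_{i+1}(u,v)$. Otherwise it is $\theta$ with probability $\Delta_{i+1}(u,v)/\theta$, and hence equals $\Delta_{i+1}(u,v)$ in conditional expectation. The same holds for the sorted-threshold implementation with a shared $\tau\sim U[0,\theta]$, since marginally each $u$ is still incremented with the right probability. Summing over $v\in\mc N(u)$ gives
\[
\E{\rh_{i+1}(u)\mid\mc F_i}=\sum_{v\in\mc N(u)}\frac{(1-\alpha)\rh_i(v)}{d(u)}.
\]
This expression is linear in $\rh_i$, so the tower property and the induction hypothesis $\E{\rh_i(v)}=r_i(v)$ give $\E{\rh_{i+1}(u)}=r_{i+1}(u)$. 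Statement (1) then follows by linearity, because $\ph_i(v)=\alpha\rh_i(v)$.

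For (3), write $\ph(u)=\sum_i\alpha\rh_i(u)$ and decompose it as a martingale. Let $\mu_i(u)=\E{\ph(u)\mid\mc F_i}$. Then $\Var{\ph(u)}=\sum_i\E{\Var{\mu_{i+1}(u)\mid\mc F_i}}$. Conditional on $\mc F_i$, the future expected contribution of each residual $\rh_{i+1}(w)$ to $\ph(u)$ is $\rh_{i+1}(w)\,\pi^{(\le L-i-1)}(u,w)$, where $\pi^{(\le k)}$ is the truncated PPR. Thus $\mu_{i+1}(u)-\mu_i(u)$ is a sum over random increments $X_{w,v}$ of
\[
\bigl(X_{w,v}-\E{X_{w,v}\mid\mc F_i}\bigr)\,\pi^{(\le\cdot)}(u,w).
\]
Each random increment is $0$ or $\theta$, so its variance is at most $\theta\,\E{X_{w,v}\mid\mc F_i}$. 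Since $\pi(u,w)\le1$, the conditional variance in round $i$ is at most $\theta$ times $\sum_w \E{\rh_{i+1}(w)\mid\mc F_i}\,\pi(u,w)$. After taking expectations this equals $\theta\sum_w r_{i+1}(w)\pi(u,w)\le\theta\,p(u)$, using the invariant~\eqref{eq:backpush-invariant} on the deterministic process (the remaining residual mass contributes at most $\pi(u,t)\approx p(u)$). Summing over the $L$ rounds gives $L\theta p(u)$.

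The main obstacle is step (3): the shared-threshold implementation correlates the increments of different $w$ within one push, so the variance does not split cleanly. I would handle this by bounding the variance per push source $v$. The increments from a single $v$ are nested indicators (the neighbors with $\Delta\ge\tau$), and their weighted sum still has variance at most $\theta$ times its mean, because each term is bounded by $\theta\cdot\pi\le\theta$. Distinct sources $v$ use independent thresholds. I would also need to check that replacing $\pi(u,t)$ by $p(u)$ is exact in the limit $L\to\infty$, or else state the bound with the true $p(u)$ as in \cite{RBS}.
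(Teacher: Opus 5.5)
The paper gives no proof of this lemma; it is quoted from \cite{RBS}, so your argument has to stand on its own. Parts (1) and (2) are fine. The Doob-martingale decomposition for (3) is the right framework, but two steps in (3) do not go through as written.

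\textbf{The shared-threshold variance bound.} Let $Y_v=\sum_{w\in\mc N(v)}\pi^{(\le L-i-1)}(u,w)X_{w,v}$ be the random part of the contribution of the push at $v$. You justify $\Var{Y_v}\le\theta\,\E{Y_v}$ by saying each summand is at most $\theta$. That is not enough, because the indicators are nested, i.e.\ positively correlated. Take $j$ equal summands $\theta c\,\Ind{\tau\le\Delta}$ with $q=\Delta/\theta$. Their sum has variance $\theta\cdot(jc\theta q)\cdot jc(1-q)$, which exceeds $\theta$ times the mean as soon as $jc(1-q)>1$, even though each summand is at most $\theta$.

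The missing observation is that the total weight is at most one. The neighbors of $v$ are distinct vertices, so $\sum_{w\in\mc N(v)}\pi^{(\le L-i-1)}(u,w)\le\sum_{w\in V}\pi(u,w)=1$. Hence $0\le Y_v\le\theta$, which gives $\Var{Y_v}\le\E{Y_v^2}\le\theta\,\E{Y_v}$. Independence of the thresholds across different sources $v$ then gives the per-round bound you want.

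\textbf{The final inequality.} As stated, this step is false. Once you replace the truncated weights by $\pi(u,w)$, the invariant only yields $\sum_w r_{i+1}(w)\pi(u,w)\le\pi(u,t)$. But $\pi(u,t)$ can be strictly larger than the deterministic $L$-round reserve $p(u)$. For example, $p(u)=0<\pi(u,t)$ for a vertex $u$ that the deterministic process does not reach within $L$ rounds; there $\ph(u)=0$ almost surely, yet your chain only gives the bound $L\theta\pi(u,t)>0$. So as written you prove $\Var{\ph(u)}\le L\theta\pi(u,t)$, not (3).

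The fix is to keep the truncation you already introduced. The quantity $\sum_w r_{i+1}(w)\pi^{(\le L-i-1)}(u,w)=\sum_{j=i+1}^{L}p_j(u)$ is exactly the deterministic reserve accrued at $u$ after round $i$, so it is at most $p(u)$. Summing over the $L$ rounds then gives $L\theta p(u)$. The weaker bound $L\theta\pi(u,t)$ would in fact suffice for the paper's only application, the proof of \Cref{thm:st-wc}, since $p(u)$ is immediately bounded by $\pi(u,t)$ there.
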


Furthermore, in the normal $\RBS$ all $r_0(u)$ and $p_0(u)$ is set to $0$ except $r_0(t)$, so it is no longer clear what $p_i(u)$ represents.
Normally, $p_i(u)=\pi^i(u,t)$ where $\pi^i(u,t)$ is the probability than an $\alpha$-discounted walk starting at $u$ terminates at $t$ after exactly $i$ steps. 
However, we know that the deterministic reserves and residuals correspond to $\BackPush$, and using the invariant given in \Cref{eq:backpush-invariant}, we get
\begin{align*}
    \pi(u,t) &= p(u) + \sum_{v\in V} \pi(u,v) r_L(v)
\end{align*}
Using this, we start by showing that setting $L=\log_{1-\alpha}(s)$ rounds only introduces a $s$ additive error.

\begin{lemma}\label{lem:set-L}
    For any $s>0$, if $L=\log_{1-\alpha}(s)$ then $\pi(u,t)-s\leq p(u)\leq \pi(u,t)$
\end{lemma}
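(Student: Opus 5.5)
The lemma concerns the deterministic quantities $p(u)=\sum_{i\le L} p_i(u)$ and $r_L(v)$, obtained by running $\BackPush$ to produce the initial $p_0,r_0$ and then performing $L$ synchronous deterministic push rounds. I will start from the invariant displayed just before the lemma,
\begin{align*}
\pi(u,t) = p(u) + \sum_{v\in V}\pi(u,v)\, r_L(v).
\end{align*}
Since residuals are never negative and $\pi(u,v)\ge 0$, the sum is nonnegative. This immediately gives the upper bound $p(u)\le\pi(u,t)$.

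\textbf{Bounding the residuals.} For the lower bound it suffices to show $r_L(v)\le s$ for every $v$. Then
\begin{align*}
\sum_{v}\pi(u,v)\,r_L(v)\le s\sum_v \pi(u,v)\le s,
\end{align*}
because $\sum_v\pi(u,v)=1$ (the walk terminates almost surely).

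\textbf{Decay over rounds.} The key step is to show $\max_v r_i(v)\le (1-\alpha)^i \max_v r_0(v)$. Since $r_0$ comes from $\BackPush$, we have $\max_v r_0(v)\le 1$; indeed every residual produced by $\BackPush$ starts at $1$ at $t$ and only shrinks. In a synchronous round, the new residual at $u$ is
\begin{align*}
r_{i+1}(u)=\sum_{v\in\mathcal N(u)} \frac{(1-\alpha)\,r_i(v)}{d(u)} \le (1-\alpha)\max_v r_i(v),
\end{align*}
since the average over the $d(u)$ neighbors is at most the maximum. Induction then gives $r_L(v)\le(1-\alpha)^L= s$ for $L=\log_{1-\alpha}(s)$.

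\textbf{Main obstacle.} The delicate part is checking that the invariant genuinely holds for the deterministic rounds started from an arbitrary $(p_0,r_0)$, and not only from $r_0=\mathbb 1_t$. I will verify that one push round preserves $\pi(u,t)=p(u)+\sum_w r(w)\pi(u,w)$ by substituting the recurrence \Cref{eq:rec-target}, written for target $w$:
\begin{align*}
\pi(u,w)=\alpha\Ind{u=w}+\sum_{x\in\mathcal N(w)}\frac{(1-\alpha)\,\pi(u,x)}{d(x)}.
\end{align*}
Applying this to each residual converts $r(w)\pi(u,w)$ into $\alpha r(w)$ added to $p(w)$ plus residual mass $(1-\alpha)r(w)/d(x)$ moved to each neighbor $x$. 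This is exactly the push rule, so the invariant is preserved, and it holds initially because $\BackPush$ maintains it.

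A minor point is that $\log_{1-\alpha}(s)$ should be rounded up to an integer. This only makes the residuals smaller, so the bound is unaffected.
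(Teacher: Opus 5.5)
Your proof is correct and follows the paper's argument: you use the invariant $\pi(u,t)=p(u)+\sum_v \pi(u,v)\, r_L(v)$, the inductive averaging bound $r_i(v)\le(1-\alpha)^i$, and $\sum_v\pi(u,v)=1$, and you also spell out the upper bound $p(u)\le\pi(u,t)$ and the invariant check, which the paper takes for granted. One justification needs fixing: $\max_v r_0(v)\le 1$ holds because $\BackPush$ terminates with $r_0(v)\le r_{\max}<1$, not because residuals ``only shrink'', since a vertex's residual can grow, even above $1$ mid-run, by accumulating repeated pushes from its neighbors.
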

\begin{proof}
    We start by showing that $r_i(v) \leq (1-\alpha)^i$ for all $v$ using induction over $i$.
    For $i=0$, $r_i(v)\leq 1$ is clearly true. 
    Consider any vertex $v$.
    It receives residual mass from each neighbor in the $(i-1)$'th round, and we have
    \begin{align*}
        r_i(v) = \sum_{w\in \mc N(v)} \frac{(1-\alpha) r_{i-1}(w)}{d(v)} 
        \leq (1-\alpha)^i \sum_{w\in \mc N(v)} \frac{1}{d(v)} = (1-\alpha)^i.
    \end{align*}
    After $L$ steps we have $r_L(v) \leq (1-\alpha)^L = s$, implying that
    \begin{align*}
        \pi(u,t) &= p(u) + \sum_{v\in V} \pi(u,v) r_L(v) \leq p(u) + s.
    \end{align*}
    Thus, $\pi(u,t)-s\leq p(u)\leq \pi(u,t)$.
\end{proof}

We are now ready to show \Cref{thm:sn-wc}.
\begin{proof}[Proof of \Cref{thm:st-wc}]
    We will show that setting 
    \begin{align*}
        \theta=c^2\delta p_f/(4L), ~\text{and}~ L=\log_{1-\alpha}(c\delta/2)
    \end{align*}
    suffices for the algorithm to solve the single target problem.
    We will set $r_{\max}$ later to achieve the stated running time.
    By \Cref{lem:set-L}, we have $\pi(u,t)-c\delta/2\leq p(u)\leq \pi(u,t)$ for each $u\in V$.
    Then by the triangle inequality $|\ph(u) - \pi(u,t)|\leq |\ph(u) - p(u)| + c\delta/2$.
    Therefore,
    \begin{align*}
        p_u = \P{|\ph(u)- \pi(u,t)| \geq c \max\{\delta, \pi(u,t)\}} &\leq \P{|\ph(u)- p(u)| + c\delta /2 \geq c \max\{\delta, \pi(u,t)\}}\\
        &\leq \P{|\ph(u)- p(u)| \geq (c/2) \max\{\delta, \pi(u,t)\}}.
    \end{align*}
    Recall that from \Cref{lem:rbs-prop} we have that $\Var{\ph(u)} \leq \theta L p(u)$.
    Then Chevyshev's inequality, \Cref{lem:rbs-prop} and $p(u)\leq \pi(u,t)$ implies that
    \begin{align*}
        p_u \leq \frac{4 \theta L p(u)}{c^2 \max\{\delta, \pi(u,t)\}^2} 
        \leq \frac{\delta p(u) p_f}{\max\{\delta, \pi(u,t)\}^2}
        \leq p_f.
    \end{align*}
    This finishes the first part of the proof.

    Note that the $\BackPush$ phase runs in time $O(d(t)/r_{\max})$.
    For the remaining part of the proof, we will try to upper bound the running time of the randomized pushing phase.
    As noted in \cite{RBS}, conditioned on $\{\rh_i\}$ the push operation at $v$ in round $i$ can be implemented in expected time $O\p{\sum_{u\in \mc N(v)} \Delta_{i+1}(u,v) /\theta + 1}$.
    Thus, we can upper bound the expected time complexity of the randomized pushing phase as follows, where $N$ is the expected number of times we have a non-zero residual,
    \begin{align}
        O\p{\frac{1}{\theta}\sum_{i=0}^{L-1} \sum_{v\in V} \sum_{u\in \mc N(v)} \E{\Delta_{i+1}(v,u)} + N}. \label{eq:st-rbs-rt}
    \end{align}
    Note that $\E{\rh_{i+1}(v) \mid \{\rh_i\}} = \sum_{u \in \mc N(v)} \Delta_{i+1}(v,u)$, as $v$ receives an expected residual increment from each neighbor $u$ of $\Delta_{i+1}(v,u)$, implying together with \Cref{lem:rbs-prop} that $r_{i+1}(v) = \E{\rh_{i+1}(v)} = \sum_{u \in \mc N(v)} \E{\Delta_{i+1}(v,u)}$. 
    Then for the first term of \Cref{eq:st-rbs-rt} we get
    \begin{align}
        \frac{1}{\theta}\sum_{i=0}^{L-1} \sum_{v\in V} \sum_{u\in \mc N(v)} \E{\Delta_{i+1}(v,u)} 
        = \frac{1}{\theta}\sum_{i=0}^{L-1} \sum_{v\in V} r_{i+1}(v)
        = \frac{1}{\theta \alpha}\sum_{v\in V} (p(v) - p_0(v)). \label{eq:st-rbs-rt2}
    \end{align}
    It follows from the invariant given in \Cref{eq:backpush-invariant} maintained by $\BackPush$, that for each $u\in V$ we have $\pi(u,t) \geq p_0(u) \geq \pi(u,t) - r_{\max}$, and hence
    \begin{align*}
         p(v) - p_0(v) \leq p(v) - \pi(u,t) + r_{\max} \leq r_{\max}.
    \end{align*}
    Applying this to \Cref{eq:st-rbs-rt2} shows that the first term of \Cref{eq:st-rbs-rt} is upper bounded by $O(n r_{\max} / (\theta \alpha)) = \tilde O(n r_{\max} / \delta)$.
    Now we are only missing to bound $N$.
    For $i>0$ we only push at $v$ when $\rh_i(v)\geq \theta$. 
    When $i=0$, there are at most $O(d(t)/r_{\max})$ non-zero residuals, as the running time of $\BackPush$ upper bounds the number of pushes and hence number of non-zero residuals. 
    Thus,
    \begin{align*}
        N \leq O\p{\frac{d(t)}{r_{\max}}} + \E{\sum_{i=0}^{L-1} \sum_{v\in V} \frac{\rh_{i+1}(v)}{\theta}}  =O\p{\frac{d(t)}{r_{\max}}}   + \frac{1}{\theta} \sum_{i=0}^{L-1} \sum_{v\in V} r_{i+1}(v)
        = \tilde O\p{\frac{d(t)}{r_{\max}} + \frac{n r_{\max}}{\delta}}. 
    \end{align*}
    The second-to-last equality follows from \Cref{lem:rbs-prop} and the last equality follows from above (note that the sum-term appears in \Cref{eq:st-rbs-rt2}).
    Thus, the final expected time complexity is $O\p{\frac{d(t)}{r_{\max}} + \frac{n r_{\max}}{\delta}}$, and setting $r_{\max}=(d(t)\delta/ n)^{1/2}$, yields a time complexity of $O((n d(t)/\delta)^{1/2})$, finishing the proof.
\end{proof}

\subsection{Single target average-case}\label{sec:st-avg}

In this section, we will describe an algorithm $\BackPushNew$ with the following properties

\begin{theorem}\label{thm:back-improved}
    Consider the adjacency-list model.
    There exists an algorithm solving the single target problem with target $t$ in expected time $O(d(t)/\delta)$ and, when averaging over all targets $t$, expected time $O(\min\{d/\delta,(1/\delta)^2 +d\})$.
    
    Considering instead the adjacency-list model with $\NeighSorted$, there exists an algorithm solving the single target problem in expected time $O(\min\{d/\delta,\allowbreak (1/\delta)^2\})$, averaged over all targets.
\end{theorem}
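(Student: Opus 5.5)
The plan is to run $\BackPush$ from $t$ on a modified problem in which every walk whose second-to-last step is a high-degree neighbor of $t$ is discarded. Fix a threshold $D=\Theta(1/\delta)$, say $D=2(1-\alpha)/(c\delta)$. Call a neighbor $u\in\mc N(t)$ \emph{heavy} if $d(u)>D$, and let $H_t=\{u\in\mc N(t): d(u)\leq D\}$ be the light neighbors, with $h_t=|H_t|$. Because the statement is a minimum of two bounds, the algorithm also computes the cost $O(d(t)/\delta)$ of plain $\BackPush$ with $r_{\max}=c\delta$ and runs whichever procedure is cheaper. This gives the worst-case bound $O(d(t)/\delta)$ and the $d/\delta$ term of the average directly, using \Cref{sec:local-push}. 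It remains to prove the $O(1/\delta^2+d)$ bound.

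\textbf{Step 1 (error from heavy neighbors).} By \Cref{eq:rec-target}, for $u\neq t$ we have $\pi(u,t)=\sum_{w\in\mc N(t)}(1-\alpha)\pi(u,w)/d(w)$. The heavy part is at most $(1-\alpha)/D\cdot\sum_w\pi(u,w)\leq (1-\alpha)/D\leq c\delta/2$. So define the truncated quantity $\pi'(u,t)=\alpha\Ind{u=t}+\sum_{w\in H_t}(1-\alpha)\pi(u,w)/d(w)$; it satisfies $\pi(u,t)-c\delta/2\leq \pi'(u,t)\leq\pi(u,t)$.

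\textbf{Step 2 (computing $\pi'$).} Perform the first push at $t$ by hand: set $p(t)=\alpha$, and give residual $(1-\alpha)/d(w)$ only to each $w\in H_t$. Then run $\BackPush$ with $r_{\max}=c\delta/2$ on the usual graph. The invariant \Cref{eq:backpush-invariant}, applied to initial residual vector $r_0$, gives $\pi'(u,\cdot)=p(u)+\sum_w r(w)\pi(u,w)$, so the additive error is at most $c\delta/2$. One subtlety is that pushes may later return mass to $t$ and to heavy neighbors; this is fine, because we are simply computing $\sum_{w}r_0(w)\pi(u,w)$ exactly up to the residuals. The $\ph$ values are therefore deterministic estimates within additive $c\delta$ of $\pi(u,t)$.

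\textbf{Step 3 (running time).} This is the step I expect to be the crux. Applying the analysis of \cite{Lofgren_backpush} with a general initial residual vector, the cost is $O(\sum_u d(u)\,\pi_{r_0}(u)/r_{\max})$, where $\pi_{r_0}(u)=\sum_w r_0(w)\pi(u,w)$. By \Cref{lem:reverse}, $\sum_u d(u)\pi(u,w)=d(w)\sum_u\pi(w,u)=d(w)$. Hence the cost is $O(\sum_{w\in H_t}\frac{1-\alpha}{d(w)}d(w)/\delta)=O(h_t/\delta)$. I would double-check that the Lofgren bound indeed extends to arbitrary nonnegative initial residuals, by the same reserve-increment argument: each push at $v$ costs $d(v)$ and increases $p(v)$ by at least $\alpha r_{\max}$. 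Identifying $H_t$ costs $O(d(t))$ by scanning, or $O(\log d(t))$ by binary search with $\NeighSorted$, since neighbors are sorted by degree and $H_t$ is a prefix.

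\textbf{Step 4 (averaging).} Summing over $t$, $\sum_t h_t=\sum_u |\{t\in\mc N(u)\}|\Ind{d(u)\leq D}\leq nD$. The average cost is therefore $O(D/\delta+d)=O(1/\delta^2+d)$, and with $\NeighSorted$ it is $O(1/\delta^2+\text{avg}\log d(t))$. The log term needs care to drop. I would argue $\log d(t)\leq d(t)$, but that reintroduces $d$. Instead, note that the binary search is only needed when we run the truncated version, and cap the search at $\min\{d(t),D\}$ by probing position $D$: if neighbor $D$ is light, then $h_t\geq D$ and the cost is already $\Omega(D/\delta)\geq\log d(t)$. Otherwise $h_t<D$ and the search takes $O(\log D)=O(1/\delta)$ time. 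Finally, taking the minimum with the $d/\delta$ procedure per target gives $O(\min\{d/\delta,1/\delta^2\})$ on average, after noting that the average of minima is at most the minimum of averages.
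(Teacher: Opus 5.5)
Your Steps 1--4 follow the paper's proof. The paper uses the same truncation of neighbors of degree above $\Theta(1/\delta)$ and the same initialization $p(t)=\alpha$, $r(w)=(1-\alpha)/d(w)$ on light neighbors. It uses the same invariant to get additive error $O(\delta)$, the same reversibility computation to get cost $O(h_t/\delta)$, and the same charging bound $\sum_t h_t\le nD$. Your wrapper that falls back to plain backwards push is unnecessary. Since $h_t\le d(t)$ and $\delta\le 1$, the truncated algorithm alone already costs $O(h_t/\delta+d(t))=O(d(t)/\delta)$.

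One step fails as written: the removal of the $\log d(t)$ term in the \NeighSorted{} case.
\begin{itemize}
\item You claim that $h_t\ge D$ forces a cost of $\Omega(D/\delta)\ge\log d(t)$. This is false when $\delta$ is large. For constant $\delta$, $D/\delta=O(1)$ while $\log d(t)$ can be $\Theta(\log n)$.
\item Example: let a constant fraction of the vertices each have exactly $D$ light neighbors, via a $D$-regular bipartite graph, plus $\sqrt n$ common heavy neighbors. Then $d=\Theta(\sqrt n)$ and the target bound $O(\min\{d/\delta,1/\delta^2\})$ is $O(1)$. But your binary search over $[D,d(t)]$ costs $\Theta(\log n)$ on average.
\item The $O(h_t)$ work of writing the residuals does not absorb this either, since here $h_t=D=O(1)$.
\end{itemize}

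The fix is what the paper does. Scan the degree-sorted list from the front until the first neighbor of degree $>D$. This costs $O(h_t+1)$, which you pay anyway to initialize the light residuals, and it averages to $O(D)=O(1/\delta)$. A galloping search from the front, costing $O(\log h_t+1)$, would also work.
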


\paragraph{Overview.}
The algorithm is a modified version of the standard $\BackPush$ \cite{Lofgren_backpush}.
Instead of starting with all probability mass at the target $t$, we will distribute the probability mass among the neighbors of $t$, but ignoring all neighbors with degree larger than $O(1/\delta)$.
In that way, on average, we avoid a lot of pushes on these expensive high-degree vertices, reducing the asymptotic time complexity.
We show that this only introduces an additive error of at most $O(\delta)$.
Specifically, we show that the worst-case running time is $O(h_t/\delta+d(t))$ where $h_t$ is the number of neighbors of $t$ with degree at most $O(1/\delta)$.
The $d(t)$ comes from the initial scan of neighbors to distribute the probability mass.
Importantly, $\sum_t h_t\leq O(n/\delta)$ since each vertex can contribute at most $O(1/\delta)$ to this sum, giving an average running time of $O((1/\delta)^2+d)$.
Note that the worst-case time complexity is the same as $\BackPush$, as a target $t$ can have $h_t=d(t)$, resulting in a time complexity of $O(d(t)/\delta)$.
Furthermore, if the $\NeighSorted$ is available, then we can distribute the initial probability mass in time $h_t$, and we can remove the $d$ term from the average time complexity.

\paragraph{Algorithm.}
The pseudocode of $\BackPushNew$ is given in \Cref{alg:bacwards-push-improved}.
The input is a target vertex $t\in V$ and a residual threshold $r_{\max}$, which will be set in later proofs.
Let $Y\subseteq \mc N(t)$ be the subset of neighbors of $t$ with degree at most $1/r_{\max}$.
Furthermore, let $X= \mc N(t)\setminus Y$.
Then for each $x\in X$ we have $d(x)>1/r_{\max}$.
We will now run the $\BackPush$ algorithm, but with the initialization slightly changed.
Instead of setting $r(t)=1$ and all other values to zero, we set $p(t)=\alpha$ and $r(y)=(1-\alpha)/d(y)$ for each $y\in Y$ and all other values is set to zero.
We then run $\BackPush$ with push threshold $r_{\max}$.

\paragraph{Analysis.}
Below, we let $p(u)$ and $r(u)$ denote the final reserve and residual value.
We will start by showing that the following invariant is maintained, where the proof is given in \Cref{sec:missing-proofs}.

\begin{lemma}\label{lem:bpnew-invariant}
    For the target vertex $t$, the push operation in $\BackPushNew$ algorithm maintain the following invariant for each $u\in t$:
    \begin{align*}
        \pi(u,t)=p(u) + \sum_{v \in V}\p{r(v)+\frac{1-\alpha}{d(v)}\Ind{v\in X}} \pi(u,v) . 
    \end{align*}
\end{lemma}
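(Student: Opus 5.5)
We prove the invariant by induction on the number of push operations, in the same way as the invariant \Cref{eq:backpush-invariant} for $\BackPush$. For the base case we verify it directly after the modified initialization. For the inductive step we show that a single push leaves the right-hand side unchanged.

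\emph{Base case.} After initialization, $p(t)=\alpha$, $r(y)=(1-\alpha)/d(y)$ for $y\in Y$, and all other values are zero. Since $Y$ and $X$ partition $\mc N(t)$, the right-hand side becomes
\begin{align*}
\alpha\Ind{u=t} + \sum_{y\in Y}\frac{1-\alpha}{d(y)}\pi(u,y) + \sum_{x\in X}\frac{1-\alpha}{d(x)}\pi(u,x) = \alpha\Ind{u=t}+\sum_{w\in\mc N(t)}\frac{(1-\alpha)\pi(u,w)}{d(w)}.
\end{align*}
By \Cref{eq:rec-target}, this equals $\pi(u,t)$. This is the only place where the choice of initialization enters: the mass that would have gone to $X$ is exactly the extra term $\frac{1-\alpha}{d(v)}\Ind{v\in X}$. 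That term is never modified afterwards, since pushes only change $p$ and $r$.

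\emph{Inductive step.} Consider a push at some vertex $v$ with current residual $\rho=r(v)$. The push makes three changes: $p(v)$ increases by $\alpha\rho$, $r(v)$ is set to $0$, and each $w\in\mc N(v)$ gains $(1-\alpha)\rho/d(w)$ in $r(w)$. The $X$-term is unaffected, so the change in the right-hand side is
\begin{align*}
\alpha\rho\Ind{u=v} - \rho\,\pi(u,v) + \sum_{w\in\mc N(v)}\frac{(1-\alpha)\rho}{d(w)}\pi(u,w).
\end{align*}
By \Cref{eq:rec-target} applied with target $v$, we have $\pi(u,v)=\sum_{w\in\mc N(v)}\frac{(1-\alpha)\pi(u,w)}{d(w)}+\alpha\Ind{u=v}$. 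Substituting this shows the change is exactly $0$.

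A small point to handle is whether $v$ can be the target $t$ itself, or a vertex of $X$. In either case the argument is unchanged, because the computation uses only \Cref{eq:rec-target} at $v$ and does not depend on $v$'s role. The only real content is the base case, where the split $\mc N(t)=X\cup Y$ must exactly reproduce the recursion at $t$. I expect no genuine obstacle: the proof is a linear bookkeeping identity, and the inductive step is identical to the standard $\BackPush$ invariant proof.
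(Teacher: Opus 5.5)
Your proof is correct and follows essentially the same route as the paper. The base case uses \Cref{eq:rec-target} at $t$ together with the split $\mc N(t)=X\cup Y$, and the inductive step shows that a push leaves the right-hand side unchanged by applying \Cref{eq:rec-target} at the pushed vertex. The paper organizes that same computation by splitting the sum over $\mc N[w]$ and $V\setminus\mc N[w]$, whereas you compute the net change directly, which is slightly cleaner.
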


We have by the following Lemma that the final estimate $p(u)$ is within a $O(r_{\max})$ additive error of $\pi(u,t)$.
\begin{lemma}\label{lem:bpnew-add}
    $\pi(u,t)-2r_{\max}\leq p(u)\leq \pi(u,t)$.
\end{lemma}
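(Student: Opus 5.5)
We start from the invariant of \Cref{lem:bpnew-invariant}, evaluated at termination:
\begin{align*}
    \pi(u,t)=p(u) + \sum_{v \in V} r(v)\,\pi(u,v) + \sum_{x\in X}\frac{1-\alpha}{d(x)}\,\pi(u,x).
\end{align*}
Since all residuals and all terms in the two sums are nonnegative, the upper bound $p(u)\leq \pi(u,t)$ follows immediately. For the lower bound it suffices to show that each of the two sums is at most $r_{\max}$.

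For the residual sum we reuse the argument of \Cref{sec:local-push}. $\BackPushNew$ terminates only when $r(v)\leq r_{\max}$ for every $v$. Hence
\begin{align*}
    \sum_v r(v)\pi(u,v)\leq r_{\max}\sum_v \pi(u,v)= r_{\max},
\end{align*}
because $\pi(u,\cdot)$ is a probability distribution.

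For the discarded mass on $X$, we use that every $x\in X$ has $d(x)>1/r_{\max}$. This gives $(1-\alpha)/d(x)< (1-\alpha) r_{\max}\le r_{\max}$, and therefore
\begin{align*}
    \sum_{x\in X}\frac{1-\alpha}{d(x)}\pi(u,x)\leq r_{\max}\sum_{x\in X}\pi(u,x)\leq r_{\max}.
\end{align*}
Combining the two bounds yields $p(u)\geq \pi(u,t)-2r_{\max}$.

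The only point needing care is that the invariant must hold at termination. This is supplied by \Cref{lem:bpnew-invariant}, assuming termination itself, which holds because each push moves at least $\alpha r_{\max}$ mass into reserves and the total reserve is bounded. Beyond that, the argument is just the two uniform bounds above. We also note that we deliberately do not use reversibility here; the degree threshold alone suffices, which is why the crude bound $\sum_x \pi(u,x)\le 1$ works.
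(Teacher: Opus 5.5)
Your proof is correct and takes essentially the paper's approach: both start from the invariant of \Cref{lem:bpnew-invariant}, use $r(v)\le r_{\max}$ at termination, $(1-\alpha)/d(x)<r_{\max}$ for $x\in X$, and $\sum_v \pi(u,v)=1$. The only difference is that the paper bounds the combined per-vertex coefficient by $2r_{\max}$ where you bound the two sums separately, and you also state the upper bound $p(u)\le \pi(u,t)$ and the direction of the final inequality explicitly (the paper's displayed chain has the inequality sign reversed by a typo).
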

\begin{proof} 
    For each vertex $v\in V$ if $v\in X$ then $d(x)>1/r_{\max}$ and also $r(v)\leq r_{\max}$ implying that
    $
        r(v)+\frac{1-\alpha}{d(v)}\Ind{v\in X}
        \leq r_{\max} + (1-\alpha) r_{\max}
        \leq 2 r_{\max}
    $.
    Using this, we have that
    \begin{align*}
        p(u) &= \pi(u,t) - \sum_{v \in V}\p{r(v)+\frac{1-\alpha}{d(v)}\Ind{v\in X}} \pi(u,v)\\
        &\leq \pi(u,t) - 2r_{\max} \sum_{v \in V} \pi(u,v)\\
        &= \pi(u,t) - 2r_{\max}.
    \end{align*}
    
\end{proof}
We are now ready to prove the theorem.
\begin{proof}[Proof of \Cref{thm:back-improved}]
    By \Cref{lem:bpnew-add}, setting $r_{\max}=c\delta/2$, given an additive error of at most $c\delta$ therby solving the single target problem. 
    For the remaining part of this proof, we will argue about the running time.
    
    First of all, we can upper bound $p(u)$ as follows:
    \begin{align*}
        p(u) &= \pi(u,t) - \sum_{v \in V}\p{r(v)+\frac{1-\alpha}{d(v)}\Ind{v\in X}} \pi(u,v) \\
        &\leq \pi(u,t) - \sum_{x \in X}\frac{1-\alpha}{d(x)} \pi(u,x)\\
        &= \sum_{y\in Y} \frac{(1-\alpha)\pi(u,y)}{d(y)} + \alpha \Ind{u=t} && (\text{By \Cref{eq:rec-target}})\\
        &= \frac{(1-\alpha)}{d(u)}\sum_{y\in Y} \pi(y,u) + \alpha \Ind{u=t}. && (\text{By \Cref{lem:reverse}})
    \end{align*}
    For a vertex $u$, we denote by $R(u)=p(u)/\alpha - \Ind{u = t}$ the total amount of residual pushed from $u$ to its neighbors.
    Whenever we push at a vertex $u$, we have $r(u)>r_{\max}$, and after pushing, we set $r(u)=0$.
    Therefore the number of pushes at $u$ is upper bounded by $R(u)/r_{\max}$.
    We can then upper bound the running time as follows,
    \begin{align*}
        O\p{\sum_{u\in V}\frac{R(u)}{r_{\max}}d(u) }
        = O\p{\sum_{u\in V} \frac{p(u) - \alpha\Ind{u = t}}{\alpha r_{\max}}d(u)}
    \end{align*}
    Using the above upper bound of $p(u)$, we further get
    \begin{align*}
        O\p{\sum_{u\in V} \frac{p(u) - \alpha\Ind{u = t}}{\alpha r_{\max}}d(u)}
        = O\p{\frac{(1-\alpha)}{\alpha r_{\max}} \sum_{y\in Y} \sum_{u\in V} \pi(y,u)}
        = O\p{\frac{|Y|}{r_{\max}}}.
    \end{align*}
    Note that $O(|Y|/r_{\max}) = O(d(t)/r_{\max})$.
    
    For the average-case running time (when averaging over all targets), we combine two bounds. 
    First, the running time is at most $\frac{1}{n}\sum_{t\in V} O(d(t)/\delta) = O(d/\delta)$. 
    Secondly, we bound the running time by applying a charging argument. 
    For the $\Theta(|Y|/\delta)$ random walks sampled at $t$, each vertex $y\in Y$ is charged a cost of $\Theta(1/\delta)$. A vertex $u$ is charged at most $d(u)$ times, and only when $d(u)\leq \Theta(1/\delta)$, so its total charge is at most $O((1/\delta)^2)$. 
    Thus, the average-case running time is $O(d+(1/\delta)^2)$, by also adding the average-cost of the neighboring search.
    Combining the two bounds give $O(\min\{d/\delta,(1/\delta)^2+d\})$.
    Note that if the $\NeighSorted$ operation is available, then we can do the neighboring search in time $| Y|$, and then the average-case running time is simply $O(\min\{d/\delta,(1/\delta)^2)\}$.
\end{proof}

\subsubsection{Extending to single pair average case.}\label{sec:sp}
In this section, we will describe an algorithm $\BiPPRAvg$ with the following running time.

\begin{theorem}\label{thm:back-improved-sp}
    Consider the adjacency-list model.
    There exists an algorithm solving the single pair problem in expected time $O((1/\delta)^{2/3}+d)$, averaged over all targets.
    
    Considering instead the adjacency-list model with $\NeighSorted$ and $\Adj$, there exists an algorithm solving the single pair problem in expected time $O((1/\delta)^{2/3})$, averaged over all targets.
\end{theorem}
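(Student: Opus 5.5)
The plan is a bidirectional estimator in the style of BiPPR~\cite{UndirectedBiPPR}. The backward phase is $\BackPushNew$ from $t$ with a residual threshold $r_{\max}$ that will be larger than $\delta$. The forward phase is Monte Carlo from $s$ that corrects the remaining residual.

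\textbf{Step 1 (estimator).} Run $\BackPushNew(t,r_{\max})$ to get $p(\cdot),r(\cdot)$. By \Cref{lem:bpnew-invariant},
\[
\pi(s,t)=p(s)+\sum_{v\in V} q(v)\pi(s,v),\qquad q(v)=r(v)+\tfrac{1-\alpha}{d(v)}\Ind{v\in X}.
\]
In the proof of \Cref{lem:bpnew-add} we saw that $0\le q(v)\le 2r_{\max}$. Sample $w$ independent $\alpha$-discounted walks from $s$ with endpoints $v_1,\dots,v_w$, and output
\[
\hat\pi(s,t)=p(s)+\tfrac1w\sum_i q(v_i).
\]
This estimator is unbiased. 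Each summand lies in $[0,2r_{\max}]$, so its variance is at most $2r_{\max}\,\E{q(v_i)}\le 2r_{\max}\pi(s,t)$.

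\textbf{Step 2 (accuracy).} By Chebyshev, $w=\Theta(r_{\max}/\delta)$ walks suffice. This gives variance at most $O(\delta\pi(s,t))\le c^2 p_f\max\{\delta,\pi(s,t)\}^2$, which meets the requirement~\eqref{eq:estimation-requirement}.

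\textbf{Step 3 (evaluating $q(v_i)$).} $r(v_i)$ is a dictionary lookup. The indicator $\Ind{v_i\in X}$ needs adjacency to $t$ and $d(v_i)>1/r_{\max}$.
\begin{itemize}
\item In the plain model, store $X$ in a hash set while scanning $\mc N(t)$ during initialization. This scan costs $d(t)$, which averages to the additive $d$ term.
\item With $\Adj$ and $\NeighSorted$, test $\Adj(t,v_i)$ and $\Deg(v_i)$ directly. Use binary search via $\NeighSorted$ to find $Y$ as a prefix of the sorted neighbors, so initialization costs $O(|Y|+\log d(t))$ and never scans $X$.
\end{itemize}

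\textbf{Step 4 (running time).} The main point is to redo the charging argument of \Cref{thm:back-improved} with general $r_{\max}$. The backward cost is $O(|Y|/r_{\max})$, where $Y$ is the set of neighbors of $t$ with degree at most $1/r_{\max}$. Each vertex $y$ belongs to $Y$ for at most $d(y)\le 1/r_{\max}$ targets. Hence $\sum_t |Y_t|\le n/r_{\max}$, and the average backward cost is $O(1/r_{\max}^2)$, plus $d$ for the scan in the plain model. The forward cost is $O(r_{\max}/\delta)$ for any $s$, hence also on average. Balancing $1/r_{\max}^2=r_{\max}/\delta$ gives $r_{\max}=\delta^{1/3}$, so the total is $O((1/\delta)^{2/3}+d)$, respectively $O((1/\delta)^{2/3})$ in the second model.

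\textbf{Main obstacle.} The averaging over $t$ must hold simultaneously with correctness for every pair. It does, because the backward cost bound $O(|Y|/r_{\max})$ is deterministic. The remaining care is the $\log d(t)$ binary-search term in the second model. This should be absorbed: either bound $\log d(t)$ by $O(1/r_{\max}^2)$ on average, or note that averaging $\log d(t)$ gives at most $\log d$. If that fails, replace the binary search with an exponential (galloping) search costing $O(\log|Y|)$, which is at most $O(|Y|)$.
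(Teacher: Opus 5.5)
Your proposal is correct and follows essentially the same route as the paper. You run $\BackPushNew$ with $r_{\max}=\delta^{1/3}$ and account for the ignored mass $\frac{1-\alpha}{d(x)}$ on $x\in X$ (by residual correction or hash set in the plain model, and via $\Adj$ and $\Deg$ in the second model). You then correct with $O(r_{\max}/\delta)$ Monte Carlo walks from $s$, using $q(v)<2r_{\max}$ and Chebyshev. Your explicit count $\sum_t |Y_t|\le n/r_{\max}$ makes precise the paper's appeal to \Cref{thm:back-improved} for general $r_{\max}$.

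On the $\log d(t)$ concern: the simplest fix is to scan $t$'s degree-sorted list from the front and stop at the first neighbor of degree $>1/r_{\max}$, which costs $O(|Y|+1)$. Your first two workarounds fail when $\delta$ is constant and $d$ is large, but the galloping fallback is fine.
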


We note that in \cite{DirectedPPR}, they also give an $O((1/\delta)^{2/3})$ algorithm in the model with $\NeighSorted$ and $\Adj$.
However, our algorithm achieves the corresponding bound with a substantially simpler design and analysis.
Moreover, we show that for undirected graphs, the same running time can be obtained even without these additional queries, up to an additive $d$ term

\paragraph{Overview.}
The bidirectional estimator given in \cite{UndirectedBiPPR} combines the backwards-push algorithm with Monte Carlo sampling.
Using the same idea for our new $\BackPushNew$ algorithm gives the stated running times.

\paragraph{Algorithm.}
$\BiPPRAvg$ first runs $\BackPushNew$ with target $t$ and push threshold $r_{\max}$ computing reserves $p()$ and residuals $r()$. 
$\BiPPRAvg$ then modifies all residuals for each $x\in X$ by updating $r(x)\gets r(x) + (1-\alpha)/d(v)$.
Then it follows from \Cref{lem:bpnew-invariant} that for each $u\in V$:
\begin{align}
    \pi(u,t) = p(u) + \sum_{v\in V}r(v)\pi(u,v).\label{eq:bippr-1}
\end{align}
From \Cref{eq:bippr-1}, we use Monte Carlo sampling to construct an estimator $\pih(s,t)$ of $\pi(s,t)$.
Specifically, we will generate $n_r$ $\alpha$-discounted random walks starting at $s$ using the $\RW$ subroutine, and let $u_i$ be the end vertex of the $i$ random walk for each $i\in [n_r]$.
We then construct the bidirectional estimator as follows:
\begin{align}
    \pih(s,t) = p(s) + \frac{1}{n_r} \sum_{i\in [n_r]} \sum_{v\in V} r(v) \Ind{v=u_i}. \label{eq:bippr-2}
\end{align}
The pseudocode is given in \Cref{sec:app-pseudocode}.

\paragraph{Analysis.}

We start by bounding the residual values as follows.
\begin{lemma}\label{lem:sp-residual}
    $r(v) < 2\cdot r_{\max}$ for each $v\in V$.
\end{lemma}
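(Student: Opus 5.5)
The residuals in question are the ones $\BiPPRAvg$ uses in \Cref{eq:bippr-2}, i.e.\ the values left after $\BackPushNew$ terminates and after the modification $r(x)\gets r(x)+(1-\alpha)/d(x)$ for each $x\in X$. The plan is to bound the two contributions to each final residual separately and then add them. The argument is the same one used in the proof of \Cref{lem:bpnew-add}, now recorded as a bound on the residuals themselves rather than on the additive error.

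First, we check that when $\BackPushNew$ terminates, every vertex satisfies $r(v)\le r_{\max}$. This is immediate from the stopping condition of the while-loop in $\BackPush$, which $\BackPushNew$ inherits unchanged: the loop only stops once no vertex has residual exceeding $r_{\max}$. The initialization does not interfere with this. Each initial residual $r(y)=(1-\alpha)/d(y)$ for $y\in Y$ is either already at most $r_{\max}$ or is later pushed, and in either case the final value is at most $r_{\max}$.

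Second, we look at the modification step. For $v\notin X$ nothing is added, so $r(v)\le r_{\max}<2r_{\max}$. For $v\in X$ we have by definition $d(v)>1/r_{\max}$, hence $(1-\alpha)/d(v)<(1-\alpha)r_{\max}$. The modified residual is therefore at most $r_{\max}+(1-\alpha)r_{\max}=(2-\alpha)r_{\max}<2r_{\max}$, using $\alpha>0$.

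One point needs care to get the strict inequality. The first bound is only non-strict, $r(v)\le r_{\max}$. For $v\notin X$ strictness comes from $r_{\max}<2r_{\max}$, which requires $r_{\max}>0$; this holds under the choice $r_{\max}=\Theta(\delta)$. For $v\in X$ strictness comes from the strict inequality $d(v)>1/r_{\max}$ together with $\alpha>0$. I also note that the modification line in the paper's description writes $d(v)$ where $d(x)$ is meant, and I read it as $d(x)$.
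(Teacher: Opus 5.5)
Your proof is correct and is the paper's argument: the stopping rule of $\BackPushNew$ gives $r(v)\le r_{\max}$, and the correction adds $(1-\alpha)/d(x)<(1-\alpha)r_{\max}$ only for $x\in X$, so every residual stays below $2r_{\max}$. One small slip: \Cref{thm:back-improved-sp} sets $r_{\max}=\delta^{1/3}$ rather than $\Theta(\delta)$, but your strictness argument only needs $r_{\max}>0$, so this does not matter.
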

\begin{proof}
    After running $\BackPush$ we have that $r(v)\leq r_{\max}$ for each $v\in V$.
    Then for each $x\in X$, we add $(1-\alpha)/d(x) < (1-\alpha)r_{\max} \leq r_{\max}$, and hence the residual value is less than $ 2\cdot r_{\max}$ at each vertex.
\end{proof}

This is an unbiased estimator with variance bounded as follows.
\begin{lemma}\label{lem:sp-unbiased}
    $\E{\pih(s,t)} = \pi(s, t)$ and $\Var{\pih(s,t)}\leq 2 r_{\max} \pi(s,t)/n_r$.
\end{lemma}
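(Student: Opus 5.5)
We write $Z_i=\sum_{v\in V} r(v)\Ind{v=u_i}=r(u_i)$ for the contribution of the $i$-th walk, so that $\pih(s,t)=p(s)+\frac{1}{n_r}\sum_{i\in[n_r]} Z_i$. Here $p()$ and $r()$ are the (modified) reserves and residuals produced deterministically by $\BackPushNew$ followed by the update on $X$; they depend only on $t$ and $r_{\max}$. The $Z_i$ are therefore i.i.d., and randomness enters only through the walk endpoints $u_i$.

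\textbf{Unbiasedness.} Each walk starts at $s$, so $\P{u_i=v}=\pi(s,v)$, which gives $\E{Z_i}=\sum_{v\in V} r(v)\pi(s,v)$. By linearity, $\E{\pih(s,t)}=p(s)+\sum_{v} r(v)\pi(s,v)$, and this equals $\pi(s,t)$ by \Cref{eq:bippr-1}. That identity is exactly \Cref{lem:bpnew-invariant} after the residual of each $x\in X$ has been increased by $(1-\alpha)/d(x)$.

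\textbf{Variance.} Since $p(s)$ is deterministic and the $Z_i$ are independent, $\Var{\pih(s,t)}=\Var{Z_1}/n_r\le \E{Z_1^2}/n_r$. We then compute
\[
\E{Z_1^2}=\sum_{v} r(v)^2\pi(s,v)\le 2r_{\max}\sum_v r(v)\pi(s,v),
\]
where the inequality uses \Cref{lem:sp-residual} ($0\le r(v)<2r_{\max}$) together with nonnegativity of the residuals. Finally, \Cref{eq:bippr-1} and $p(s)\ge 0$ give $\sum_v r(v)\pi(s,v)=\pi(s,t)-p(s)\le \pi(s,t)$. Combining these yields $\Var{\pih(s,t)}\le 2r_{\max}\pi(s,t)/n_r$.

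\textbf{Main obstacle.} The only delicate point is checking that all residuals are nonnegative and that $p(s)\ge0$. Every push in $\BackPushNew$ adds only nonnegative amounts, starting from $p(t)=\alpha$ and $r(y)=(1-\alpha)/d(y)\ge 0$, and the modification on $X$ also adds a nonnegative quantity. Hence both facts hold, which justifies replacing $r(v)^2$ by $2r_{\max}r(v)$ and dropping $p(s)$.
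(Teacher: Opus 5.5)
Your proof is correct and follows the paper's argument. Unbiasedness comes from \Cref{eq:bippr-1}; the variance is bounded by $\frac{1}{n_r}\sum_v r(v)^2\pi(s,v)$, then by \Cref{lem:sp-residual}, and finally by $\sum_v r(v)\pi(s,v)\le \pi(s,t)$. Writing each walk's contribution as $Z_i=r(u_i)$ and using $\Var{Z_1}\le\E{Z_1^2}$ is just a tidier way of discarding the nonpositive within-walk covariance terms the paper drops, and your explicit check that residuals and reserves are nonnegative fills in a step the paper leaves implicit.
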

\begin{proof}
    Since $\E{\Ind{v=u_i}} = \pi(s,v)$, we have that 
    \begin{align*}
        \E{\pih(s,t)} = p(s) + \sum_{v\in V} r(v) \pi(u,v) = \pi(s,t),
    \end{align*}
    where the last equality follows from \Cref{eq:bippr-1}.
    The variance can be bounded as follows,
    \begin{align*}
        \Var{\pih(s,t)} 
        &\leq \frac{1}{n_r^2} \sum_{i\in [n_r]} \sum_{v\in V} r(v)^2 \cdot \Var{\Ind{v=u_i}}
        \leq \frac{1}{n_r}\sum_{v\in V} r(v)^2 \pi(s,v).
    \end{align*}
    Using \Cref{lem:sp-residual}, we further get
    \begin{align*}
        \Var{\pih(s,t)} 
        &\leq \frac{2 r_{\max}}{n_r}\sum_{v\in V} r(v) \pi(s,v) \leq \frac{2 r_{\max}}{n_r} \pi(s,t).
    \end{align*}
\end{proof}

\begin{proof}[Proof of \Cref{thm:back-improved-sp}]
    From \Cref{lem:sp-unbiased}, setting $n_r=2r_{\max}/(c^2\delta p_f) = O(r_{\max}/\delta)$, results in a variance upper bounded by $\Var{\pih(s,t)} \leq c^2 \delta \pi(s,t) p_f$.
    Then, by Chebyshev's inequality, we get that
    \begin{align*}
        \P{|\pih(s,t)-\pi(s,t)| \geq c\max\{\pi(s,t),\delta\}} \leq \frac{c^2 \delta \pi(s,t) p_f}{c^2\max\{\pi(s,t),\delta\}^2} \leq p_f,
    \end{align*}
    thereby solving the single pair problem.
    
    By \Cref{thm:back-improved}, the average-case running time of $\BackPushNew$ is $O((1/r_{\max})^2+d)$.
    Correcting the residuals can be done in $O(d(t))$ time or $O(d)$ averaged over all targets, and the Monte Carlo part runs in $O(r_{\max}/\delta)$ time in expectation.
    We can balance the two terms $(1/r_{\max})^2$ and $r_{\max}/\delta$ by setting $r_{\max}=\delta^{1/3}$, resulting in a average running time of $O((1/\delta)^{2/3} + d)$.

    Lastly, we will describe how to get rid of this $d$ term in the running time if the $\NeighSorted$ and $\Adj$ queries are available.
    By \Cref{thm:back-improved}, the $\NeighSorted$ query removes the $d$ term from the running of the $\BackPushNew$ algorithm.
    To remove the $d$ term incurred by correcting the residuals so that \Cref{eq:bippr-1} is satisfied, we observe that we do not need to update the residuals for each $x\in X$ before the Monte Carlo phase.
    Instead, if we end up at vertex $u_i$, we can check whether $u_i\in X$ by using the $\Adj$ query to check if $u_i\in \mc N(t)$ and by checking that $d(u_i)>1/r_{\max}$.
    This gives the exact same estimate $\pih(s,t)$ for each $u$, but the average expected running time is improved to $O((1/\delta)^{2/3})$. 
\end{proof}

\subsection{Bidirectional estimators using Jump}\label{sec:up-jump}
In this section, we will present two algorithms, both solving the single target problem.
The first has running time $O(({n d(t)}/{\delta})^{1/2})$ in the worst-case, and the second has running time $O(({n }/{\delta})^{2/3}+d)$ averaged over all targets. 
The algorithms are based on the bidirectional approach similar to \cite{DirectedPPR, UndirectedBiPPR}.

\begin{theorem}\label{thm:bi-st}
    Consider the adjacency-list model with $\Jump$.
    There exists an algorithm solving the single target problem with target $t$ in expected time $O(({n d(t)}/{\delta})^{1/2})=O(n/\delta^{1/2})$ and, when averaging over all targets $t$, expected time $O((m/\delta)^{1/2})$.
\end{theorem}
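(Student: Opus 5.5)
We combine $\BackPush$ from $t$ with forward Monte Carlo walks from uniformly random sources obtained via $\Jump$. The target running time is $O(d(t)/r_{\max} + n r_{\max}/\delta)$, which is the same trade-off as in \Cref{thm:st-wc}. Choosing $r_{\max}=(d(t)\delta/n)^{1/2}$ then gives $O((nd(t)/\delta)^{1/2})$, and averaging with Cauchy--Schwarz gives $\frac1n\sum_t (nd(t)/\delta)^{1/2}\le (n\cdot 2m/\delta)^{1/2}/n^{1/2}\cdot$const $=O((m/\delta)^{1/2})$.

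\textbf{Step 1 (backward phase).} Run $\BackPush$ from $t$ with threshold $r_{\max}$. This costs $O(d(t)/r_{\max})$ (\Cref{sec:local-push}) and leaves residuals $r(v)\le r_{\max}$ satisfying invariant \Cref{eq:backpush-invariant}, so that $\pi(u,t)=p(u)+\sum_v r(v)\pi(u,v)$ for every $u$. The residual term for source $u$ is at most $r_{\max}$. Thus for a source $u$ we only need to estimate $e(u)=\sum_v r(v)\pi(u,v)$ to additive error $O(\delta)$, or to relative error when $\pi(u,t)>\delta$.

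\textbf{Step 2 (forward phase).} A separate random-walk estimate of $e(u)$ per source $u$ costs $\Theta(r_{\max}/\delta)$ walks each, for $n\,r_{\max}/\delta$ in total. That would already match the budget, and it does not even need $\Jump$ beyond enumerating sources. The difficulty is that we cannot enumerate $V$ without $\Jump$, and a per-source independent estimator may be fine anyway: from each $u$ we run $k=O(r_{\max}/\delta)$ walks and set $\hat\pi(u,t)=p(u)+\frac1k\sum_i r(u_i)$. By the variance computation of \Cref{lem:sp-unbiased}, the variance is at most $r_{\max}\pi(u,t)/k\le c^2\delta\,\pi(u,t)p_f$, so Chebyshev gives the estimation requirement for each $u$ separately. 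The single target problem only requires the per-$u$ guarantee, so no union bound is needed. We can also skip sources $u$ whose walk budget is wasted, but that is not required for the bound.

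\textbf{Step 3 (enumerating sources).} We need to visit each $u\in V$, and this is exactly where $\Jump$ enters. Repeated $\Jump$ gives only random vertices and coupon-collecting would add a $\log n$ factor. Instead, I would argue that sources outside the set reached by the backward phase can be handled implicitly: output $\hat\pi(u,t)=0$ for any $u$ never explicitly touched. For such $u$ the true $\pi(u,t)$ may still be large, however, so this does not work directly. I therefore expect the main obstacle to be how to produce estimates for all $n$ sources within $O(n r_{\max}/\delta)$ time while using only $\Jump$.

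My first attempt at Step 3 is to interpret the output as an implicit data structure: the algorithm outputs $p()$ and $r()$, and for any queried $u$ (obtained via $\Jump$) it runs $k$ walks lazily. The total cost over all $n$ sources is then $n\cdot O(r_{\max}/\delta)$ in expectation, with walks of expected length $1/\alpha$. If explicit enumeration is required, I would instead use $\Jump$ $O(n\log n)$ times, absorbing the log factor, or check whether the model assumes vertex identifiers $[n]$, which would make enumeration free.
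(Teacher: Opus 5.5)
Your Steps 1 and 2 and the balancing and averaging are fine. Step 3, however, is a genuine gap, and none of your fallbacks closes it.

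Answering lazily for queried sources does not solve the single target problem as defined, because an estimate must be output for every source. Any scheme that explicitly visits every source costs $\Omega(n)$, and $\Theta(n\log n)$ if you enumerate by $\Jump$-based coupon collecting. This is incompatible with the claimed bound. Whenever $d(t)<n\delta$, the target $O((nd(t)/\delta)^{1/2})$ is $o(n)$, and your per-source budget $k=r_{\max}/\delta=(d(t)/(n\delta))^{1/2}$ is then below one walk. Even for $d(t)\ge n\delta$, the extra $\log n$ is absorbed only when $d(t)\ge n\delta\log^2 n$, while the theorem claims a bound with no logarithmic factor.

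The missing idea, which the paper's short proof uses, is to not allocate a fixed number of walks to each source. Draw $N=\Theta(nr_{\max}/\delta)$ independent samples. Each sample is a start vertex $s_i$ returned by $\Jump$ together with an $\alpha$-discounted walk from $s_i$ that ends at $u_i$. Then output, for every $u\in V$,
\[
\hat\pi(u,t)=p(u)+\frac{n}{N}\sum_{i\in[N]}\Ind{s_i=u}\,r(u_i),
\]
normalizing by the expected number $N/n$ of samples starting at $u$ rather than the realized number. This gives $\E{\hat\pi(u,t)}=p(u)+\sum_{v}r(v)\pi(u,v)=\pi(u,t)$ and
\[
\Var{\hat\pi(u,t)}\le\frac{n}{N}\sum_v r(v)^2\pi(u,v)\le\frac{nr_{\max}}{N}\pi(u,t).
\]
Chebyshev then gives the per-source guarantee exactly as in your Step 2, for all $u$ simultaneously.

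Sources that are neither pushed to nor sampled get the implicit estimate $p(u)=0$. This resolves your worry that such $u$ may have large $\pi(u,t)$, because which sources get sampled is part of the estimator's randomness. Concretely, if $p(u)=0$ and $\pi(u,t)>\delta$, the Chebyshev bound itself forces $u$ to be sampled with probability at least $1-p_f$. The output is sparse, no enumeration of $V$ is needed, and the cost is $O(d(t)/r_{\max}+nr_{\max}/\delta)$ with no logarithmic factor. Your choice of $r_{\max}$ and your Cauchy--Schwarz averaging then complete the proof.
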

\begin{proof}
    We use a bidirectional approach that combines Monte Carlo sampling with $\BackPush$ similar to \cite{UndirectedBiPPR}.
    Each Monte Carlo sample starts by drawing a uniformly random vertex using $\Jump$, and then simulating an $\alpha$-discounted walk from this vertex.
    As show in \cite{UndirectedBiPPR}, estimating $\pi(u,t)$ under the accuracy requirement of \Cref{eq:estimation-requirement} for a single pair $(u,t)$ can be done using $O(r_{\max}/\delta)$ random walks, along with the $\BackPush$ algorithm requiring a running time of $O(d(t)/r_{\max})$.
    Therefore, to solve the single target problem, the expected running time of the Monte Carlo part becomes $O(n r_{\max}/\delta)$ to estimate $\pi(u,t)$ for each $u\in V$.
    Balancing both terms, by setting $r_{\max}=(\delta d(t) /n)^{1/2}$, gives a time complexity of $O(({n d(t)}/{\delta})^{1/2})$.
\end{proof}

\begin{theorem}\label{thm:bi-st-avg}
    Consider the adjacency-list model with $\Jump$.
    There exists an algorithm solving the single target problem in expected time $O(({n }/{\delta})^{2/3}+d)$, averaged over all targets.
\end{theorem}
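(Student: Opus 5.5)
We will combine the modified backwards push $\BackPushNew$ of \Cref{thm:back-improved} with Monte Carlo sampling from uniformly random sources obtained via $\Jump$, exactly as in the proof of \Cref{thm:bi-st}, but with $\BackPush$ replaced by $\BackPushNew$. Concretely, we run $\BackPushNew$ from $t$ with threshold $r_{\max}$, obtaining $p()$ and $r()$. We then correct the residuals by adding $(1-\alpha)/d(x)$ to $r(x)$ for every $x\in X$, so that \Cref{eq:bippr-1} holds for every source $u$. This costs $O(d(t))$, i.e.\ $O(d)$ on average.

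By \Cref{lem:sp-residual}, every residual is below $2r_{\max}$. For every $u\in V$ we therefore have the bidirectional estimator of \Cref{eq:bippr-2}, which by \Cref{lem:sp-unbiased} is unbiased with variance at most $2r_{\max}\pi(u,t)/n_r$. Taking $n_r=O(r_{\max}/\delta)$ walks per source and applying Chebyshev's inequality gives the estimation requirement \Cref{eq:estimation-requirement} for each $u$ separately. Since the single target guarantee is per-vertex with constant failure probability, no union bound is needed.

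For the Monte Carlo side we follow \Cref{thm:bi-st}: we sample walks for every $u\in V$, with $O(r_{\max}/\delta)$ walks each. Sources never touched by the push phase have $p(u)=0$, yet they still need walks because the residuals are spread out. Each walk has expected length $O(1/\alpha)$, so this phase costs $O(nr_{\max}/\delta)$. The estimate at $u$ is $p(u)$ plus the average residual at the walk endpoints, where residuals are read from the dictionary (or corrected on the fly as in \Cref{thm:back-improved-sp}). We only output entries for sources $u$ whose estimate is nonzero. Every other source implicitly gets estimate $0$, which is correct whenever $\pi(u,t)$ is below the relevant error threshold. The point we need to verify here is that a vertex with $\pi(u,t)>\delta$ is either reached by the push, and so has a reserve, or receives an explicit estimate.

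The running time averaged over $t$ is then $O((1/r_{\max})^2+d)$ for $\BackPushNew$, using the charging argument of \Cref{thm:back-improved}, plus $O(nr_{\max}/\delta)$ for sampling. Balancing $(1/r_{\max})^2=nr_{\max}/\delta$ gives $r_{\max}=(\delta/n)^{1/3}$, and the total is $O((n/\delta)^{2/3}+d)$, as claimed.

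The main obstacle is the average-case bound for $\BackPushNew$ with a threshold $r_{\max}$ different from $\delta$. The argument of \Cref{thm:back-improved} gives cost $O(|Y|/r_{\max})$, where $Y$ is the set of neighbours of $t$ with degree at most $1/r_{\max}$. Summing over $t$, each vertex $y$ is counted at most $d(y)\le 1/r_{\max}$ times, so $\sum_t|Y_t|\le n/r_{\max}$ and the average is $O(1/r_{\max}^2)$. I would recheck this charging with $r_{\max}$ in place of $\delta$. I would also confirm that the requirement $r_{\max}\le 1$ holds, since $\delta/n<1$, and that the $O(r_{\max}/\delta)$ walk count is at least $1$, which needs $r_{\max}\ge\delta$, i.e.\ $\delta\le n^{-1/2}$. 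In the complementary regime I would fall back on plain $\BackPushNew$ with threshold $\Theta(\delta)$, at average cost $O(1/\delta^2+d)\le O((n/\delta)^{2/3}+d)$.
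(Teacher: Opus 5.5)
Your proposal is correct and follows the paper's proof: run $\BackPushNew$ from $t$ with threshold $r_{\max}$ (average cost $O(1/r_{\max}^2+d)$ by the charging argument), add $O(nr_{\max}/\delta)$ walks from $\Jump$-sampled sources, and balance at $r_{\max}=(\delta/n)^{1/3}$. You also supply details the paper leaves implicit: the $O(d(t))$ residual correction, and the fallback to plain $\BackPushNew$ when $\delta>n^{-1/2}$, where $r_{\max}<\delta$ and $1/\delta^2\le (n/\delta)^{2/3}$; the point you flag about sources never hit by $\Jump$ also disappears if you normalize each source's walk sum by the expected count $N/n$ rather than the realized count, since this keeps the estimator unbiased with the same variance bound $2r_{\max}\pi(u,t)/n_r$ as in \Cref{lem:sp-unbiased}.
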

\begin{proof}
    We use a bidirectional approach that combines Monte Carlo sampling with $\BackPushNew$ similar to \cite{UndirectedBiPPR}.
    Each Monte Carlo sample starts by drawing a uniformly random vertex using $\Jump$, and then simulating an $\alpha$-discounted walk from this vertex.
    As show in \cite{UndirectedBiPPR}, estimating $\pi(u,t)$ under the accuracy requirement of \Cref{eq:estimation-requirement} for a single pair $(u,t)$ can be done using $O(r_{\max}/\delta)$ random walks, along with the $\BackPushNew$ algorithm requiring a running time of $O((1/r_{\max})^2+d)$ by \Cref{thm:back-improved}.
    Therefore, to solve the single target problem, the expected running time of the Monte Carlo part becomes $O(n r_{\max}/\delta)$ to estimate $\pi(u,t)$ for each $u\in V$.
    Balancing both terms, by setting $r_{\max}=(\delta /n)^{1/3}$, gives a time complexity of $O(({n }/{\delta})^{2/3}+d)$.
\end{proof}

\section{Lower bounds}\label{sec:lower}

In this section, we present our lower bounds.
See~\Cref{tab:results} for an overview.
The lower bounds are inspired by~\cite{DirectedPPR}, and are similarly based on a reduction of a decision problem to PageRank estimation.
We construct two graphs that only differ by a few edges, but on which $\pi(s,t)$ differs significantly.
Any algorithm that can estimate $\pi(s,t)$ within the required accuracy must therefore be able to distinguish between the two graphs.
The following lemma describes this reduction.

\begin{lemma}\label{sp-separation}
  Let $G$ be a graph containing vertices $s$ and $t$.
  Let $\hat\pi(s,t) \in [0, 1]$ be an estimate of $\pi(s,t)$ satisfying $\abs{\hat\pi(s,t)-\pi(s,t)} \leq c\max\{\pi(s,t),\delta\}$.
  Then,
  \begin{itemize}
    \item $\hat\pi(s,t) \leq c\delta$ if $\pi(s,t) = 0$, and
    \item $\hat\pi(s,t) > c\delta$ if $\pi(s,t) > 2c\delta$.
  \end{itemize}
\end{lemma}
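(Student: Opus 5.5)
The plan is a direct case analysis on the value of $\pi(s,t)$, using only the hypothesis $\abs{\hat\pi(s,t)-\pi(s,t)} \leq c\max\{\pi(s,t),\delta\}$ together with the assumption that $c$ is a small constant. Throughout I would assume $c<1/2$; more precisely, any $c<1$ suffices for the second item.

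\emph{First item.} Suppose $\pi(s,t)=0$. Then $\max\{\pi(s,t),\delta\}=\delta$, and the hypothesis reads $\abs{\hat\pi(s,t)} \leq c\delta$. In particular $\hat\pi(s,t)\leq c\delta$, which is the claim.

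\emph{Second item.} Suppose $\pi(s,t)>2c\delta$. I would split according to whether $\pi(s,t)\geq\delta$.
\begin{itemize}
  \item If $\pi(s,t)\geq \delta$, the error bound is $c\pi(s,t)$. Hence $\hat\pi(s,t)\geq (1-c)\pi(s,t)\geq (1-c)\delta$, and this exceeds $c\delta$ as soon as $c<1/2$.
  \item If $2c\delta<\pi(s,t)<\delta$, the error bound is $c\delta$. Hence $\hat\pi(s,t)\geq \pi(s,t)-c\delta> 2c\delta-c\delta=c\delta$.
\end{itemize}
Alternatively, both cases can be handled at once via the uniform bound $\hat\pi(s,t)\geq \pi(s,t)-c\max\{\pi(s,t),\delta\}$. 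When the maximum is $\pi(s,t)$ this gives $(1-c)\pi(s,t)>(1-c)2c\delta\geq c\delta$ for $c\leq 1/2$. When the maximum is $\delta$ it gives $\pi(s,t)-c\delta>c\delta$.

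The only point needing care is the regime $\pi(s,t)\geq\delta$, where the error scales with $\pi(s,t)$ itself. That is why a smallness condition on $c$ (e.g.\ $c\leq 1/2$) is needed. I would state this explicitly, consistent with the paper's convention that $c$ is a small constant.
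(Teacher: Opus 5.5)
Your proof is correct and matches the paper's: the first item is immediate, and for the second the paper assumes $c<\frac12$ and uses exactly your unified bound $\pi(s,t)-\hat\pi(s,t)\le c\max\{\pi(s,t),\delta\}<\frac12\pi(s,t)$, giving $\hat\pi(s,t)>\frac12\pi(s,t)>c\delta$. One correction: your parenthetical claim that any $c<1$ suffices for the second item is false. For $c>\frac12$ and $\pi(s,t)$ just above $2c\delta$, the allowed error is $c\,\pi(s,t)$, so $\hat\pi(s,t)=(1-c)\pi(s,t)\approx 2c(1-c)\delta<c\delta$ is admissible; the condition $c\le\frac12$ that your actual argument uses is sharp.
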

\begin{proof}
  We assume $c < \frac12$.
  If $\pi(s,t)=0$, the claimed inequality follows by substituting $\pi(s,t)$ by zero in the assumed inequality.
  If $\pi(s,t) > 2c\delta$, we have $\pi(s,t)-\hat\pi(s,t) \leq \abs{\hat\pi(s,t)-\pi(s,t)} \leq c\max\{\pi(s,t),\delta\} < \frac12\pi(s,t)$, so $\hat\pi(s,t) > \frac12\pi(s,t) > c\delta$.
\end{proof}

To obtain our lower bounds from the above lemma, it suffices to construct graph classes that are hard to distinguish, but on which $\pi(s,t)$ differs significantly.
We will do this by hiding an important edge in a graph using a \emph{swap}, which only modifies the graph locally, but significantly alters the PPR.
Let us first define this swap.

\begin{definition}
  Let $G = (V, E)$ be a graph with a fixed order on the neighbors of each vertex.
  For each $q = (q_1,q_2,q_3,q_4) \in V^4$, write $E^-_q=\{\{q_1,q_2\}, \{q_3,q_4\}\}$, $E^+_q=\{\{q_1,q_3\},\{q_2,q_4\}\}$, and $E_q^\pm = E_q^-\cup E_q^+$.
  A quadruple $q \in V^4$ is \emph{swappable} if $E^-_q \subseteq E$ and $E^+_q \subseteq \binom{V}{2} \setminus E$.
  For a swappable quadruple $q \in V^4$, write $G_q$ for the graph with vertex set $V$ and edge set $(E \setminus E^-_q) \cup E^+_q$, inheriting the orders of vertex neighbors from $G$, placing edges of $E^+_q$ at indices where edges of $E^-_q$ were.
  For a set $Q \subseteq V^4$ of swappable quadruples, define $\mc G(G, Q) = \{G\} \cup \{G_q \mid q \in Q\}$.
\end{definition}

See~\Cref{fig:sp} for an illustration of a swap.
We now lower bound the number of queries required to detect whether a swap has occurred.
In the following lemma, the input is either $G$ or $G$ with a single quadruple in $Q$ swapped, and the algorithm must decide which is the case.

\begin{lemma}\label{swap-lb}
  Consider the adjacency-list model augmented with $U \subseteq \{\Jump,\allowbreak \NeighSorted,\allowbreak \Adj\}$.
  Let $G=(V,E)$ be a graph and $Q \subseteq V^4$ a set of swappable quadruples.
  If $\NeighSorted \in U$, assume that $d(q_1)=d(q_4)$ and $d(q_2)=d(q_3)$ for all $(q_1,q_2,q_3,q_4) \in Q$.

  Fix an ordered list $W$ of at least one vertex in $V$.
  If $\Jump \in U$, let $V_W = V$, and otherwise, let $V_W \subseteq V$ be the union of components in $G$ intersecting $W$.
  If $\Adj \in U$, let $E_W = \binom{V_W}{2}$, and otherwise, let $E_W = \binom{V_W}{2} \cap E$.

  Let $\mc A$ be an algorithm which, when given a graph $H \in \mc G(G, Q)$ and the list $W$ of some of its vertices, produces an output $\mc A(H, W)$.
  Assume that $\P{\mc A(G, W) = \mc A(G_q, W)} \leq 2p_f$ for all $q \in Q$.
  Then, the expected query complexity of $\mc A$ on $(G, W)$ is $\Omega(\abs{Q}/K)$, where $K = \max_{e \in E_W}\abs{\{q \in Q \mid e \in E^\pm_q\}}$.
\end{lemma}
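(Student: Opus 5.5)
We compare the run of $\mc A$ on $(G,W)$ with its run on $(G_q,W)$ and show that the two runs coincide unless $\mc A$, when run on $G$, queries something that "touches" the swap $q$. Each query can touch only $K$ quadruples, so $\Omega(\abs{Q}/K)$ queries are needed. Concretely, fix the randomness of $\mc A$ (including the randomness of $\Jump$, coupled so that $\Jump$ returns the same vertex in both graphs). Run $\mc A$ on $G$ and record its transcript of queries and answers. Say a query \emph{touches} an edge $e\in E_W$ if:
\begin{itemize}
\item it is an $\Adj(u,v)$ query with $\{u,v\}=e$; or
\item it is a $\Neigh(v,i)$ or $\NeighSorted(v,i)$ query whose answer $u$ satisfies $\{u,v\}=e$.
\end{itemize}
A query touches $q$ if it touches some edge of $E^\pm_q$. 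The main claim is that if no query in the transcript on $G$ touches $q$, then the transcript on $G_q$ is identical. We prove this by induction over the queries, which all concern vertices that are in $W$ or were returned earlier.

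\textbf{Induction step.} Take a query that is identical so far in both runs.
\begin{itemize}
\item $\Deg$: the swap preserves every degree, so $\Deg$ answers agree.
\item $\Jump$: it is coupled to agree.
\item $\Adj(u,v)$: this differs only if $\{u,v\}\in E^\pm_q$, which would be touching $q$.
\item $\Neigh(v,i)$: $G_q$ inherits the neighbor orders from $G$ and places the edges of $E^+_q$ at the indices of $E^-_q$. Hence the answer differs only at $v\in\{q_1,\dots,q_4\}$ and at the index of an $E^-_q$ edge. There the $G$-answer is an edge of $E^-_q$, so the query touches $q$.
\item $\NeighSorted(v,i)$: here we use the hypothesis $d(q_1)=d(q_4)$ and $d(q_2)=d(q_3)$. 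Replacing $q_2$ by $q_3$ in $q_1$'s list (and symmetrically at the other endpoints) keeps the degree-sorted order valid. We fix the sorted order of $G_q$ to be the inherited one, with ties broken consistently, so the same argument as for $\Neigh$ applies.
\end{itemize}
Every queried vertex lies in $V_W$: without $\Jump$, queries only reach vertices connected to $W$. Every touched pair lies in $E_W$: it is an actual edge of $G$, unless $\Adj$ is available, in which case any pair in $\binom{V_W}{2}$ may be queried. So the touched edges all lie in $E_W$.

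\textbf{Counting.} Let $T$ be the number of queries on $(G,W)$, a random variable. Each query touches at most one edge (the queried pair or the returned edge), and each edge of $E_W$ lies in $E^\pm_q$ for at most $K$ quadruples. Hence the number of touched quadruples is at most $KT$, and
$$\sum_{q\in Q}\P{\text{$q$ touched on }G}\le K\,\E{T}.$$
For each $q$, by the claim, $\P{\mc A(G,W)=\mc A(G_q,W)}\ge \P{q\text{ not touched}}$. Combined with the assumption $\P{\mc A(G,W)=\mc A(G_q,W)}\le 2p_f$, this gives $\P{q\text{ touched}}\ge 1-2p_f$. Summing over $q\in Q$ yields $(1-2p_f)\abs{Q}\le K\,\E{T}$, so $\E{T}=\Omega(\abs{Q}/K)$ since $p_f<1/2$ is a small constant.

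\textbf{Main obstacle.} The delicate step is the coupling for $\NeighSorted$ and $\Jump$. We must specify the sorted neighbor order of $G_q$, including ties, so that it is a legal degree-sorted order and agrees with $G$ everywhere except at the swapped positions; this is exactly where the degree-equality hypothesis is used. Similarly, the $\Jump$ randomness must be shared between the two runs, which is legitimate because $V$ is the same in $G$ and $G_q$.
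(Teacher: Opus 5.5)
Your proposal is correct and follows essentially the same route as the paper. The paper likewise argues that if the set $S$ of pairs queried on $(G,W)$ avoids $E^\pm_q$, then the runs on $G$ and $G_q$ coincide for the same random bits (using degree preservation and the degree-equality hypothesis for \NeighSorted); it then notes $S \subseteq E_W$ and concludes $(1-2p_f)|Q| \le K\,\mathbb{E}[|S|]$. Your query-by-query induction and explicit coupling of \Jump{} just spell out what the paper states more tersely.
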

\begin{proof}
  Let $S \subseteq \binom{V}{2}$ be the set of edges and other size-2 vertex sets queried by $\mc A$ when given $(G, W)$.
  Note that if $S \cap E_q^\pm = \emptyset$ for some $q \in Q$, then the query/answer sequence of $\mc A$ is identical on $(G, W)$ and $(G_q, W)$ (for the same random bits), hence $\mc A(G, W) = \mc A(G_q, W)$.
  To see that $\NeighSorted$ behaves the same in both graphs, note that the degree of no vertex changes when going from $G$ to $G_q$, and that when e.g. $q_1$ has its neighbor $q_2$ replaced by $q_3$, this new neighbor $q_3$ takes the place of $q_2$ in the degree-sorted adjacency list of $q_1$, assuming appropriate manual tiebreaking, since we assume $d(q_2)=d(q_3)$ when $\NeighSorted \in U$.
  Taking probabilities, $\P{S \cap E_q^\pm = \emptyset} \leq \P{\mc A(G, W) = \mc A(G_q, W)} \leq 2p_f$ for all $q \in Q$.

  Note that $S \subseteq E_W$, since the algorithm can only visit nodes in $V_W$ and hence only query pairs in $E_W$.
  We have $\abs{\{q \in Q \mid S \cap E^\pm_q \neq \emptyset\}} \leq \sum_{e \in S} \abs{\{q \in Q \mid e \in E^\pm_q\}} \leq K\abs{S}$, where the last inequality uses $S \subseteq E_W$.
  Taking expectations yields $\E{\abs{\{q \in Q \mid S \cap E^\pm_q \neq \emptyset\}}} \leq K\E{\abs{S}}$.
  On the other hand, since $\P{S \cap E_q^\pm = \emptyset} \leq 2p_f$ for all $q \in Q$, we have $\E{\abs{\{q \in Q \mid S \cap E^\pm_q \neq \emptyset\}}} = \sum_{q \in Q}\P{S \cap E^\pm_q \neq \emptyset} \geq (1-2p_f)\abs{Q}$.
  Combining these bounds gives $\E{\abs{S}} \geq (1-2p_f)\abs{Q}/K = \Omega(\abs{Q}/K)$.
\end{proof}

To apply~\Cref{swap-lb}, the idea is to construct a graph $G$ together with a large set of swappable quadruples $Q$, chosen so that many quadruples are edge-disjoint.
All of our lower bounds follow by combining \Cref{sp-separation} and \Cref{swap-lb}, together with an appropriate choice of a hard graph instance.

We emphasize that the lower bound holds on the graph $G$.
It is not neccessary to give a random graph, or a different graph depending on the algorithm, as is often the case with arguments based on Yao's principle.
In our model it is not even necessary to randomly permute the labels of $G$, since we only allow querying vertices given in the input (which is $W$ in~\Cref{swap-lb}) or returned by previous queries.

In the following subsections, we prove our lower bounds for each problem: single pair in \Cref{sec:sp-lb}, single source in \Cref{sec:ss-lb}, single target in \Cref{sec:st-lb}, and single node in \Cref{sec:sn-lb}.

\subsection{Single pair lower bounds}\label{sec:sp-lb}

We first combine \Cref{sp-separation} and \Cref{swap-lb} to obtain a general lemma for proving lower bounds for the single pair problem.

\begin{lemma}\label{sp-universal}
  Let $\mc A$ be an algorithm solving the single pair problem with approximation threshold $\delta$ in the adjacency-list model augmented with $U \subseteq \{\Jump, \NeighSorted, \Adj\}$.
  Let $G = (V, E)$ be a graph and $Q \subseteq V^4$ a set of swappable quadruples.
  If $\NeighSorted \in U$, assume that $d(q_1)=d(q_4)$ and $d(q_2)=d(q_3)$ for all $(q_1,q_2,q_3,q_4) \in Q$.
  Let $s$ and $t$ be vertices of $G$ with $\pi_G(s,t)=0$ and $\pi_{G_q}(s,t) > 2c\delta$ for all $q \in Q$.

  If $\Jump \in U$, let $V_W = V$, and otherwise, let $V_W \subseteq V$ be the union of components in $G$ containing $s$ or $t$.
  If $\Adj \in U$, let $E_W = \binom{V_W}{2}$, and otherwise, let $E_W = \binom{V_W}{2} \cap E$.

  Then, the expected query complexity of $\mc A$ on $G$ with source $s$ and target $t$ is $\Omega(\abs{Q}/K)$ where $K = \max_{e \in E_W}\abs{\{q \in Q \mid e \in E^\pm_q\}}$.
\end{lemma}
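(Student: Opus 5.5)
The plan is to reduce the statement directly to \Cref{swap-lb}, using \Cref{sp-separation} to turn the estimation guarantee into a distinguishing guarantee. From $\mc A$ I will build a decision algorithm $\mc A'$. On input a graph $H \in \mc G(G,Q)$ and the list $W=(s,t)$, it runs $\mc A$ with source $s$ and target $t$, obtaining $\hat\pi(s,t)$. It outputs $0$ if $\hat\pi(s,t)\le c\delta$ and $1$ otherwise. If we clip $\hat\pi(s,t)$ to $[0,1]$, the error only decreases, so \Cref{sp-separation} applies. This thresholding costs no queries, so $\mc A'$ has the same query complexity as $\mc A$.

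Next I verify the hypothesis $\P{\mc A'(G,W)=\mc A'(G_q,W)}\le 2p_f$ for every $q\in Q$. On $G$ we have $\pi_G(s,t)=0$, so with probability at least $1-p_f$ the estimate satisfies the accuracy requirement, and then \Cref{sp-separation} gives output $0$. On $G_q$ we have $\pi_{G_q}(s,t)>2c\delta$, so with probability at least $1-p_f$ the output is $1$. The two outputs can coincide only if at least one run fails, so a union bound gives probability at most $2p_f$. Note that this argument does not need any coupling of randomness between the two runs.

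Finally, I check that the remaining objects match those in \Cref{swap-lb}. The degree assumption under $\NeighSorted$ is inherited verbatim. With $W=(s,t)$, the set $V_W$ in \Cref{swap-lb} is $V$ when $\Jump\in U$, and otherwise the union of components of $G$ meeting $\{s,t\}$. This is exactly the $V_W$ of the present statement, and $E_W$ and $K$ then coincide as well. \Cref{swap-lb} therefore yields expected query complexity $\Omega(\abs{Q}/K)$ for $\mc A'$ on $(G,W)$, and hence for $\mc A$ on $G$ with source $s$ and target $t$.

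The only delicate point is making sure the definitions line up, in particular that "input vertices" in the model are exactly $s$ and $t$. Otherwise the argument is routine.
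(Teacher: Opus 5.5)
Your proposal is correct and matches the paper's proof. Both apply \Cref{swap-lb} with $W=(s,t)$ and obtain $\P{\mc A(G,W)=\mc A(G_q,W)}\le 2p_f$ from \Cref{sp-separation} and a union bound over the two failure events. The only difference is that you threshold the estimate at $c\delta$ (and clip it to $[0,1]$) before comparing outputs. The paper instead argues directly that equal raw estimates force $\hat\pi_G(s,t)>c\delta$ or $\hat\pi_{G_q}(s,t)\le c\delta$, so the difference is purely cosmetic.
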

\begin{proof}
  The lower bound follows from~\Cref{swap-lb} with $W = (s,t)$ if we show $\P{\mc A(G, W) = \mc A(G_q, W)} \leq 2p_f$ for all $q \in Q$.
  Let $\hat\pi_G(s,t)$ be the estimate produced by $\mc A$ on $(G, W)$ and $\hat\pi_{G_q}(s,t)$ be the estimate produced by $\mc A$ on $(G_q, W)$.
  Then for each $q \in Q$, the event $\mc A(G, W) = \mc A(G_q, W)$ implies that $\hat\pi_G(s,t)=\hat\pi_{G_q}(s,t)$, so either $\hat\pi_G(s,t) > c\delta$ or $\hat\pi_{G_q}(s,t) \leq c\delta$.
  Both of the latter two events happen with probability at most $p_f$ by~\Cref{sp-separation}, since $\mc A$ solves the single pair problem.
  By a union bound, $\P{\mc A(G, W) = \mc A(G_q, W)} \leq 2p_f$.
\end{proof}

We now prove the worst-case lower bound for the single pair problem in the adjacency-list model with all query types available by applying \Cref{sp-universal} on an explicit hard instance.
The hard instance is the disjoint union of two complete bipartite graphs, and the swap replaces one edge in each component by two crossing edges.
For every swap, $\pi(s,t)$ changes from $0$ in $G$ to at least $2c\delta$ in the swapped graph, so distinguishing whether a swap occurred reduces to estimating $\pi(s,t)$ in the single pair problem.

\begin{theorem}\label{sp-wc-j-s-a}
  Consider the adjacency-list model with $\Jump$, $\NeighSorted$, and $\Adj$.
  For any $n$ and $m$ with $1 \leq n \leq m \leq n^2$ and any $\delta \in (0, 1]$, there exists a graph $G=(V,E)$ with $\Theta(n)$ vertices, $\Theta(m)$ edges, and vertices $s, t \in V$, such that for any algorithm solving the single pair problem, the expected running time on $G$ with source $s$, target $t$, and approximation threshold $\delta$ is $\Omega(\min\{m, (n/\delta)^{1/2}, 1/\delta\})$.
\end{theorem}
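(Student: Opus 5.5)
We apply \Cref{sp-universal} with $U=\{\Jump,\NeighSorted,\Adj\}$. Then $V_W=V$ and $E_W=\binom{V}{2}$, so every vertex pair counts toward $K$. The hard instance $G$ is the disjoint union of two copies of the complete bipartite graph $K_{a,b}$, plus a padding component (described at the end) to reach $\Theta(n)$ vertices and $\Theta(m)$ edges.

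\textbf{Construction.} Call the sides of the two copies $A_1,B_1$ and $A_2,B_2$, with $|A_i|=a$ and $|B_i|=b$. Place the source $s\in A_1$ and the target $t\in A_2$. Since $s$ and $t$ lie in different components, $\pi_G(s,t)=0$. Let $Q$ be the set of all quadruples $(q_1,q_2,q_3,q_4)$ with
\begin{align*}
q_1\in A_1\setminus\{s\},\quad q_2\in B_1,\quad q_3\in B_2,\quad q_4\in A_2\setminus\{t\}.
\end{align*}
Each such quadruple is swappable: $\{q_1,q_2\}$ and $\{q_3,q_4\}$ are edges, while $\{q_1,q_3\}$ and $\{q_2,q_4\}$ join different components and so are non-edges. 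The degree condition of \Cref{sp-universal} also holds, because $d(q_1)=d(q_4)=b$ and $d(q_2)=d(q_3)=a$.

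\textbf{Counting $|Q|$ and $K$.} We have $|Q|=\Theta((ab)^2)$, assuming $a,b\ge 2$; the degenerate case $a=1$ is handled by a minor variant, for example letting $s$ or $t$ play the role of $q_1$ or $q_4$. For $K$, note that a fixed edge of one copy lies in $E^\pm_q$ for $O(ab)$ quadruples, one for each choice of edge in the other copy. A fixed crossing pair $\{q_1,q_3\}$ or $\{q_2,q_4\}$ lies in only $O(ab)$ quadruples as well. Every other pair lies in none. Hence $K=O(ab)$, and \Cref{sp-universal} gives a lower bound of $\Omega(ab)$.

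\textbf{Lower bound on $\pi_{G_q}(s,t)$ (main obstacle).} It remains to show $\pi_{G_q}(s,t)=\Omega(1/(a^2b))$ uniformly over all $q\in Q$. I would exhibit an explicit short path and multiply its step probabilities. From $s$, the walk steps to $q_2$ with probability $(1-\alpha)/b$. It then crosses to $q_4$ with probability $(1-\alpha)/a$, which is valid since $d(q_2)=a$ is unchanged by the swap. From $q_4$ it moves to some vertex of $B_2$ with constant probability, where the only exception is the edge to $q_3$, which was removed. Finally it steps to $t$ with probability $1/a$ and stops with probability $\alpha$. This gives
\begin{align*}
\pi_{G_q}(s,t)\ \ge\ c'\cdot\frac{1}{b}\cdot\frac{1}{a}\cdot\frac{1}{a}\ =\ \frac{c'}{a^2b}
\end{align*}
for a constant $c'$. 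The care needed here is with the removed edges: I must check that removing $\{q_1,q_2\}$ and $\{q_3,q_4\}$ destroys only a constant fraction of the counted paths. This holds whenever $a,b\ge 2$, since $q_4$ retains $b-1$ of its neighbors in $B_2$, and every vertex of $B_2$ other than $q_3$ is still adjacent to $t$.

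\textbf{Parameter choice.} We need $a^2b\le 1/(2c\delta c')$, $a+b=O(n)$, and $ab=O(m)$, and we want to maximize $ab$. There are three regimes.
\begin{itemize}
\item If $1/\delta\le n$ (and $1/\delta\le m$), take $a=\Theta(1)$ and $b=\Theta(1/\delta)$, giving $ab=\Theta(1/\delta)$.
\item If $1/\delta>n$, take $b=\Theta(n)$ and $a=\Theta((1/(\delta n))^{1/2})$, giving $ab=\Theta((n/\delta)^{1/2})$.
\item If $m$ is the binding constraint, shrink the parameters so that $ab=\Theta(m)$. This keeps $a^2b$ within the allowed bound.
\end{itemize}
In every regime $ab=\Theta(\min\{m,(n/\delta)^{1/2},1/\delta\})$, which is the claimed bound.

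\textbf{Padding.} Finally, I pad with an extra disjoint component, for instance a dense bipartite block, to bring the graph to exactly $\Theta(n)$ vertices and $\Theta(m)$ edges. Its pairs belong to no quadruple, so $K$ is unchanged. It is also disconnected from both $s$ and $t$, so it leaves $\pi_G(s,t)$ and $\pi_{G_q}(s,t)$ unaffected.
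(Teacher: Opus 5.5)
Your proposal is correct and is essentially the paper's proof. It uses the same two disjoint copies of $K_{x,y}$ (your $A_1,B_1,B_2,A_2$ are the paper's $A,B,C,D$), the same cross-swap set $Q$ with $|Q|=\Theta(x^2y^2)$ and $K=\Theta(xy)$, the bound $\pi_{G_q}(s,t)=\Omega(1/(x^2y))$, and the same three parameter regimes. The only difference is cosmetic: by excluding $s$ and $t$ from the swapped positions you can use the direct path $s\to q_2\to q_4\to B_2\setminus\{q_3\}\to t$, whereas the paper allows $q_A=s$ or $q_D=t$ and so needs a longer six-step route (also, your constraint should read $a^2b\le c'/(2c\delta)$, not $1/(2c\delta c')$, which is harmless since you choose $a^2b=\Theta(1/\delta)$ with a suitable constant anyway).
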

\begin{proof}
  We assume $c < \frac{1}{16}(1-\alpha)^6\alpha$.
  Let $x$ and $y$ be positive integer parameters to be chosen later.
  Define disjoint sets $A$, $B$, $C$, and $D$ with $\abs{A}=\abs{D}=x$ and $\abs{B}=\abs{C}=y$.
  For disjoint sets $X$ and $Y$, we denote by $K_{X,Y}$ the complete bipartite graph with vertex sex $X\cup Y$ and edge set $\{\{x,y\}\mid x \in X, y\in Y\}$.
  Construct $G=(V,E)$ as the union of the complete bipartite graphs $K_{A,B}$ and $K_{C,D}$.
  Add to the construction an independent subgraph with $n$ vertices and $m$ edges.
  Requiring $\max\{x, y\} \leq n$ and $xy \leq m$, we get a graph with $\Theta(n)$ vertices and $\Theta(m)$ edges.

  \begin{figure}[h]
    \centering
    \begin{tikzpicture}[
  styleN1/.style={circle, draw, inner sep=0, minimum size=18pt, line width=0.5pt},
  styleD1/.style={line width=0.5pt},
  styleD2/.style={line width=0.7pt, densely dashed, color=red},
  styleD3/.style={line width=0.7pt, color=blue},
  >=latex
  ]
  \pgfmathsetmacro{\K}{4};
  \pgfmathsetmacro{\L}{3};
  \pgfmathsetmacro{\labelOffset}{.3}
  \pgfmathsetmacro{\layerOne}{0}
  \pgfmathsetmacro{\layerTwo}{\layerOne-1}
  \pgfmathsetmacro{\layerThree}{\layerTwo-1}
  \pgfmathsetmacro{\layerFour}{\layerThree-1}   
  \pgfmathsetmacro{\swapA}{3};
  \pgfmathsetmacro{\swapB}{2};
  \pgfmathsetmacro{\swapC}{3};
  \pgfmathsetmacro{\swapD}{1};
  \foreach \i in {1,...,\K} {
    \pgfmathsetmacro{\xi}{\i-1-(\K-1) / 2}
    \node[styleN1] (vA\i) at (\xi, \layerOne) {\ifthenelse{\i = 2}{$s$}{}\ifthenelse{\i = \swapA}{$q_A$}{}};
    \node[styleN1] (vD\i) at (\xi, \layerFour) {\ifthenelse{\i = 3}{$t$}{}\ifthenelse{\i = \swapD}{$q_D$}{}};
  }
  \foreach \i in {1,...,\L} {
    \pgfmathsetmacro{\xi}{\i-1-(\L-1) / 2}
    \node[styleN1] (vB\i) at (\xi, \layerTwo) {\ifthenelse{\i = \swapB}{$q_B$}{}};
    \node[styleN1] (vC\i) at (\xi, \layerThree) {\ifthenelse{\i = \swapC}{$q_C$}{}};
  }
  \draw[styleD2] (vA\swapA) -- (vB\swapB);
  \draw[styleD2] (vC\swapC) -- (vD\swapD);
  \foreach \i in {1,...,\K} {
    \foreach \j in {1,...,\L} {
      \ifthenelse{\i = \swapA \AND \j = \swapB}{}{\draw[styleD1] (vA\i) -- (vB\j)};
      \ifthenelse{\i = \swapD \AND \j = \swapC}{}{\draw[styleD1] (vD\i) -- (vC\j)};
    }
  }
  \draw[styleD3] (vA\swapA) to[bend right=15] (vC\swapC);
  \draw[styleD3] (vB\swapB) to[bend left=15] (vD\swapD);
  \node (VA) at (\K/2+\labelOffset, \layerOne) {$A$};
  \node (VB) at (\K/2+\labelOffset, \layerTwo) {$B$};
  \node (VC) at (\K/2+\labelOffset, \layerThree) {$C$};
  \node (VD) at (\K/2+\labelOffset, \layerFour) {$D$};
\end{tikzpicture}
    \caption{Hard instance for detecting a swap for $x=4$, $y=3$, and some $q=(q_A,q_B,q_C,q_D)$.
    With the {\color{red} red} edge pair, $s$ does not reach $t$, but with the {\color{blue} blue} edge pair, $s$ reaches $t$.
    An algorithm has to distinguish between the two cases to satisfy the estimation requirements.
    }
    \label{fig:sp}
  \end{figure}
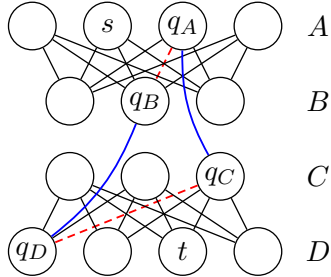

  Fix $s \in A$ and $t \in D$.
  Let $Q = A \times B \times C \times D$, and note that this is a set of swappable quadruples.
  We wish to apply~\Cref{sp-universal} with the above construction of $G$ and choice of $s$, $t$, and $Q$.
  See~\Cref{fig:sp} for an illustration of $G$ and $G_q$.
  Note that the assumption on the degrees of nodes in~\Cref{sp-universal} is satisfied.
  To obtain the lower bound of $\Omega(\abs{Q}/K)$ from~\Cref{sp-universal}, it remains to show, that $\pi_G(s,t)=0$ and $\pi_{G_q}(s,t) \geq 2c\delta$ for all $q \in Q$.
  It is clear that $\pi_G(s,t)=0$, since $s$ and $t$ are in different components of $G$.
  For $q=(q_A,q_B,q_C,q_D)\in Q$, a random walk in $G_q$ starting at $s$ moves from $s$ to $B\setminus\{q_B\}$ with probability $(1-\alpha)(1-1/y)$, then to $q_A$ with probability $(1-\alpha)/x$, then to $q_C$ with probability $(1-\alpha)/y$, then to $D\setminus\{q_D\}$ with probability $(1-\alpha)(1-1/x)$, then to $C\setminus\{q_C\}$ with probability $(1-\alpha)(1-1/y)$, then to $t$ with probability $(1-\alpha)/x$, and then stops with probability $\alpha$.\footnote{We consider this convoluted route, since we might have $q_A=s$ or $q_D=t$.}
  Taking the product, we get $\pi_{G_q}(s,t) \geq (1-\alpha)^6\alpha(1-1/x)(1-1/y)^2/(x^2y) \geq 2c\delta$, if we require that $\min\{x,y\} \geq 2$ and $x^2y\delta \leq 1$ when choosing $x$ and $y$.

  In our setting, $\abs{Q} = x^2y^2$.
  To determine $K = \max_{e \in \binom{V}{2}}\abs{\{q \in Q \mid e \in E_q^\pm\}}$, consider separately edges in $A\times B$, $C \times D$, $A \times C$, and $B \times D$.
  No vertex pair outside these four sets is in $E_q^\pm$ for any $q \in Q$.
  For every $e \in A \times B$ we have $\abs{\{q \in Q \mid e \in E_q^\pm\}} = \abs{C \times D} = xy$, and the same is true for the three remaining sets, so $K = xy$.
  We therefore get a lower bound of $\Omega(\abs{Q}/K) = \Omega(xy)$.

  We are now ready to choose $x$ and $y$, remembering to ensure $2 \leq \min\{x,y\}$, $\max\{x, y\} \leq 2n$, $xy \leq 4m$, and $x^2y\delta \leq 1$.

  \emph{Case 1:} For $0 < \delta \leq \frac{1}{md}$, set $x = \floor{2d}$ and $y = 2n$, giving a lower bound of $\Omega(m)$.

  \emph{Case 2:} For $\frac{1}{md} \leq \delta \leq \frac{1}{n}$, set $x = \floor{2(n\delta)^{-1/2}}$ and $y = 2n$, giving a lower bound of $\Omega((n/\delta)^{1/2})$.

  \emph{Case 3:} For $\frac{1}{n} \leq \delta \leq 1$, set $x = 2$ and $y = \floor{2/\delta}$, giving a lower bound of $\Omega(1/\delta)$.
\end{proof}

We next prove an average-case lower bound by taking many disjoint copies of the hard instance from \Cref{sp-wc-j-s-a}.
For a constant fraction of choices of $s$ and $t$, we can use the same swap argument, giving the following bound for the average over all source-target pairs.
\begin{theorem}\label{sp-ac-j-s-a}
  Consider the adjacency-list model with $\Jump$, $\NeighSorted$, and $\Adj$.
  For any $n$ and $m$ with $n \leq m \leq n^2$ and any $\delta \in (0, 1]$, there exists a graph $G=(V,E)$ with $\Theta(n)$ vertices and $\Theta(m)$ edges, such that for any algorithm solving the single pair problem, the expected running time on $G$ with approximation threshold $\delta$, averaging over all souces $s \in V$ and targets $t \in V$, is $\Omega(\min\{m, (d/\delta)^{1/2}, (1/\delta)^{2/3}\})$.
\end{theorem}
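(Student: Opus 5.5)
The plan is to reuse the two-complete-bipartite gadget of \Cref{sp-wc-j-s-a}, replicated many times, so that a constant fraction of all source–target pairs $(s,t)\in V^2$ are "hard" pairs. Let $x,y\ge 2$ and $k\ge 1$ be integer parameters, chosen later with $x\ge y$. Take $k$ disjoint copies $K_{A_i,B_i}$ and $k$ disjoint copies $K_{C_j,D_j}$, $i,j\in[k]$, with $|A_i|=|D_j|=x$ and $|B_i|=|C_j|=y$. Add, if needed, a dense filler component: a clique on $\Theta(m^{1/2})$ vertices. We scale constants so that it uses at most a small constant fraction of the $\Theta(n)$ vertices; this is possible since $m\le n^2$. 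It supplies the missing edges.

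Since $x\ge y$, the set $\bigcup_i A_i$ and the set $\bigcup_j D_j$ each contain a constant fraction of $V$. Hence a constant fraction of pairs $(s,t)$ have $s\in A_i$ and $t\in D_j$ for some $i,j$. It therefore suffices to show an $\Omega(xy)$ lower bound for every such pair, since this bound then transfers to the average.

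For a fixed hard pair with $s\in A_i$, $t\in D_j$, apply \Cref{sp-universal} with $Q=A_i\times B_i\times C_j\times D_j$. The degree condition for $\NeighSorted$ holds, since $A_i$ and $D_j$ vertices have degree $y$ and $B_i$ and $C_j$ vertices have degree $x$. We have $\pi_G(s,t)=0$ because the two components are different. The same six-step walk as in \Cref{sp-wc-j-s-a} gives $\pi_{G_q}(s,t)\ge(1-\alpha)^6\alpha(1-1/x)(1-1/y)^2/(x^2y)\ge 2c\delta$ provided $x^2y\delta\le 1$. Because $\Jump$ and $\Adj$ are available, $E_W=\binom{V}{2}$. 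However, only pairs inside $A_i\times B_i$, $C_j\times D_j$, $A_i\times C_j$ and $B_i\times D_j$ lie in some $E_q^\pm$, and each such pair lies in exactly $xy$ of them. So $K=xy$ and $|Q|/K=xy$, exactly as before.

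It remains to choose parameters subject to $kx(1+o(1))+ky\lesssim n$, $kxy\lesssim m$, $x\ge y\ge 2$, $k\ge 1$, and $x^2y\delta\le 1$, maximizing $xy$.
\begin{itemize}
\item If $\delta\le 1/d^3$, take $y=\Theta(d)$ and $x=\Theta((d\delta)^{-1/2})$, capped at $n$, with $k=\Theta(n/x)$. Then the gadgets alone give $\Theta(n)$ vertices and $\Theta(m)$ edges. The bound is $xy=(d/\delta)^{1/2}$, or $\Theta(m)$ when the cap $x=n$ binds, which happens when $\delta\lesssim 1/(nm)$.
\item If $\delta>1/d^3$, take $x=y=\Theta(\delta^{-1/3})\le d$ and $k=\Theta(n/x)$. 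The gadgets then have $\Theta(n)$ vertices but only $nx\le m$ edges, so the clique filler provides the remaining $\Theta(m)$ edges without destroying the constant fraction of hard pairs. The bound is $\Omega((1/\delta)^{2/3})$.
\end{itemize}
Degenerate ranges are handled by clamping parameters at $2$, as in Case~3 of the worst-case proof. These are the case $d<2$, and the case $\delta$ so large that $\delta^{-1/3}<2$, where the bound is constant anyway.

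I expect the main obstacle to be this bookkeeping: verifying that the three regimes together yield $\min\{m,(d/\delta)^{1/2},(1/\delta)^{2/3}\}$ everywhere, and that the filler never swamps the hard pairs. The swap argument itself carries over verbatim from \Cref{sp-wc-j-s-a}.
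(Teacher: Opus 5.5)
Your proposal is correct and follows essentially the same route as the paper: $\Theta(n/x)$ disjoint copies of the $K_{A,B}\cup K_{C,D}$ gadget, the swap set $Q=A_{i_s}\times B_{i_s}\times C_{i_t}\times D_{i_t}$ fed into \Cref{sp-universal} with $K=xy$, the resulting $\Omega(xy)$ bound, and the same three parameter regimes split at $\delta\approx 1/(mn)$ and $\delta\approx 1/d^3$. The only difference is cosmetic: you use a clique on $\Theta(m^{1/2})$ vertices as filler, where the paper uses a generic independent subgraph with $n$ vertices and $m$ edges, and this has no effect on the argument.
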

\begin{proof}
  We assume $c < \frac{1}{16}(1-\alpha)^4\alpha$.
  We will construct a graph containing multiple copies of the graph from the proof of~\Cref{sp-wc-j-s-a}.
  Let $x$ and $y$ be positive integer parameters to be chosen later.
  Define disjoint sets $A_i$, $B_i$, $C_i$, and $D_i$ with $\abs{A}=\abs{D}=x$ and $\abs{B}=\abs{C}=y$ for all $i \in \{1,\ldots, \floor{n/x}\}$.
  Construct $G=(V,E)$ as the union of the complete bipartite graphs $K_{A_i,B_i}$ and $K_{C_i,D_i}$ over all $i \in \{1,\ldots, \floor{n/x}\}$.
  Add to the construction an independt subgraph with $n$ vertices and $m$ edges.
  Requiring $y \leq x \leq 2n$ and $y \leq 2d$, we get a graph with $\Theta(n)$ vertices and $\Theta(m)$ edges.

  To obtain the lower bound on the average over all source-target pairs, it suffices to show it for $\Theta(n^2)$ fixed source-target pairs.
  So fix a source $s \in A_{i_s}$ and target $t \in D_{i_t}$ for some $i_s$ and $i_t$, noting that there are $\Theta(n^2)$ possible choices of such source-target pairs.
  Let $Q = A_{i_s} \times B_{i_s} \times C_{i_t} \times D_{i_t}$, and note that this is a set of swappable quadruples.
  We wish to apply~\Cref{sp-universal} with the above construction of $G$ and choice of $s$, $t$, and $Q$.
  For an illustration of $G$ and $G_q$ restricted to $A_{i_s} \cup B_{i_s} \cup C_{i_t} \cup D_{i_t}$,~\Cref{fig:sp} can be used again.
  Exactly as in the proof of~\Cref{sp-wc-j-s-a}, we get $\pi_G(s,t)=0$ and $\pi_{G_q}(s,t) \geq 2c\delta$ for all $q \in Q$, if we require $\min\{x,y\} \geq 2$ and $x^2y\delta \leq 1$, thus giving a lower bound of $\Omega(\abs{Q}/K) = \Omega(xy)$ by~\Cref{sp-universal}.

  We are now ready to choose $x$ and $y$, remembering to ensure $2 \leq y \leq x \leq 2n$, $y \leq 2d$, and $x^2y\delta \leq 1$.

  \emph{Case 1:} For $0 < \delta \leq \frac{1}{mn}$, set $x = 2n$ and $y = \floor{2d}$, giving a lower bound of $\Omega(m)$.

  \emph{Case 2:} For $\frac{1}{mn} \leq \delta \leq \frac{1}{d^3}$, set $x = \floor{2(d\delta)^{-1/2}}$ and $y = \floor{2d}$, giving a lower bound of $\Omega((d/\delta)^{1/2})$.

  \emph{Case 3:} For $\frac{1}{d^3} \leq \delta \leq 1$, set $x = y = \floor{2(1/\delta)^{1/3}}$, giving a lower bound of $\Omega((1/\delta)^{2/3})$.
\end{proof}

Consider the adjacency-list model with $\Jump$ and exactly one of $\NeighSorted$ and $\Adj$.
In this setting, it turns out that we always obtain a lower bound of $\Omega(\min\{1/\delta,d\})$.
The construction again uses disjoint complete bipartite graphs, but the swap replaces an edge incident to $s$ and an edge incident to $t$ so that $s$ and $t$ become adjacent.
Any algorithm must then inspect a constant fraction of the neighbors of $s$ or of $t$ to detect whether the swap occurred, which yields the claimed bound.

\begin{theorem}\label{sp-ac-j-s-xor-a}
  Consider the adjacency-list model with $\Jump$ and either $\NeighSorted$ or $\Adj$, but not both.
  For any $n$ and $m$ with $n \leq m \leq n^2$ and any $\delta \in (0, 1]$, there exists a graph $G=(V,E)$ with $\Theta(n)$ vertices and $\Theta(m)$ edges, such that for any algorithm solving the single pair problem, the expected running time on $G$ with approximation threshold $\delta$, averaging over all souces $s \in V$ and targets $t \in V$, is $\Omega(\min\{m, (d/\delta)^{1/2}, (1/\delta)^{2/3}+d, 1/\delta\})$.
\end{theorem}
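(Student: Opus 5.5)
The bound $\min\{m,(d/\delta)^{1/2},(1/\delta)^{2/3}+d,1/\delta\}$ splits into two parts. \Cref{sp-ac-j-s-a} already gives $\Omega(\min\{m,(d/\delta)^{1/2},(1/\delta)^{2/3}\})$, and that lower bound holds even in the stronger model with all three queries, so it transfers to our model. It therefore suffices to prove an average-case lower bound of $\Omega(\min\{d,1/\delta\})$ in the model with $\Jump$ and exactly one of $\NeighSorted$, $\Adj$. Indeed, up to constants,
\[
\max\{\min\{m,(d/\delta)^{1/2},(1/\delta)^{2/3}\},\ \min\{d,1/\delta\}\} \;\geq\; \min\{m,(d/\delta)^{1/2},(1/\delta)^{2/3}+d,1/\delta\}.
\]
This holds because $d\le m$ and $d\le (d/\delta)^{1/2}$ whenever $d\le 1/\delta$. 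A short case check handles the remaining configuration: if $(1/\delta)^{2/3}\ge d$, the first term already matches $(1/\delta)^{2/3}+d$ up to a factor $2$; otherwise the $d$ term is covered by $\min\{d,1/\delta\}$.

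For the $\Omega(\min\{d,1/\delta\})$ part I would take $k=\min\{\lfloor d\rfloor,\lfloor 1/(C\delta)\rfloor\}$ for a suitable constant $C$ and build $G$ as a disjoint union of $\Theta(n/k)$ copies of $K_{k,k}$, padded with an independent subgraph so that $G$ has $\Theta(n)$ vertices and $\Theta(m)$ edges. All copies are regular of degree $k$, so every vertex in them has the same degree and the $\NeighSorted$ degree condition of \Cref{sp-universal} holds automatically. Fix $s$ and $t$ lying on opposite sides of two different copies; a constant fraction of all pairs is of this form once the copies cover a constant fraction of $V$, and this suffices for the average. Let
\[
Q=\{(s,a,b,t): a\in\mc N(s),\ b\in\mc N(t)\}.
\]
Each such quadruple is swappable, and the swap produces the edges $\{s,b\}$ and $\{a,t\}$.

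I would choose the orientation of the quadruple so that the swap makes $s$ and $t$ adjacent, that is, $E^+_q$ contains $\{s,t\}$ together with one other edge. This is the construction described in the text before the theorem. It requires $q=(s,a,t,b)$ with $E^-_q=\{\{s,a\},\{t,b\}\}$, $E^+_q=\{\{s,t\},\{a,b\}\}$, and $|Q|=k^2$. In $G_q$ we then get $\pi(s,t)\ge (1-\alpha)\alpha/k\ge 2c\delta$ for $C$ large and $c$ small, while $\pi_G(s,t)=0$.

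Now compute $K$. The pair $\{s,t\}$ lies in $E^\pm_q$ for all $k^2$ quadruples, which would make $K=k^2$ and the bound useless. This is the main obstacle, and it is exactly where excluding the combination of both queries enters. Without $\Adj$, the queryable pairs $E_W$ are restricted to actual edges of $G$, and $\{s,t\}\notin E$. The edges $\{s,a\}$ and $\{t,b\}$ each lie in $k$ quadruples, and the pair $\{a,b\}$ is a non-edge, so again not queryable. This gives $K=k$ and $|Q|/K=k$.

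With $\Adj$ but without $\NeighSorted$, the pair $\{s,t\}$ can be queried, so \Cref{swap-lb} does not directly apply with this $Q$. Here I would modify the gadget so that the degree condition fails harmlessly while the decisive pair is no longer a single shared pair. Concretely, I would switch to the orientation $E^+_q=\{\{s,b\},\{a,t\}\}$ and make the copies unbalanced, say $K_{k,k'}$ with $k'$ chosen so that walks $s\to b\to t$ still give $\pi_{G_q}(s,t)=\Omega(1/k)$. I would then verify that each pair of $\binom{V}{2}$ lies in at most $O(k)$ quadruples: $\{s,b\}$ lies in $k$ of them (one per $a$), and similarly for $\{a,t\}$.

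I expect this case split, and checking that \Cref{sp-universal} applies with the right $E_W$ in each model, to be the delicate step. The rest is \Cref{sp-universal} with $|Q|/K=\Omega(k)=\Omega(\min\{d,1/\delta\})$, averaged over the $\Theta(n^2)$ admissible pairs.
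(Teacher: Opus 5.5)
Your reduction to an $\Omega(\min\{d,1/\delta\})$ bound via \Cref{sp-ac-j-s-a} is the same as the paper's. So is your treatment of the model with $\Jump$ and $\NeighSorted$: disjoint $K_{k,k}$'s, $q=(s,a,t,b)$ so that the swap creates $\{s,t\}$ and $\{a,b\}$, and $K=k$ because neither new pair is an edge of $G$.

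The gap is the model with $\Jump$ and $\Adj$. Your reoriented swap $q=(s,a,b,t)$ deletes $\{b,t\}$, so $s\to b\to t$ is not a walk in $G_q$. Let $X_s\ni s$ and $X_t\ni t$ be the sides of the respective copies. A walk from $s$ reaches $t$ only in one of two ways:
\begin{itemize}
\item it crosses $s\to b$ and must then still land on $t$ among the $|X_t|$ vertices of its side;
\item it reaches $a$ and steps to $t$ with probability $1/d(a)=1/|X_s|$.
\end{itemize}
Hence $\pi_{G_q}(s,t)=\Theta\bigl(\frac{1}{d(s)}(\frac{1}{|X_s|}+\frac{1}{|X_t|})\bigr)$. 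For a constant fraction of all pairs $(s,t)$ to be of this form, these sides must contain a constant fraction of $V$. That forces $|X_s|=\Omega(d(s))$ and $|X_t|=\Omega(d(t))$, so $\pi_{G_q}(s,t)=O(1/\min\{d(s),d(t)\}^2)$, while $|Q|/K=\min\{d(s),d(t)\}$. The requirement $\pi_{G_q}(s,t)>2c\delta$ then caps your bound at $\Omega(\min\{d,\delta^{-1/2}\})$. Making the sides containing $s,t$ of constant size instead restores $\Omega(1/k)$, but leaves only $O(n/k)$ admissible sources, which destroys the average. No unbalanced choice of $K_{k,k'}$ repairs this.

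The missing idea is to keep the orientation that joins $s$ and $t$, but hide the shared pair. Add two fresh vertices $w_1,w_2$, isolated in $G$. In every $G_q$, replace the new edges $\{s,t\}$ and $\{a,b\}$ by the paths $s\,w_1\,t$ and $a\,w_2\,b$. This has three effects:
\begin{itemize}
\item the degrees of $s,t,a,b$ are unchanged;
\item $\Adj(s,t)$ now answers ``no'' in both $G$ and $G_q$;
\item $\pi_{G_q}(s,t)\geq (1-\alpha)^2\alpha/(2k)$.
\end{itemize}
The pairs $\{s,w_1\}$ and $\{w_1,t\}$ do lie in all $k^2$ modified sets $E^\pm_q$. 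However, the algorithm can learn $w_1,w_2$ only through an $E^-_q$-edge, which already counts as a detection, or through $\Jump$, which needs $\Omega(n)$ expected queries. The target bound is at most $d\le n$, so you may assume the algorithm runs in $o(n)$ time and, with constant probability, never sees these vertices. You can then drop them from $V_W$ in \Cref{swap-lb}, which recovers $K=k$ and the $\Omega(\min\{d,1/\delta\})$ bound.
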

\begin{proof}
  We assume again $c \leq \frac14(1-\alpha)^4\alpha$.
  Following~\Cref{sp-ac-j-s-a}, it suffices to show a lower bound of $\Omega(\min\{d, 1/\delta\})$ in this setting.

  Let us first consider the ajacency-list model with $\Jump$ and $\NeighSorted$.
  Take the construction of the graph $G$ from~\Cref{sp-ac-j-s-a}, putting $y=x$, so we only have one parameter $x$, to be chosen later.
  This construction requires $x \leq d$ to give a graph with $\Theta(n)$ vertices and $\Theta(m)$ edges.
  Again, it suffices to show the lower bound for $\Theta(n^2)$ source-vertex pairs, so fix $s \in A_{i_s}$ and $t \in D_{i_t}$ for some $i_s$ and $i_t$.
  Now choose $Q = \{s\} \times B \times \{t\} \times C$, noting that this is a set of swappable quadruples.
  See~\Cref{fig:sp-direct} for an illustration of $G$ and $G_q$ for some $q \in Q$.

  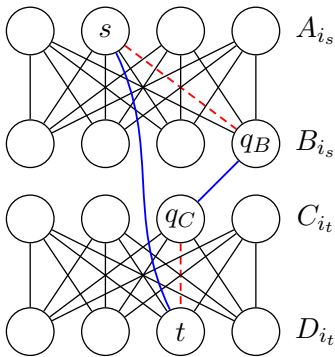
\begin{figure}[h]
    \centering
    \begin{tikzpicture}[
  styleN1/.style={circle, draw, inner sep=0, minimum size=18pt, line width=0.5pt},
  styleD1/.style={line width=0.5pt},
  styleD2/.style={line width=0.7pt, densely dashed, color=red},
  styleD3/.style={line width=0.7pt, color=blue},
  >=latex
  ]
  \pgfmathsetmacro{\K}{4};
  \pgfmathsetmacro{\L}{4};
  \pgfmathsetmacro{\labelOffset}{.3}
  \pgfmathsetmacro{\layerOne}{0}
  \pgfmathsetmacro{\layerTwo}{\layerOne-1.5}
  \pgfmathsetmacro{\layerThree}{\layerTwo-1}
  \pgfmathsetmacro{\layerFour}{\layerThree-1.5}
  \pgfmathsetmacro{\swapA}{2};
  \pgfmathsetmacro{\swapB}{4};
  \pgfmathsetmacro{\swapC}{3};
  \pgfmathsetmacro{\swapD}{3};
  \foreach \i in {1,...,\K} {
    \pgfmathsetmacro{\xi}{\i-1-(\K-1) / 2}
    \node[styleN1] (vA\i) at (\xi, \layerOne) {\ifthenelse{\i = \swapA}{$s$}{}};
    \node[styleN1] (vC\i) at (\xi, \layerThree) {\ifthenelse{\i = \swapC}{$q_C$}{}};
    \node[styleN1] (vD\i) at (\xi, \layerFour) {\ifthenelse{\i = \swapD}{$t$}{}};
  }
  \foreach \i in {1,...,\L} {
    \pgfmathsetmacro{\xi}{\i-1-(\L-1) / 2}
    \node[styleN1] (vB\i) at (\xi, \layerTwo) {\ifthenelse{\i = \swapB}{$q_B$}{}};
  }
  \draw[styleD2] (vA\swapA) -- (vB\swapB);
  \draw[styleD2] (vC\swapC) -- (vD\swapD);
  \foreach \i in {1,...,\K} {
    \foreach \j in {1,...,\L} {
      \ifthenelse{\i = \swapA \AND \j = \swapB}{}{\draw[styleD1] (vA\i) -- (vB\j)};
    }
  }
  \foreach \i in {1,...,\K} {
    \foreach \j in {1,...,\K} {
      \ifthenelse{\i = \swapD \AND \j = \swapC}{}{\draw[styleD1] (vD\i) -- (vC\j)};
    }
  }
  \draw[styleD3] (vA\swapA) to[out=-64,in=116] (vD\swapD);
  \draw[styleD3] (vB\swapB) to[bend left=0] (vC\swapC);
  \node (VA) at (\K/2+\labelOffset, \layerOne) {$A_{i_s}$};
  \node (VB) at (\K/2+\labelOffset, \layerTwo) {$B_{i_s}$};
  \node (VC) at (\K/2+\labelOffset, \layerThree) {$C_{i_t}$};
  \node (VD) at (\K/2+\labelOffset, \layerFour) {$D_{i_t}$};
\end{tikzpicture}
    \caption{Hard instance for detecting a swap for $x=4$ and some $q=(s,q_B,t, q_C)$.}
    \label{fig:sp-direct}
  \end{figure}

  It is clear that $\pi_G(s,t) = 0$ and that we get $\pi_{G_q}(s,t) \geq \frac{(1-\alpha)\alpha}{x} \geq 2c\delta$ for all $q \in Q$, if we require $x\delta \leq 1$ when choosing $x$.
  Note that the assumption on degrees in~\Cref{sp-universal} is satisfied.
  Now,~\Cref{sp-universal} gives us a lower bound of~$\Omega(\abs{Q}/K)$.
  Since we currently assume that the model does not contain $\Adj$, we have $E_W = E$, (where $E$ denotes the edge set of $G$, not $G_q$) so $K = \max_{e \in E}\abs{\{q \in Q \mid e \cap E_q^\pm \neq \emptyset\}}$.
  This maximum is non-zero only for edges $e$ between $s$ and $B_{i_s}$ and between $C_{i_t}$ and $t$, so $K = \max\{\abs{C_{i_t}}, \abs{B_{i_s}}\} = x$.
  Since $Q = x^2$, we get a lower bound of $\Omega(x)$.
  We finish the proof for this model by choosing $x = \min\{d, 1/\delta\}$, and noting that $x \leq d$ and $x\delta \leq 1$ as required.

  Let us now consider the adjacency-list model with $\Jump$ and $\Adj$.
  We add two independent vertices to $G$ and for each $q = (s, q_B, t, q_C) \in Q$, we subdivide the edges $\{s, t\}$ and $\{q_B, q_C\}$ in $G_q$ using these two new vertices.
  We still have $\pi_G(s,t) = 0$ and $\pi_{G_q}(s,t) \geq \frac{(1-\alpha)^2\alpha}{2x} \geq 2c\delta$ for all $q \in Q$, if we require $x\delta \leq 1$ when choosing $x$.
  The proof of~\Cref{swap-lb} and thus~\Cref{sp-universal} carries through analogously with these subdivisions.
  The algorithm can find one of the two new vertices using $\Jump$, but this requires $\Omega(n)$ queries in expectation, which is more than the lower bound we wish to prove.
  So if we assume that the algorithm runs in time $o(n)$ on $G$ in expectation, we know that it with constant probability does not visists the two new vertices.
  When determining $K$ in~\Cref{swap-lb}, we can thus remove the two new vertices from $V_W$.
  When doing so, the halves of the subdivided edges do not appear in $E_W$, so we again get $K = x$, even though the model includes $\Adj$.
  We finish as above by setting $x = \min\{d, 1/\delta\}$.
\end{proof}

\subsection{Single source lower bounds}\label{sec:ss-lb}

We first combine \Cref{sp-separation} and \Cref{swap-lb} to obtain a general lemma for proving lower bounds for the single source problem.

\begin{lemma}\label{ss-universal}
  Let $\mc A$ be an algorithm solving the single source problem with approximation threshold $\delta$ in the adjacency-list model augmented with $U \subseteq \{\Jump, \NeighSorted, \Adj\}$.
  Let $G = (V, E)$ be a graph and $Q \subseteq V^4$ a set of swappable quadruples.
  If $\NeighSorted \in U$, assume that $d(q_1)=d(q_4)$ and $d(q_2)=d(q_3)$ for all $(q_1,q_2,q_3,q_4) \in Q$.
  Let $s \in V$ be a vertex such that for all $q \in Q$, there exists a vertex $t \in V$ with $\pi_G(s,t)=0$ and $\pi_{G_q}(s,t) > 2c\delta$.

  If $\Jump \in U$, let $V_W = V$, and otherwise, let $V_W \subseteq V$ be the union of components in $G$ containing $s$.
  If $\Adj \in U$, let $E_W = \binom{V_W}{2}$, and otherwise, let $E_W = \binom{V_W}{2} \cap E$.

  Let $K = \max_{e \in E_W}\abs{\{q \in Q \mid e \in E^\pm_q\}}$.
  Then, the expected query complexity of $\mc A$ on $G$ with source $s$ is $\Omega(\abs{Q}/K)$.
\end{lemma}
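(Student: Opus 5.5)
The proof mirrors that of \Cref{sp-universal}. We apply \Cref{swap-lb} with the input list $W=(s)$. With this choice, the definitions of $V_W$ and $E_W$ in \Cref{swap-lb} coincide with those in the statement: $V_W$ is the union of components of $G$ containing $s$, or all of $V$ when $\Jump\in U$. The degree assumption needed when $\NeighSorted\in U$ is also exactly the hypothesis we are given. So the only thing left to verify is the hypothesis of \Cref{swap-lb}, namely
\[
\P{\mc A(G,W)=\mc A(G_q,W)}\leq 2p_f \quad \text{for every } q\in Q .
\]
Once this holds, \Cref{swap-lb} yields the claimed bound of $\Omega(\abs{Q}/K)$ directly.

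To verify it, fix $q\in Q$ and let $t=t_q$ be a vertex guaranteed by the hypothesis, so that $\pi_G(s,t)=0$ and $\pi_{G_q}(s,t)>2c\delta$. The output of a single source algorithm is the whole collection of estimates $\{\hat\pi(s,u)\}_{u\in V}$. Hence the event $\mc A(G,W)=\mc A(G_q,W)$ implies in particular that $\hat\pi_G(s,t)=\hat\pi_{G_q}(s,t)$ for this specific $t$. By \Cref{sp-separation}, a correct estimate on $G$ satisfies $\hat\pi_G(s,t)\le c\delta$, while a correct estimate on $G_q$ satisfies $\hat\pi_{G_q}(s,t)>c\delta$. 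Therefore equality of the two outputs forces either $\hat\pi_G(s,t)>c\delta$ or $\hat\pi_{G_q}(s,t)\le c\delta$, and each of these is a failure of the estimation requirement \Cref{eq:estimation-requirement} at target $t$.

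Because $\mc A$ solves the single source problem, \Cref{eq:estimation-requirement} holds for every target $t$ separately, each failing with probability at most $p_f$. This applies in particular to $t_q$, so each of the two failure events above has probability at most $p_f$, and a union bound gives the required $2p_f$. It does not matter that the witness $t$ depends on $q$: the requirement is per target, and we only ever look at a single target for each $q$.

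The main thing to get right is that the output on $G$ and on $G_q$ is compared coordinatewise. Equality of the full outputs implies equality at $t_q$, even though the algorithm is given only $s$ and never receives $t$ as input. That is exactly why $W$ contains only $s$, and consequently why $V_W$ is defined through the component of $s$ alone.
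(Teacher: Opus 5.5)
Your proposal is correct and follows essentially the same route as the paper: apply \Cref{swap-lb} with $W=(s)$, choose for each $q$ a witness target $t_q$, and use \Cref{sp-separation} together with a union bound to get $\P{\mc A(G,W)=\mc A(G_q,W)}\le 2p_f$. Your remarks that the witness may depend on $q$ and that the outputs are compared coordinatewise at $t_q$ are exactly the points the paper's proof relies on.
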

\begin{proof}
  The lower bound follows from~\Cref{swap-lb} with $W = (s)$ if we show $\P{\mc A(G, W) = \mc A(G_q, W)} \allowbreak\leq 2p_f$ for all $q \in Q$.
  For each $q \in Q$, let $t_q \in V$ be a vertex such that $\pi_G(s,t_q)=0$ and $\pi_{G_q}(s,t_q)>2c\delta$, which exists by assumption.
  For each $H \in \mc G(G, Q)$ and $q \in Q$, let $\hat\pi_H(s,t_q)$ be the estimate of $\pi_H(s,t_q)$ produced by $\mc A$ on $(H, W)$.
  Then for each $q \in Q$, the event $\mc A(G, W) = \mc A(G_q, W)$ implies that $\hat\pi_G(s,t_q)=\hat\pi_{G_q}(s,t_q)$, so either $\hat\pi_G(s,t_q) > c\delta$ or $\hat\pi_{G_q}(s,t_q) \leq c\delta$.
  Both of the latter two events happen with probability at most $p_f$ by~\Cref{sp-separation}, since $\mc A$ solves the single source problem.
  By a union bound, $\P{\mc A(G, W) = \mc A(G_q, W)} \leq 2p_f$.
\end{proof}

We now prove the lower bound for the single source problem in the adjacency-list model with all query types available by applying \Cref{ss-universal} to the hard instance used in the proof of \Cref{sp-ac-j-s-a}.
This gives the following average-case lower bound, which is tight even for worst-case analysis.

\begin{theorem}\label{ss-ac-j-s-a}
  Consider the adjacency-list model with $\Jump$, $\NeighSorted$, and $\Adj$.
  For any $n$ and $m$ with $n \leq m \leq n^2$ and any $\delta \in (0, 1]$, there exists a graph $G=(V,E)$ with $\Theta(n)$ vertices and $\Theta(m)$ edges, such that for any algorithm solving the single source problem, the expected running time on $G$ with approximation threshold $\delta$, averaging over all souces $s \in V$, is $\Omega(\min\{m, 1/\delta\})$.
\end{theorem}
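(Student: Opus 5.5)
We apply \Cref{ss-universal} to the disjoint-copies construction from the proof of \Cref{sp-ac-j-s-a}. The graph $G$ consists of $\floor{n/x}$ copies of $K_{A_i,B_i}\cup K_{C_i,D_i}$ with $\abs{A_i}=\abs{D_i}=x$ and $\abs{B_i}=\abs{C_i}=y$, plus an independent subgraph with $n$ vertices and $m$ edges. As in that proof, it suffices to prove the bound for a constant fraction of sources. So we fix $s\in A_{i_s}$ for some $i_s$; there are $\Theta(n)$ such sources once $y\le x$.

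The main difference from the single pair setting is that the target is no longer fixed. The quadruple set can therefore range over all pairs of copies. We take
\[
Q=\bigcup_{j\neq i_s} A_{i_s}\times B_{i_s}\times C_j\times D_j .
\]
For each $q=(q_A,q_B,q_C,q_D)\in Q$ we must exhibit a witness target $t_q$ with $\pi_G(s,t_q)=0$ and $\pi_{G_q}(s,t_q)>2c\delta$. We choose $t_q\in D_j$ with $t_q\neq q_D$, which is possible since $x\ge 2$. Then $\pi_G(s,t_q)=0$ because $s$ and $t_q$ lie in different components. In $G_q$ we use the same six-step route as in \Cref{sp-wc-j-s-a}, which gives $\pi_{G_q}(s,t_q)\gtrsim (1-\alpha)^6\alpha/(x^2y)$. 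It would be nicer to find a shorter route that avoids the factor $x$ from landing on $t_q$ specifically. However, the single source witness may be chosen freely, and even the crude bound suffices once we set $x$ and $y$ as below. The degree condition needed when $\NeighSorted\in U$ holds, since $d(q_A)=d(q_D)=y$ and $d(q_B)=d(q_C)=x$.

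Next we count. We have $\abs{Q}=\Theta((n/x)\,x^2y^2)$. For $K$, consider how many quadruples a single vertex pair can touch. An edge in $A_{i_s}\times B_{i_s}$ or a pair in $A_{i_s}\times C_j$ lies in about $(n/x)\,xy$ or $xy$ quadruples respectively. The worst case is an edge inside $A_{i_s}\times B_{i_s}$, which gives $K=\Theta(nxy/x)=\Theta(ny)$. Pairs in $C_j\times D_j$ or $B_{i_s}\times D_j$ touch only $xy$ or $x$ quadruples. This yields $\abs{Q}/K=\Theta(xy)$, which is no better than the single pair bound. This is the main obstacle: the single pair choice of $s$'s side wastes the freedom over targets. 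To exploit that freedom we instead fix $q_A=s$ and shrink the source side, taking $Q=\{s\}\times B_{i_s}\times\bigcup_j (C_j\times D_j)$. Then $\abs{Q}=\Theta(y\cdot(n/x)\cdot xy)=\Theta(ny^2)$. Edges at $s$ lie in $\Theta(ny)$ quadruples, so again $K=\Theta(ny)$ and the ratio is only $y$. We must therefore avoid concentrating every swap on $s$'s few edges.

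The fix is to use the whole component containing $s$ as the source side and to route through many targets, so that every edge is shared by few swaps. We set $x=y=2$, so each copy is a tiny $4$-cycle-like pair, and take $Q$ to be all quadruples with $(q_A,q_B)$ an edge of $s$'s component and $(q_C,q_D)$ an edge of some other copy. Then $\abs{Q}=\Theta(n)$, $K=\Theta(1)$ on the source side, and $K=\Theta(1)$ per other copy. Hence $\abs{Q}/K=\Theta(n)$ when $\delta$ is constant. To interpolate, we take copies with $x$ constant and $y\approx\min\{2d,1/\delta\}$ in the style of the folklore bound, and count $\Theta(\min\{n,1/\delta\}\cdot y)$ swaps against $K=O(y)$. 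The remaining cases are then finished as in the earlier theorems. Case 1 is $\delta\le 1/m$, where we choose parameters giving $\Omega(m)$. Case 2 is $\delta\ge 1/m$, where we aim for $\Omega(1/\delta)$ by using about $1/(\delta y)$ copies as targets. In both cases we verify the constraints $\pi_{G_q}(s,t_q)\ge 2c\delta$ via the constant-length route, and check the vertex and edge counts.
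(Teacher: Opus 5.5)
\textbf{The proposal has a genuine gap.} You use the right framework (\Cref{ss-universal} on the disjoint-copies graph from \Cref{sp-ac-j-s-a}), but you locate the obstacle in the wrong place.

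\emph{The bottleneck is the witness target, not $|Q|/K$.} In your first attempt you take $t_q\in D_j\setminus\{q_D\}$ and use the six-step route. This only gives $\pi_{G_q}(s,t_q)\gtrsim 1/(x^2y)$, i.e.\ the constraint $x^2y\delta\lesssim 1$. Combined with $y\le x$ and $y\le 2d$, this caps $xy$ at $O(\min\{(d/\delta)^{1/2},\delta^{-2/3}\})$, which is exactly the single-pair bound. So your claim that ``even the crude bound suffices'' is false.

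The missing idea is to use the freedom in the witness: take $t_q=q_C$. After the swap, $q_C$ is adjacent to $q_A$, and the walk $s\to B_{i_s}\setminus\{q_B\}\to q_A\to q_C$ gives $\pi_{G_q}(s,q_C)\ge(1-\alpha)^3\alpha(1-1/y)/(xy)$. The constraint therefore becomes only $xy\delta\le 1$. With this witness, the paper's simple choice $Q=A_{i_s}\times B_{i_s}\times C_1\times D_1$ (or your first $Q$) already gives $|Q|/K=\Theta(xy)$. The paper then finishes with $(x,y)=(2n,\lfloor 2d\rfloor)$, $(2n,\lfloor 2/(n\delta)\rfloor)$, $(\lfloor 2/\delta\rfloor,2)$ in the regimes $\delta\le 1/m$, $1/m\le\delta\le 1/n$, $1/n\le\delta\le 1$.

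\emph{Your subsequent ``fix'' is unsound.} Every $q\in Q$ has $\{q_1,q_2\}\in E^-_q$, and here $E_W=\binom{V}{2}$. Hence
$|Q|\le K\cdot|\{\{q_1,q_2\}: q\in Q\}|$.
In your constructions $\{q_1,q_2\}$ always lies in $s$'s component, so $|Q|/K$ is at most the number of edges of $s$'s component that the swaps use.
\begin{itemize}
\item With $x=y=2$, that component has four edges, each lying in $\Theta(n)$ quadruples. So $K=\Theta(n)$, not $\Theta(1)$, and $|Q|/K=O(1)$.
\item With $x$ constant and $y\approx\min\{2d,1/\delta\}$, you get $K=\Omega(\min\{n,1/\delta\})$ rather than $O(y)$. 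The bound is then at most $O(\min\{d,1/\delta\})$. This fails, e.g., for $m=\Theta(n)$ and $\delta=1/n$, where $\Omega(n)$ is required. Moreover, with $\lfloor n/x\rfloor$ copies, $y\gg x$ also blows up the vertex count.
\end{itemize}
The source component itself must carry $\Theta(\min\{m,1/\delta\})$ edges. This is exactly what choosing $xy\approx\min\{m,1/\delta\}$ together with the three-step witness provides.
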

\begin{proof}
  We assume $c \leq \frac14(1-\alpha)^3\alpha$.
  Take the construction of $G$ from the proof of~\Cref{sp-ac-j-s-a}.
  It suffices to show the lower bound for $\Theta(n)$ fixed sources, so fix $s \in A_{i_s}$ for some $i_s$.
  Choose $Q = A_{i_s} \times B_{i_s} \times C_1 \times D_1$, noting that this is a set of swappable quadruples.
  For each $q = (q_A, q_B, q_C, q_D) \in Q$, write $t_q = q_C$, and note that a random walk in $G_q$ starting at $s$ moves from $s$ to $B\setminus\{q_B\}$ with probability $(1-\alpha)(1-1/y)$, then to $q_A$ with probability $(1-\alpha)/x$, then to $t_q=q_C$ with probability $(1-\alpha)/y$, and then stops with probability $\alpha$.
  Then $\pi_G(s, t_q) = 0$ and $\pi_{G_q}(s, t_q) \geq (1-\alpha)^3\alpha(1-1/y)/(xy) \geq 2c\delta$ for all $q \in Q$ if we require $y \geq 2$ and $xy\delta \leq 1$ when choosing $x$ and $y$.
  By~\Cref{ss-universal}, we get a lower bound of $\Omega(\abs{Q}/K)$ where $K = \max_{e \in \binom V2}\abs{\{q \in Q \mid e \in E_q^\pm\}}$.
  For $e \in A_{i_s} \times B_{i_s}$, we have $\abs{\{q \in Q \mid e \in E_q^\pm\}} = \abs{C_1 \times D_1} = xy$ and similarly for $e \in (C_1 \times D_1) \cup (A_{i_s} \times C_1) \cup (B_{i_s} \times D_1)$.
  All other values of $e$ give $\abs{\{q \in Q \mid e \in E_q^\pm\}}=0$, so $K = xy$.
  Since $\abs{Q} = x^2y^2$, we get a lower bound of $\Omega(xy)$.

  We are now ready to choose $x$ and $y$, remembering to ensure $y \leq x \leq 2n$, $y \leq 2d$, and $xy\delta \leq 1$.

  \emph{Case 1:} For $0 < \delta \leq \frac{1}{m}$, set $x = 2n$ and $y = \floor{2d}$, giving a lower bound of $\Omega(m)$.

  \emph{Case 2:} For $\frac{1}{m} \leq \delta \leq \frac{1}{n}$, set $x = 2n$ and $y = \floor{2/(n\delta)}$, giving a lower bound of $\Omega(1/\delta)$.

  \emph{Case 3:} For $\frac{1}{n} \leq \delta \leq 1$, set $x = \floor{2/\delta}$ and $y = 2$, giving a lower bound of $\Omega(1/\delta)$.
\end{proof}

\subsection{Single target lower bounds}\label{sec:st-lb}

We first combine \Cref{sp-separation} and \Cref{swap-lb} to obtain a general lemma for proving lower bounds for the single target problem.

\begin{lemma}\label{st-universal}
  Let $\mc A$ be an algorithm solving the single target problem with approximation threshold $\delta$ in the adjacency-list model augmented with $U \subseteq \{\Jump, \NeighSorted, \Adj\}$.
  Let $G = (V, E)$ be a graph and $Q \subseteq V^4$ a set of swappable quadruples.
  If $\NeighSorted \in U$, assume that $d(q_1)=d(q_4)$ and $d(q_2)=d(q_3)$ for all $(q_1,q_2,q_3,q_4) \in Q$.
  Let $t \in V$ be a vertex such that for all $q \in Q$, there exists a vertex $s \in V$ with $\pi_G(s,t)=0$ and $\pi_{G_q}(s,t) > 2c\delta$.

  If $\Jump \in U$, let $V_W = V$, and otherwise, let $V_W \subseteq V$ be the union of components in $G$ containing $t$.
  If $\Adj \in U$, let $E_W = \binom{V_W}{2}$, and otherwise, let $E_W = \binom{V_W}{2} \cap E$.

  Let $K = \max_{e \in E_W}\abs{\{q \in Q \mid e \in E^\pm_q\}}$.
  Then, the expected query complexity of $\mc A$ on $G$ with target $t$ is $\Omega(\abs{Q}/K)$.
\end{lemma}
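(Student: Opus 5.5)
The plan is to mirror the proofs of \Cref{sp-universal} and \Cref{ss-universal}: the claim follows from \Cref{swap-lb} applied with the input list $W = (t)$. With this choice, the definitions of $V_W$ and $E_W$ in \Cref{swap-lb} coincide with those in the statement. If $\Jump \in U$, then $V_W = V$. Otherwise $V_W$ is the union of the components of $G$ containing $t$. The degree assumption for $\NeighSorted$ is also carried over verbatim. It therefore only remains to verify the hypothesis of \Cref{swap-lb}, namely $\P{\mc A(G, W) = \mc A(G_q, W)} \leq 2p_f$ for every $q \in Q$. The conclusion $\Omega(\abs{Q}/K)$ is then immediate.

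To verify the hypothesis, fix $q \in Q$. By assumption, we may pick a source $s_q \in V$ with $\pi_G(s_q,t) = 0$ and $\pi_{G_q}(s_q,t) > 2c\delta$. For each $H \in \mc G(G,Q)$, let $\hat\pi_H(s_q,t)$ denote the estimate for the source $s_q$ that $\mc A$ outputs on $(H, W)$. Since $\mc A$ solves the single target problem, this estimate satisfies \Cref{eq:estimation-requirement} with probability at least $1-p_f$, for every source and in particular for $s_q$.

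The event $\mc A(G,W) = \mc A(G_q,W)$ forces $\hat\pi_G(s_q,t) = \hat\pi_{G_q}(s_q,t)$. Hence on this event either $\hat\pi_G(s_q,t) > c\delta$ or $\hat\pi_{G_q}(s_q,t) \leq c\delta$. By \Cref{sp-separation}, applied to $G$ where $\pi_G(s_q,t) = 0$, and to $G_q$ where $\pi_{G_q}(s_q,t) > 2c\delta$, each of these two events occurs only when the corresponding estimate violates the accuracy requirement. Each therefore has probability at most $p_f$, and a union bound gives the required $2p_f$.

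There is no real obstacle here, but one point needs a check. The output of $\mc A$ is indexed by sources, which are vertex identities, and the vertex set $V$ is shared by $G$ and $G_q$. So it is meaningful to compare the estimates for the same source $s_q$ across the two runs. We also need $s_q$ to be a legitimate coordinate of the output even if $\mc A$ never visits it. This is consistent with the paper's convention that unreported values default to $0$.
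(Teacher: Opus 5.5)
Your proposal is correct and matches the paper's proof: apply \Cref{swap-lb} with $W=(t)$, pick for each $q\in Q$ a source $s_q$ with $\pi_G(s_q,t)=0$ and $\pi_{G_q}(s_q,t)>2c\delta$, and combine \Cref{sp-separation} with a union bound to get $\P{\mc A(G,W)=\mc A(G_q,W)}\le 2p_f$. Your remark that the estimates for the same source $s_q$ can be compared across the two runs, because $G$ and $G_q$ share the vertex set, is a reasonable clarification that the paper leaves implicit.
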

\begin{proof}
  The lower bound follows from~\Cref{swap-lb} with $W = (t)$ if we show $\P{\mc A(G, W) = \mc A(G_q, W)} \leq 2p_f$ for all $q \in Q$.
  For each $q \in Q$, let $s_q \in V$ be a vertex such that $\pi_G(s_q,t)=0$ and $\pi_{G_q}(s_q,t)>2c\delta$, which exists by assumption.
  For each $H \in \mc G(G, Q)$ and $q \in Q$, let $\hat\pi_H(s_q,t)$ be the estimate of $\pi_H(s_q,t)$ produced by $\mc A$ on $(H, W)$.
  Then for each $q \in Q$, the event $\mc A(G, W) = \mc A(G_q, W)$ implies that $\hat\pi_G(s_q,t)=\hat\pi_{G_q}(s_q,t)$, so either $\hat\pi_G(s_q,t) > c\delta$ or $\hat\pi_{G_q}(s_q,t) \leq c\delta$.
  Both of the latter two events happen with probability at most $p_f$ by~\Cref{sp-separation}, since $\mc A$ solves the single target problem.
  By a union bound, $\P{\mc A(G, W) = \mc A(G_q, W)} \leq 2p_f$.
\end{proof}

We now prove worst-case lower bounds for the single target problem by applying \Cref{st-universal} to explicit hard instances.
We start with the $\Adj$-only model.

\begin{theorem}\label{st-wc-a}
  Consider the adjacency-list model with $\Adj$.
  For any $n$ and $m$ with $1 \leq n \leq m \leq n^2$ and any $\delta \in (0,1]$, there exists a graph $G = (V,E)$ with $\Theta(n)$ vertices, $\Theta(m)$ edges, and a vertex $t \in V$, such that for any algorithm solving the single target problem, the expected running time on $G$ with target $t$ and approximation threshold $\delta$ is $\Omega(\min\{m, n/\delta\})$.
\end{theorem}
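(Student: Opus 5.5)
We apply \Cref{st-universal} with $U=\{\Adj\}$. Since $\Jump\notin U$, the set $V_W$ is only the component of $t$, and $E_W=\binom{V_W}{2}$. The plan is to put $t$ in a dense complete bipartite component and to hide, outside that component, a source that gets connected to it by a swap. Every swap replaces one edge inside $t$'s component and one edge outside it by two edges crossing between the components. Crossing pairs are never in $E_W$, because one endpoint lies outside $V_W$. So even with $\Adj$, each pair in $E_W$ lies in $E^\pm_q$ for very few $q$.

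\emph{Construction.} Take disjoint sets $A,B$ with $\abs{A}=x$, $\abs{B}=y$, and form $K_{A,B}$. Fix $t\in A$. Add a separate single edge $\{c,c'\}$. Add an independent padding subgraph with $n$ vertices and $m$ edges. Requiring $x,y\le 2n$ and $xy\le 4m$ gives $\Theta(n)$ vertices and $\Theta(m)$ edges. Let $Q=A\times B\times\{c\}\times\{c'\}$. Each $q$ is swappable: $\{q_A,q_B\}$ and $\{c,c'\}$ are edges, while $\{q_A,c\}$ and $\{q_B,c'\}$ are non-edges. We have $\abs{Q}=xy$. The degree hypothesis in \Cref{st-universal} is only needed when $\NeighSorted\in U$, so it can be ignored here.

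\emph{PPR gap and the value of $K$.} Take $s_q=c$ for every $q$. In $G$, $c$ lies in a different component from $t$, so $\pi_G(c,t)=0$. In $G_q$, $c$ has degree $1$ with unique neighbor $q_A$.
\begin{itemize}
\item If $q_A=t$, then $\pi_{G_q}(c,t)\ge(1-\alpha)\alpha$.
\item Otherwise, the walk $c\to q_A\to b\to t$ with $b\in B\setminus\{q_B\}$ has probability at least $(1-\alpha)^3\alpha(1-1/y)/x$.
\end{itemize}
Assuming $c$ is a small enough constant and requiring $y\ge 2$ and $x\delta\le 1$, this gives $\pi_{G_q}(c,t)>2c\delta$.

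Next, $K$: the pairs of $E_W$ lying in some $E^\pm_q$ are exactly the edges in $A\times B$, and each such edge lies in exactly one quadruple. Hence $K=1$. The pairs $\{q_A,c\}$, $\{q_B,c'\}$ and $\{c,c'\}$ are excluded, since $c,c'\notin V_W$. \Cref{st-universal} therefore yields $\Omega(\abs{Q}/K)=\Omega(xy)$.

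\emph{Parameters.} Take $y=\min\{2n,\lfloor 2m/x\rfloor\}$ and $x=\max\{2,\min\{\lfloor d\rfloor,\lfloor 1/\delta\rfloor\}\}$. In the main regime this gives $xy=\Theta(n\min\{d,1/\delta\})=\Theta(\min\{m,n/\delta\})$. The constraints $x\le n$, $xy\le 4m$ and $x\delta\le 1$ hold up to constants. In edge cases such as $\delta>1/2$, the bound is $\Omega(n)$ and is met with $x=2$ and $\pi$ threshold adjusted by constants; in the worst case the folklore $\Omega(n)$ bound covers it.

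The step I expect to need the most care is the bookkeeping of $E_W$ under $\Adj$. The argument must check that the proof of \Cref{swap-lb} indeed only charges pairs with both endpoints in $V_W$. Queries $\Adj(u,v)$ are only possible on vertices the algorithm has reached, and without $\Jump$ it cannot reach $c$ or $c'$ in $G$. So the crossing pairs are never queried, and $K=1$ is legitimate.
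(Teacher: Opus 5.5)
Your proposal is correct and is essentially the paper's proof. The paper uses the same construction: a complete bipartite component $K_{C_1,D_1}$ with $t$ on the side of size $x$ and the other side of size $n$, plus one isolated edge whose endpoint becomes the source after the swap. It gets $K=1$ because crossing pairs leave $E_W$ when $\Jump$ is absent, and it takes $x=\lfloor d\rfloor$ or $x=\lfloor 1/\delta\rfloor$ to obtain $\Omega(nx)=\Omega(\min\{m,n/\delta\})$; your choice $y\approx 2n$ with the constant adjustment at $x=2$ is the same case split done slightly more carefully.
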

\begin{proof}
  For reference in future proofs, let us construction a graph with more parameters than neccessary for the current proof.
  Let $k$, $l$, $n_A$, $n_B$, $n_C$, and $n_D$ be positive integer parameters to be chosen later.
  Define disjoint sets $A_i$, $B_i$, $C_j$, and $D_j$ of size $n_A$, $n_B$, $n_C$, and $n_D$, respectively, for $i \in \{1,\ldots,k\}$ and $j \in \{1,\ldots,l\}$.
  Construct $G=(V,E)$ as the union of the complete bipartite graphs $K_{A_i,B_i}$ over $i \in \{1,\ldots,k\}$ and $K_{C_j,D_j}$ over $j \in \{1,\ldots,l\}$.
  Add to the construction an independt subgraph with $n$ vertices and $m$ edges.
  Requiring $\max\{kn_A,kn_B,ln_C,ln_D\} \leq n$ and $\max\{kn_An_B,ln_Cn_D\} \leq m$, we get a graph with $\Theta(n)$ vertices and $\Theta(m)$ edges.

  We assume $c \leq \frac14(1-\alpha)^3\alpha$.
  For the current proof, choose $k=l=n_A=n_B=1$, $n_C=n$, and $n_D=x$ where $x$ is a parameter to be chosen later, collapsing the above requirements to $x \leq d$.
  Fix $s \in B_1$ and $t \in D_1$.
  Let $Q = A_1 \times B_1 \times C_1 \times D_1$, noting that this is a set of swappable quadruples.
  We have $\pi_G(s,t)=0$ and $\pi_{G_q}(s,t) \geq (1-\alpha)^3\alpha(1-1/n)/x \geq 2c\delta$ if we require $x\delta \leq 1$ when choosing $x$, since we can assume $n \geq 2$ without loss of generality.
  To understand the middle term, imagine a walk moving from $s$ to $q_D \in D_1$ to anywhere in $C_1\setminus\{q_C\}$ to $t$, for $q=(q_A,q_B,q_C,q_D) \in Q$.
  By~\Cref{st-universal}, we get a lower bound of $\Omega(\abs{Q}/K)$, with $K = \max_{E_W}\abs{\{q \in Q \mid e \in E_q^\pm\}}$, where $E_W$ is the set of edges between $C_1$ and $D_1$.
  For each of these edges $e$, we have exactly one $q \in Q$ with $e \in E_q^\pm$, since $\abs{A}=\abs{B}=1$, so $K = 1$.
  Since $\abs{Q} = nx$, we get a lower bound of $\Omega(nx)$.
  We are now ready to choose $x$, remembering to ensure $x \leq d$ and $x\delta \leq 1$.

  \emph{Case 1:} For $0 < \delta \leq \frac{1}{d}$, set $x = \floor{d}$, giving a lower bound of $\Omega(m)$.

  \emph{Case 2:} For $\frac{1}{d} \leq \delta \leq 1$, set $x = \floor{1/\delta}$, giving a lower bound of $\Omega(n/\delta)$.
\end{proof}

We next consider the full model with $\Jump$, $\NeighSorted$, and $\Adj$.

\begin{theorem}\label{st-wc-j-s-a}
  Consider the adjacency-list model with $\Jump$, $\NeighSorted$, and $\Adj$.
  For any $n$ and $m$ with $1 \leq n \leq m \leq n^2$ and any $\delta \in (0,1]$, there exists a graph $G = (V,E)$ with $\Theta(n)$ vertices, $\Theta(m)$ edges, and a vertex $t \in V$, such that for any algorithm solving the single target problem, the expected running time on $G$ with target $t$ and approximation threshold $\delta$ is $\Omega(\min\{m, n/\delta^{1/2}\})$.
\end{theorem}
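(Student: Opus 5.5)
The plan is to reuse the parametrized construction from the proof of \Cref{st-wc-a} — $k$ copies of $K_{A_i,B_i}$ and $l$ copies of $K_{C_j,D_j}$ plus an independent filler subgraph — and apply \Cref{st-universal}. Since $\Jump$ and $\Adj$ are available, $E_W=\binom{V}{2}$, so $K$ must be controlled over all vertex pairs. Since $\NeighSorted$ is available, every $q\in Q$ must satisfy $d(q_1)=d(q_4)$ and $d(q_2)=d(q_3)$.

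Take $l=1$ with $t\in D_1$, and $k$ copies of $K_{A_i,B_i}$ with $n_A=n_B=n_C=n_D=x$, so every vertex in the construction has degree $x$ and the degree condition holds automatically. Let
\[Q=\bigcup_{i=1}^k A_i\times B_i\times C_1\times D_1.\]
For $q=(q_A,q_B,q_C,q_D)$ with $q_A\in A_i$, pick the source $s_q\in A_i$. Then $\pi_G(s_q,t)=0$, and in $G_q$ a walk can go $s_q\to B_i\setminus\{q_B\}\to q_A\to q_C\to D_1\setminus\{q_D\}\to C_1\setminus\{q_C\}\to t$, exactly as in \Cref{sp-wc-j-s-a}. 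This gives $\pi_{G_q}(s_q,t)\ge \Omega(1/x^3)$, so we need $x^3\delta\lesssim 1$. Counting gives $|Q|=kx^4$. An edge in $C_1\times D_1$ lies in $E_q^\pm$ for $kx^2$ quadruples, while edges in $A_i\times B_i$, $A_i\times C_1$, and $B_i\times D_1$ lie in only $x^2$. Hence $K=kx^2$ and the bound is merely $\Omega(x^2)$. The shared $C_1,D_1$ side is the bottleneck.

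To fix this, make the target side large and the per-copy side small. Use unequal sizes $n_C=n_B$ and $n_D=n_A$, so the degree conditions are $d(q_A)=n_B=n_C=d(q_D)$ and $d(q_B)=n_A=n_D=d(q_C)$. Choose $n_A=n_D=a$ and $n_B=n_C=b$. Then $|Q|=k\,ab\cdot ab=ka^2b^2$. For $K$:
\begin{itemize}
\item edges in $C_1\times D_1$ give $kab$;
\item edges in $A_i\times B_i$ give $ab$;
\item pairs in $A_i\times C_1$ give $ab$;
\item pairs in $B_i\times D_1$ give $ab$.
\end{itemize}
So $K=kab$ again and the bound is $\Omega(ab)$, which is the single-pair worst-case bound. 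That only gives $(n/\delta)^{1/2}$, not $n/\delta^{1/2}$. The extra factor must come from avoiding swaps through the shared $C_1\times D_1$ edges.

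So the key modification is to let the swap happen near many different sources rather than near $t$, using a two-hop structure. Take a single component $K_{C,D}$ containing $t$ with $|C|=n$ and $|D|=a$, as in \Cref{st-wc-a}. Take many small components $K_{A_i,B_i}$ with $|A_i|=a$ and $|B_i|=n_B$, and swap quadruples $(q_A,q_B,q_C,q_D)\in A_i\times B_i\times C\times D$. The degree conditions become $d(q_A)=n_B$ equal to $d(q_D)=n$, and $d(q_B)=a$ equal to $d(q_C)=a$. This forces $n_B=n$, which is too big.

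I expect this degree-matching requirement under $\NeighSorted$ to be the main obstacle. My first attempt would be to balance by choosing $|B_i|=|C|=y$ and $|A_i|=|D|=x$ with $k\approx n/x$ components. Then $\pi_{G_q}(s,t)$ for $s\in B_i$ via $s\to q_D\to C\setminus\{q_C\}\to t$ is $\Omega(1/x)$, which requires only $x\delta\lesssim 1$, rather than the $x^2y$ requirement of the single-pair construction. Each edge lies in at most $xy$ swaps, except $C\times D$ edges, which lie in $kxy$; but $|Q|=kx^2y^2$ gives only $xy$. To get $n/\delta^{1/2}$, I would aim for $x\approx\delta^{-1/2}$ and $y\approx n$ with $K$ bounded by $x$ or a constant. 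This needs $\pi_{G_q}(s,t)\approx 1/x^2$ achieved through a path of length two with degree-$x$ intermediate vertices, and a choice of $Q$ restricted, as a matching or with designated $q_C,q_D$ per component, so that each $C\times D$ edge lies in $O(y)$ swaps. Verifying that this restricted $Q$ still gives $|Q|/K=\Omega(nx)$, together with the case split over $\delta\le 1/d^2$ versus larger $\delta$, would complete the proof.
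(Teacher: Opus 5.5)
Your proposal does not reach a proof. The last paragraph is a plan (restrict $Q$ to a matching, force $K=O(x)$ or $O(1)$, then verify $|Q|/K=\Omega(nx)$), and none of it is carried out. The plan also rests on a miscalculation. For $s=q_B\in B_i$, the walk $q_B\to q_D\to C\setminus\{q_C\}\to t$ has probability $\Theta(1/x^2)$, not $\Omega(1/x)$. The first step costs $1/d(q_B)=1/|A_i|=1/x$ and the last step costs $1/d(c)=1/|D|=1/x$. For a vertex $s\in B_i$ other than $q_B$, the edge $\{s,q_D\}$ does not exist, so that path is not available at all. As a sanity check, a $1/x$ bound combined with your own count $|Q|/K=xy$ (with $y=n$) would give $\Omega(\min\{m,n/\delta\})$. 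That contradicts the $O(n/\delta^{1/2})$ upper bound of \Cref{thm:st-wc} in this model.

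The missing idea is that \Cref{st-universal} lets the source depend on $q$, and the right choice is $s_q=q_B$, an endpoint of the new edge $\{q_B,q_D\}$. You already wrote down the correct instance in your second attempt with $k=1$: $|A|=|D|=x$, $|B|=|C|=n$, $Q=A\times B\times C\times D$. This is exactly the paper's construction. You rejected it only because you kept the single-pair source in $A$, which imposes $x^2n\delta\lesssim 1$. With $s_q=q_B$ one gets $\pi_{G_q}(q_B,t)\ge(1-\alpha)^3\alpha(1-1/n)/x^2$, so only $x^2\delta\le 1$ is needed, independently of $|B|=|C|=n$. The rest checks out directly:
\begin{itemize}
\item The degree conditions hold: $d(q_A)=d(q_D)=n$ and $d(q_B)=d(q_C)=x$.
\item $|Q|=x^2n^2$.
\item Each pair in $A\times B$, $C\times D$, $A\times C$, or $B\times D$ lies in exactly $xn$ of the sets $E^\pm_q$. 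Hence $K=xn$ even with $E_W=\binom{V}{2}$, and $|Q|/K=xn$.
\end{itemize}
Your count $|Q|/K=xy$ already suffices with $k=1$ and $y=n$. There is no need to shrink $K$ below $xn$, and multiple copies of $K_{A_i,B_i}$ do not help. Nor is $|B|=n$ ``too big'': $K_{A,B}$ has $xn\le m$ edges whenever $x\le d$. Taking $x=\lfloor d\rfloor$ for $\delta\le 1/d^2$ and $x=\lfloor\delta^{-1/2}\rfloor$ otherwise yields $\Omega(\min\{m,n/\delta^{1/2}\})$.
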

\begin{proof}
  We assume $c \leq \frac14(1-\alpha)^3\alpha$.
  Take the construction of $G$ from the proof of~\Cref{st-wc-a}, choosing $k=l=1$, $n_A=n_D=x$ and $n_B=n_C=n$, where $x$ is a parameter to be chosen later, which should satisfy $x \leq d$.
  (This is the same construction as in the proof of~\Cref{sp-wc-j-s-a} with $y=n$.)
  Fix $t \in D_1$.
  Choose $Q = A_1 \times B_1 \times C_1 \times D_1$, noting that this is a set of swappable quadruples.
  For each $q = (q_A, q_B, q_C, q_D) \in Q$, write $s_q = q_B$.
  For each $q \in Q$, we then have $\pi_G(s_q,t) = 0$ and $\pi_{G_q}(s_q,t) \geq (1-\alpha)^3(1-1/n)\alpha/x^2 \geq 2c\delta$ for all $q \in Q$, if we require $x^2\delta \leq 1$ when choosing $x$, since we can assume $n \geq 2$ without loss of generality.
  By~\Cref{st-universal}, we get a lower bound of $\Omega(\abs{Q}/K)$, with $K=\max_{e \in \binom{V}{2}}\abs{\{q \in Q \mid e \in E_q^\pm\}}$.
  It is easy to see that $K = xn$, so since $\abs{Q} = x^2n^2$, we get a lower bound of $\Omega(xn)$.
  We now choose $x$, remembering to ensure $x \leq d$ and $x^2\delta \leq 1$.

  \emph{Case 1:} For $0 < \delta \leq \frac{1}{d^2}$, set $x = \floor{d}$, giving a lower bound of $\Omega(m)$.

  \emph{Case 2:} For $\frac{1}{d^2} \leq \delta \leq 1$, set $x = \floor{1/\delta^{1/2}}$, giving a lower bound of $\Omega(n/\delta^{1/2})$.
\end{proof}

We now turn to average-case lower bounds for the single target problem.
We first prove a lower bound of $\Omega(d)$ for the model with $\Jump$ and $\Adj$.

\begin{lemma}\label{st-ac-degree}
  Consider the adjacency-list model with $\Jump$ and $\Adj$.
  For any $n$ and $m$ with $1 \leq n \leq m \leq n^2$ and any $\delta \in (0,1]$, there exists a graph $G = (V,E)$ with $\Theta(n)$ vertices, $\Theta(m)$ edges, such that for any algorithm solving the single target problem, the expected running time on $G$ with approximation threshold $\delta$, averaging over all targets $t \in V$, is $\Omega(d)$.
\end{lemma}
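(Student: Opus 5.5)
We apply \Cref{st-universal} to a graph consisting of many disjoint complete bipartite gadgets, mimicking the swap in the proof of \Cref{sp-ac-j-s-xor-a}. The swap is chosen so that the target $t$ becomes directly adjacent to a source in another component. Detecting it then forces the algorithm to inspect a constant fraction of $t$'s own adjacency list, and $\Adj$ does not help.

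\textbf{Construction.} Let $x=\Theta(d)$. If $d$ is constant the bound $\Omega(d)=\Omega(1)$ is trivial, so assume $d$ is large. Take $\Theta(n/x)$ disjoint copies of $K_{x,x}$, with parts $A_i,B_i$. This gives $\Theta(n)$ vertices and $\Theta(nx)=\Theta(m)$ edges. Since every vertex lies in such a gadget, it suffices to prove the $\Omega(d)$ bound for every target $t$, which immediately gives the average.

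\textbf{Swap and PPR gap.} Fix $t\in A_1$ and a source vertex $s\in A_2$ in a different gadget, and let
\[
Q=\{t\}\times B_1\times\{s\}\times B_2 .
\]
Then $G_q$ removes $\{t,q_B\}$ and $\{s,q_{B'}\}$ and adds $\{t,s\}$ and $\{q_B,q_{B'}\}$, so $Q$ is a set of swappable quadruples. Clearly $\pi_G(s,t)=0$. In $G_q$, $s$ is adjacent to $t$, so $\pi_{G_q}(s,t)\ge(1-\alpha)\alpha/x$.

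\textbf{Main obstacle: the $\delta$-requirement.} A gap of order $1/x$ does not exceed $2c\delta$ when $\delta\gg 1/d$. Here I would subdivide the edge $\{s,t\}$, and also $\{q_B,q_{B'}\}$, using two extra vertices, exactly as in the $\Adj$ case of \Cref{sp-ac-j-s-xor-a}. The subdivision vertex $w$ between $s$ and $t$ has degree $2$. This gives $\pi_{G_q}(w,t)\ge(1-\alpha)\alpha/2$, so it is better to take $w$ itself as the source. The gap is then $\Omega(1)\ge 2c\delta$ for every $\delta\le1$, with $c$ small.

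An algorithm running in $o(n)$ expected time finds the two subdivision vertices via $\Jump$ only with small probability. We may therefore remove them from $V_W$, just as argued in that proof; otherwise the running time is already $\Omega(n)\ge\Omega(d)$.

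\textbf{Counting.} We have $|Q|=x^2$. Any pair in $E_W$ touching a set $E_q^\pm$ is either an edge $\{t,b\}$ with $b\in B_1$, an edge $\{s,b'\}$ with $b'\in B_2$, the pair $\{t,s\}$, or a pair in $B_1\times B_2$. Each such pair lies in at most $x$ of the sets $E_q^\pm$, except $\{t,s\}$, which lies in all of them. This is the delicate point: with $\Adj$, querying $\Adj(t,s)$ would make $K=x^2$.

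It is avoided by the subdivision. In $G_q$ the pair $\{t,s\}$ is not an edge, and the subdivided halves involve vertices outside $V_W$. So the pairs actually in $E_q^\pm$ within $E_W$ are only the removed edges $\{t,q_B\}$, $\{s,q_{B'}\}$ and the pair $\{q_B,q_{B'}\}$, which is also non-adjacent after subdivision. The only pairs that matter are therefore $\{t,q_B\}$, each shared by $x$ quadruples, and $\{s,q_{B'}\}$, each shared by $x$ quadruples. Hence $K=x$, and \Cref{st-universal} (adapted to the subdivided swap as in \Cref{sp-ac-j-s-xor-a}) gives the bound $\Omega(|Q|/K)=\Omega(x)=\Omega(d)$.
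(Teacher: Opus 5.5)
Your proposal is correct, at the same level of rigor the paper itself uses for the subdivided swap in \Cref{sp-ac-j-s-xor-a}, but it takes a heavier route than the paper.

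\textbf{The paper's route.} It adds a single edge $\{a,b\}$ as a constant-size component and places $t$ in a $K_{C_1,D_1}$ with $|C_1|=d$ and $|D_1|=n$. It then takes $Q=\{a\}\times\{b\}\times C_1\times\{t\}$.
\begin{itemize}
\item The swap turns $b$ into a pendant neighbor of $t$. Hence $\pi_{G_q}(b,t)\ge(1-\alpha)\alpha$ for every $\delta$, with no subdivision needed.
\item The swap preserves all degrees. So \Cref{swap-lb} applies essentially verbatim once $\{a,b\}$ is removed from $V_W$, which is justified because $\Jump$ hits it only with probability $O(1/n)$ per query.
\item The only distinguishing pairs left are the edges $\{q_C,t\}$, each lying in exactly one quadruple. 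So $K=1$, and $|Q|=d$ gives $\Omega(d)$ directly.
\end{itemize}

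\textbf{Your route.} You reuse the $s$--$t$ swap of \Cref{sp-ac-j-s-xor-a}, with $|Q|=x^2$ and $K=x$. The subdivision is needed twice: to make the gap $\Omega(1)$ via the new source $w$, and to neutralize $\Adj(t,s)$. As a result, you rely on the informal extension of \Cref{swap-lb} to swaps that change the degrees of $w$ and $w'$ from $0$ to $2$.

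In effect, your subdivision vertex $w$ plays the role of the paper's $b$. Using a pendant edge that is itself swapped, instead of an isolated vertex that gains two edges, avoids the degree change altogether.

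\textbf{What each buys.} Your construction gives a near-regular graph in which the bound holds for essentially every target, and it reuses an existing argument directly. The paper's gives a simpler, degree-preserving swap with $K=1$ and fewer moving parts.
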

\begin{proof}
  We assume $c \leq \frac12(1-\alpha)\alpha$.
  Take the construction of $G$ from the proof of~\Cref{st-wc-a}, choosing $k=l=n_A=n_B=1$, $n_C=d$, and $n_D=n$, noting that this satisfies the requirements.
  It suffices to show the lower bound of $\Theta(n)$ targets, so fix $t \in D_1$.
  Choose $Q = A_1 \times B_1 \times C_1 \times \{t\}$, noting that this is a set of swappable quadrangles.
  Write $s \in B_1$, and note that $\pi_G(s,t)=0$ and $\pi_{G_q}(s,t) \geq (1-\alpha)\alpha \geq 2c\delta$.
  Analogously to the proof of~\Cref{sp-ac-j-s-xor-a}, specifically the case handling the model with $\Adj$, we can assume that the algorithm does not visit $A_1$ and $B_1$, since they have constant size and we are trying to prove a lower bound below $O(n)$.
  By~\Cref{st-universal}, we get a lower bound of $\Omega(\abs{Q}/K)$, where following this analogy, we again get $K=1$, as we can assume $A_1$ and $B_1$ are not in $V_W$.
  Since $\abs{Q} = d$, we get a lower bound of $\Omega(d)$.
\end{proof}

We now prove an average-case lower bound in the $\Adj$-only model.

\begin{theorem}\label{st-ac-a}
  Consider the adjacency-list model with $\Adj$.
  For any $n$ and $m$ with $1 \leq n \leq m \leq n^2$ and any $\delta \in (0,1]$, there exists a graph $G = (V,E)$ with $\Theta(n)$ vertices, $\Theta(m)$ edges, such that for any algorithm solving the single target problem, the expected running time on $G$ with approximation threshold $\delta$, averaging over all targets $t \in V$, is $\Omega(\min\{m, d/\delta, 1/\delta^{2}+d\})$.
\end{theorem}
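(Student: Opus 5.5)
The plan is to split the claimed bound into two parts and prove each by an explicit hard instance. The split uses the elementary inequality
\begin{align*}
\min\{m, d/\delta, (1/\delta)^2+d\} \leq \min\{m, d/\delta, (1/\delta)^2\} + \min\{m, d/\delta, d\},
\end{align*}
together with $\min\{m,d/\delta,d\} = d$, which holds since $\delta\le 1$ and $d\le m$. So it suffices to show two lower bounds on the average over targets: one of $\Omega(d)$, and one of $\Omega(\min\{m, d/\delta, (1/\delta)^2\})$. Taking the disjoint union of the two hard instances gives a single graph with $\Theta(n)$ vertices and $\Theta(m)$ edges. In that graph a constant fraction of the targets falls into each part, so the averaged running time is at least a constant times the larger of the two bounds, which gives the sum.

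For the $\Omega(d)$ part I will invoke \Cref{st-ac-degree}. It is proved in the model with $\Jump$ and $\Adj$, which is stronger than the $\Adj$-only model. Any algorithm in the $\Adj$-only model is also an algorithm in the stronger model, so the lower bound transfers directly.

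For the other part I will reuse the parametrized construction from the proof of \Cref{st-wc-a}. I take $k=1$ and $n_A=n_B=1$, so that $A_1\times B_1$ is a single edge, and I write $s$ for the vertex of $B_1$. I take $l=\floor{n/y}$ copies of $K_{C_j,D_j}$ with $n_C=x$ and $n_D=y$, and I require
\begin{align*}
x\le y\le n, \qquad x\le d, \qquad y\delta\le 1 .
\end{align*}
These ensure the graph has $\Theta(n)$ vertices and $O(nx)=O(m)$ edges, and that a constant fraction of all vertices lie in $\bigcup_j D_j$. It therefore suffices to prove the bound for every target $t\in D_j$.

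For such a target I set $Q=A_1\times B_1\times C_j\times D_j$. I then apply \Cref{st-universal} with source $s_q=s$ for every $q\in Q$. Clearly $\pi_G(s,t)=0$, since $s$ and $t$ lie in different components. In $G_q$, a walk from $s$ can go $s\to q_D \to C_j\setminus\{q_C\}\to t$. The vertex $s$ has degree $1$, so the first step is forced, and the last step has probability $1/y$. Hence $\pi_{G_q}(s,t)\ge(1-\alpha)^3\alpha(1-1/x)/y\ge 2c\delta$ for a suitable small constant $c$, provided $x\ge 2$ and $y\delta\le 1$.

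Since there is no $\Jump$, the set $V_W$ is the component $C_j\cup D_j$ of $t$. Because $\Adj$ is available, $E_W$ consists of all vertex pairs inside $C_j\cup D_j$. Of these, only the edges of $K_{C_j,D_j}$ lie in any $E_q^\pm$. Each such edge lies in exactly one $E^-_q$, because $A_1\times B_1$ is a single edge, so $K=1$. With $|Q|=xy$, \Cref{st-universal} then gives a lower bound of $\Omega(xy)$.

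It remains to choose parameters, in three cases:
\begin{itemize}
\item for $\delta\le 1/n$, take $x=\floor{d}$ and $y=n$, giving $\Omega(m)$;
\item for $1/n\le\delta\le 1/d$, take $x=\floor{d}$ and $y=\floor{1/\delta}$, giving $\Omega(d/\delta)$;
\item for $\delta\ge 1/d$, take $x=y=\floor{1/\delta}$, giving $\Omega(1/\delta^2)$.
\end{itemize}
Small boundary cases, such as $x<2$, are absorbed by constants, or by the trivial lower bound of $\Omega(1)$. The step I expect to need the most care is the bookkeeping for $K$ under $\Adj$. I must confirm that vertex pairs with an endpoint in $A_1\cup B_1$ or in other components are excluded from $E_W$ when $\Jump$ is absent, so that $K=1$ really holds. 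I must also confirm that the independent filler subgraph does not alter the average by more than a constant factor.
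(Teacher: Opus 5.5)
Your proposal is correct and is essentially the paper's proof. It uses the same specialization of the construction from \Cref{st-wc-a}: a single $A_1$--$B_1$ edge and $\Theta(n/n_D)$ copies of $K_{C_j,D_j}$, where your $x,y$ are the paper's $y,x$. It also has the same $Q$ and source in $B_1$, $K=1$ because without \Jump{} only the component of $t$ is visible, the same three regimes of $\delta$, and \Cref{st-ac-degree} for the additive $\Omega(d)$.

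The differences are cosmetic. The paper does not form a disjoint union; it simply uses whichever instance gives the larger bound, which is allowed since $G$ may depend on $n,m,\delta$. It also takes parameters such as $\floor{2d}$, so they are automatically at least $2$, which disposes of your boundary case.
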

\begin{proof}
  We assume $c \leq \frac14(1-\alpha)^3\alpha$.
  Take the construction of $G$ from the proof of~\Cref{st-wc-a}, choosing $k=n_A=n_B=1$, $l=\floor{n/x}$, $n_C=y$, and $n_D=x$, where $x$ and $y$ are parameters to be chosen later, which should satisfy $y \leq x \leq 2n$ and $y \leq 2d$.
  It suffices to show the lower bound for $\Theta(n)$ fixed targets, so fix $t \in D_{i_t}$ for some $i_t$.
  Choose $Q = A_1 \times B_1 \times C_{i_t} \times D_{i_t}$, noting that this is a set of swappable quadruples.
  Write $s \in B_1$, and note that $\pi_G(s,t) = 0$ and $\pi_{G_q}(s,t) \geq (1-\alpha)^3\alpha(1-1/y)/x \geq 2c\delta$ for all $q \in Q$, if we require $y \geq 2$ and $x\delta \leq 1$ when choosing $x$.
  By~\Cref{st-universal}, we get a lower bound of $\Omega(\abs{Q}/K)$, where $K=\max_{e \in E_W}\abs{\{q \in Q \mid e \in E_q^\pm\}}$.
  Note that $V_W=C_{i_t}\cup D_{i_t}$, so $E_W$ is the set of edges $e$ between $C_{i_t}$ and $D_{i_t}$, all of which have $\abs{\{q \in Q \mid e \in E_q^\pm\}}=\abs{A_1\times B_1}=1$, so $K=1$.
  Since $Q=xy$, we get a lower bound of $\Omega(xy)$.
  We now choose $x$ and $y$, remembering to ensure $2 \leq y \leq x \leq 2n$, $y \leq 2d$, and $x\delta \leq 1$.

  \emph{Case 1:} For $0 < \delta \leq \frac{1}{n}$, set $x = 2n$ and $y = \floor{2d}$, giving a lower bound of $\Omega(m)$.

  \emph{Case 2:} For $\frac{1}{n} \leq \delta \leq \frac{1}{d}$, set $x = \floor{2/\delta}$ and $y = \floor{2d}$, giving a lower bound of $\Omega(d/\delta)$.

  \emph{Case 3:} For $\frac{1}{d} \leq \delta \leq 1$, set $x = y = \floor{2/\delta}$, giving a lower bound of $\Omega(1/\delta^2)$.

  Finally,~\Cref{st-ac-degree} gives a lower bound of $\Omega(d)$.
\end{proof}

We next consider the model with $\Jump$ and $\Adj$.

\begin{theorem}\label{st-ac-j-a}
  Consider the adjacency-list model with $\Jump$ and $\Adj$.
  For any $n$ and $m$ with $1 \leq n \leq m \leq n^2$ and any $\delta \in (0,1]$, there exists a graph $G = (V,E)$ with $\Theta(n)$ vertices, $\Theta(m)$ edges, such that for any algorithm solving the single target problem, the expected running time on $G$ with approximation threshold $\delta$, averaging over all targets $t \in V$, is $\Omega(\min\{m, (m/\delta)^{1/2}, d/\delta, (n/\delta)^{2/3}+d, 1/\delta^{2}+d\})$.
\end{theorem}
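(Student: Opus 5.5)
The $\Omega(d)$ term comes directly from \Cref{st-ac-degree}. It therefore remains to prove a lower bound of
\[
\Omega\bigl(\min\{m, (m/\delta)^{1/2}, d/\delta, (n/\delta)^{2/3}, 1/\delta^2\}\bigr),
\]
and I would take the maximum of the two bounds at the end. For this part I would reuse the parametrized construction from the proof of \Cref{st-wc-a}: $k$ copies of $K_{A_i,B_i}$ and $l$ copies of $K_{C_j,D_j}$ with sizes $n_A,n_B,n_C,n_D$, together with the padding subgraph. The argument then applies \Cref{st-universal} once per target, for $\Theta(n)$ targets. Since $\Jump$ and $\Adj$ are both available, $V_W=V$ and $E_W=\binom V2$, so $K$ must be computed over all pairs. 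The four candidate counts are $n_Cn_D$ (for edges in $A\times B$), $n_An_B$ (for $C\times D$), $n_Bn_D$ (for $A\times C$) and $n_An_C$ (for $B\times D$).

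Concretely, I would fix $t\in D_{j}$ for one of $l=\Theta(n/n_D)$ copies, so that a constant fraction of vertices are valid targets. I would take $Q=A_1\times B_1\times C_{j}\times D_{j}$ (with $k=1$) and set $s_q=q_B$. In $G_q$ a walk can go from $q_B$ to $q_D$ with probability $(1-\alpha)/n_A$, then into $C_j\setminus\{q_C\}$, then to $t$ with probability about $(1-\alpha)/n_D$. This gives $\pi_{G_q}(s_q,t)\gtrsim 1/(n_An_D)$, while $\pi_G(s_q,t)=0$. The requirements are therefore:
\begin{itemize}
\item $n_An_D\delta\le 1$;
\item the size constraints $n_A,n_B\le n$, $n_An_B\le m$, $ln_C\le n$, $l n_Cn_D\le m$ (so $n_C\lesssim d$ whenever $l n_D=\Theta(n)$).
\end{itemize}
The resulting lower bound is
\[
\frac{n_An_Bn_Cn_D}{\max\{n_Cn_D,\,n_An_B,\,n_Bn_D,\,n_An_C\}}.
\]

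I would then split into regimes of $\delta$, one for each term of the minimum, and choose parameters in each:
\begin{itemize}
\item For very small $\delta$, set $n_A=n_D$ near $n$ and $n_B=n_C$ near $d$ to get $\Omega(m)$.
\item The $(m/\delta)^{1/2}$ regime mimics \Cref{st-wc-j-s-a}, but with the $AB$ block sized so that $n_An_B\approx n_Cn_D$, balancing $K$.
\item The $d/\delta$ and $1/\delta^2$ regimes should reproduce the choices of \Cref{st-ac-a}, where $K=1$-type behaviour can still be arranged by making $n_A=n_B=1$. The $\Adj$ edges from $A_1\cup B_1$ then only hit $O(1)$ quadruples each.
\item The $(n/\delta)^{2/3}$ regime should come from a balanced choice $n_An_D\approx1/\delta$ with the $AB$ block of size comparable to the $CD$ blocks. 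Equating the four terms in $K$ suggests $n_A\approx n_D$ and $n_B\approx n_C$, and then solving the size constraints together with $|Q|/K$.
\end{itemize}
At each breakpoint I would check that adjacent cases agree, as done in the earlier proofs.

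The main obstacle is this parameter optimization. Two things must hold simultaneously: the number of valid targets must stay $\Theta(n)$, which forces many $CD$ copies and hence small $n_C$; and the $\Jump$+$\Adj$ model forces $K$ to include the cross terms $n_Bn_D$ and $n_An_C$. If a single $Q$ with $k=1$ does not achieve $(n/\delta)^{2/3}$, my first fallback is to let $Q$ range over all $k$ copies $\bigcup_i A_i\times B_i\times C_j\times D_j$. This multiplies $|Q|$ by $k$ while only the $C\times D$ count grows. Another option is to place the target in $C_j$ rather than $D_j$, with sources in $A$, trading which side must be replicated.
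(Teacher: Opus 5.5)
Your primary construction, with $k=1$ and $Q=A_1\times B_1\times C_j\times D_j$, cannot deliver most of the terms. This is a genuine gap, not a matter of tuning parameters.

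By your own count, the bound from \Cref{st-universal} is $\min\{n_An_B,\,n_Cn_D,\,n_An_C,\,n_Bn_D\}$. Your constraints force $n_C\lesssim\min\{d,n_D\}$ and $n_An_D\le 1/\delta$. Hence the bound is at most $O(\min\{n_C\sqrt{n_An_D},\,n_An_C\})=O(\min\{d\delta^{-1/2},\,\delta^{-1}\})$. This causes three failures:
\begin{itemize}
\item it is a factor $(n/d)^{1/2}$ below $(m/\delta)^{1/2}$;
\item it is below $m$ throughout $n^{-2}<\delta\le 1/m$;
\item it falls short of $(n/\delta)^{2/3}$ by the factor $(n^2\delta)^{1/3}$, which is e.g.\ $n^{1/2}$ at $\delta=n^{-1/2}$.
\end{itemize}

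Your $d/\delta$ and $1/\delta^2$ cases also fail as stated. With \Jump{} and \Adj{} we have $E_W=\binom{V}{2}$. For $n_A=n_B=1$, the unique edge of $K_{A_1,B_1}$ lies in $E^-_q$ for every $q\in Q$, so $K=n_Cn_D=\abs{Q}$ and the bound is trivial. Your claim that pairs touching $A_1\cup B_1$ hit only $O(1)$ quadruples is false: the pairs $\{a,c\}$ and $\{b,d\}$ hit $n_D$ and $n_C$ of them. Rescuing this would need the extra argument from \Cref{st-ac-degree}, namely that an $o(n)$-query algorithm does not find a constant-size component via \Jump.

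What you list as a last-resort fallback is in fact the essential idea, and it is exactly the paper's proof. Take $n_A=n_B=x$ with $k=\floor{n/x}$ copies, $n_C=y$, $n_D=z$, $l=\floor{n/z}$, $Q=\bigcup_{i}A_i\times B_i\times C_{i_t}\times D_{i_t}$, and $s_q=q_B$. Then $\abs{Q}=\Theta(nxyz)$ and $K=\Theta(\max\{yz,nx,xz,xy\})=\Theta(\max\{yz,nx\})$. So the bound is $\Omega(\min\{nx,yz\})$, subject to $xz\delta\le1$, $2\le y\le z\le 2n$ and $x,y\le 2d$. Up to constants and floors, the paper chooses:
\begin{itemize}
\item $x=y=2d$, $z=2n$ for $\delta\le 1/m$;
\item $x=(d/(n\delta))^{1/2}$, $y=2d$, $z=2(n/(d\delta))^{1/2}$ for $1/m\le\delta\le\min\{d/n,n/d^3\}$;
\item $x=1$, $y=2d$, $z=2/\delta$ for $d/n\le\delta\le 1/d$;
\item $x=n^{-1/3}\delta^{-2/3}$, $y=z=2(n/\delta)^{1/3}$ for $n/d^3\le\delta\le n^{-1/2}$;
\item $x=1$, $y=z=2/\delta$ for $\delta\ge\max\{1/d,n^{-1/2}\}$.
\end{itemize}

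Taking $x=1$ with $k=n$ disjoint edges is what fixes your $d/\delta$ and $1/\delta^2$ cases. It spreads the $A\times B$ load, so $K=\Theta(\max\{yz,n\})$, and $yz\lesssim n$ in those regimes. The $(n/\delta)^{2/3}$ balance is $nx=yz$ with $y=z$, not $n_A\approx n_D$. Your use of \Cref{st-ac-degree} for the additive $d$ term is correct.
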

\begin{proof}
  We assume $c \leq \frac14(1-\alpha)^3\alpha$.
  Take the construction of $G$ from the proof of~\Cref{st-wc-a}, choosing $k=\floor{n/x}$, $l=\floor{n/z}$, $n_A=n_B=x$, $n_C=y$, and $n_D=z$, where $x$, $y$, and $z$ are parameters to be chosen later, which should satisfy $y \leq z \leq 2n$ and $\max\{x, y\} \leq 2d$.
  It suffices to show the lower bound for $\Theta(n)$ fixed targets, so fix $t \in D_{i_t}$ for some $i_t$.
  Choose $Q = \bigcup_{i=1}^{k}A_i \times B_i \times C_{i_t} \times D_{i_t}$, noting that this is a set of swappable quadruples.
  For each $q = (q_A, q_B, q_C, q_D) \in Q$, write $s_q = q_B$, and note that $\pi_G(s_q,t)=0$ and $\pi_{G_q}(s_q,t) \geq (1-\alpha)^3\alpha(1-1/y)/(xz) \geq 2c\delta$ if we require $y \geq 2$ and $xz\delta \leq 1$ when choosing $x$, $y$, and $z$.
  By~\Cref{st-universal}, we get a lower bound of $\Omega(\abs{Q}/K)$.
  By looking at the four edges of $E_q^\pm$ seperately, we see that $K = \Theta(\max\{zy,nx,xz,xy\})=\Theta(\max\{zy,nx\})$.
  Since $\abs{Q} = \Theta(nxyz)$, we get a lower bound of $\Omega(\min\{nx, yz\})$.
  We now choose $x$, $y$, and $z$, remembering to ensure $2 \leq y \leq z \leq 2n$, $\max\{x, y\} \leq 2d$, and $xz\delta \leq 1$.

  \emph{Case 1:} For $0 < \delta \leq \frac{1}{m}$, set $x = y = \floor{2d}$ and $z = 2n$, giving a lower bound of $\Omega(m)$.

  \emph{Case 2:} For $\frac{1}{m} \leq \delta \leq \min\{\frac{d}{n}, \frac{n}{d^3}\}$, set $x = \floor{(d/(n\delta))^{1/2}}$, $y = \floor{2d}$, and $z = \floor{2(n/(d\delta))^{1/2}}$, giving a lower bound of $\Omega((m/\delta)^{1/2})$.

  \emph{Case 3:} For $\frac{d}{n} \leq \delta \leq \frac{1}{d}$, set $x = 1$, $y = \floor{2d}$, and $z = \floor{2/\delta}$, giving a lower bound of $\Omega(d/\delta)$.

  \emph{Case 4:} For $\frac{n}{d^3} \leq \delta \leq \frac{1}{n^{1/2}}$, set $x=\floor{n^{-1/3}\delta^{-2/3}}$ and $y=z=\floor{2n^{1/3}\delta^{-1/3}}$, giving a lower bound of $\Omega((n/\delta)^{2/3})$.

  \emph{Case 5:} For $\max\{\frac{1}{d}, \frac{1}{n^{1/2}}\} \leq \delta \leq 1$, set $x = 1$, $y=z=\floor{2/\delta}$, giving a lower bound of $\Omega(1/\delta^2)$.

  Finally,~\Cref{st-ac-degree} gives a lower bound of $\Omega(d)$.
\end{proof}

We finally prove the average-case lower bound for the model with $\Jump$, $\NeighSorted$, and $\Adj$.

\begin{theorem}\label{st-ac-j-s-a}
  Consider the adjacency-list model with $\Jump$, $\NeighSorted$, and $\Adj$
  For any $n$ and $m$ with $1 \leq n \leq m \leq n^2$ and any $\delta \in (0,1]$, there exists a graph $G = (V,E)$ with $\Theta(n)$ vertices, $\Theta(m)$ edges, such that for any algorithm solving the single target problem, the expected running time on $G$ with approximation threshold $\delta$, averaging over all targets $t \in V$, is $\Omega(\min\{m, 1/\delta\})$.
\end{theorem}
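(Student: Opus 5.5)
We reuse the construction of $G$ from the proof of~\Cref{st-wc-a}, which is also the graph behind~\Cref{sp-ac-j-s-a} and~\Cref{ss-ac-j-s-a}. We take $k=\floor{n/x}$, $l=\floor{n/x}$, $n_A=n_D=x$, and $n_B=n_C=y$, with $2\le y\le x\le 2n$ and $y\le 2d$, so $G$ has $\Theta(n)$ vertices and $\Theta(m)$ edges. Since all quadruples will be of the form $A_i\times B_i\times C_j\times D_j$, the degree condition for $\NeighSorted$ holds automatically: $d(q_A)=d(q_D)=y$ and $d(q_B)=d(q_C)=x$. It suffices to prove the bound for the $\Theta(n)$ targets $t\in D_{i_t}$ for some $i_t$.

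Fix such a $t$. The plan mirrors the single-source argument of~\Cref{ss-ac-j-s-a} with the roles reversed. Choose $Q=A_1\times B_1\times C_{i_t}\times D_{i_t}$, and for $q=(q_A,q_B,q_C,q_D)$ take $s_q=q_B$. Clearly $\pi_G(s_q,t)=0$. In $G_q$, a walk from $q_B$ can step to $q_D$, then to some vertex of $C_{i_t}\setminus\{q_C\}$, then to $t$, and stop. This gives
\begin{align*}
\pi_{G_q}(s_q,t)\ge (1-\alpha)^3\alpha(1-1/y)/(xy)\ge 2c\delta,
\end{align*}
provided $xy\delta\le 1$ and $c\le\frac14(1-\alpha)^3\alpha$. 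The factor $1/x$ comes from $q_B$ having degree $x$, and the factor $1/y$ from the vertex in $C_{i_t}$ having degree $x$ with $t$ one of $x$ neighbours. Recheck this: a vertex of $C$ has its neighbours in $D$, which has size $x$. So the bound is $\Theta(1/x^2)$ rather than $1/(xy)$, and the step $q_B\to q_D$ also costs $1/x$. The walk length and the exact product need to be rechecked carefully, since the degrees are asymmetric. I expect the requirement to come out as $x^2\delta\lesssim 1$, possibly up to $(1-1/y)$ factors.

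Next apply~\Cref{st-universal} with $E_W=\binom V2$, since all queries are present. For each edge class $A_1\times B_1$, $C_{i_t}\times D_{i_t}$, $A_1\times C_{i_t}$, $B_1\times D_{i_t}$, the number of quadruples containing a given pair is $xy$, so $K=xy$. Since $|Q|=x^2y^2$, the bound is $\Omega(xy)$.

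Finally choose the parameters, subject to $2\le y\le x\le2n$, $y\le 2d$, and the probability constraint. If $\delta\le 1/m$, take $x=2n$ and $y=\floor{2d}$, giving $\Omega(m)$, provided the constraint is $xy\delta\lesssim1$. Otherwise take $xy\approx 1/\delta$: use $x=2n$, $y=\floor{2/(n\delta)}$ when $1/m\le\delta\le 1/n$, and $x=\floor{2/\delta}$, $y=2$ when $\delta\ge 1/n$. Each choice gives $\Omega(1/\delta)$.

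The main obstacle is the probability step. With the roles of $A,B$ versus $C,D$ and their sizes, I must ensure the walk from $s_q$ to $t$ pays only $1/(xy)$, not $1/x^2$. If the naive choice fails, I will swap which side has size $x$ (for instance, use $n_C=x$ and $n_D=y$ in the target block) or pick $s_q$ as $q_A$. The aim is that the walk's cheap steps go through the large side and the $1/(\text{size})$ losses are exactly one factor of $1/x$ and one of $1/y$, so that the constraint is $xy\delta\le1$ and the bound $\min\{m,1/\delta\}$ follows.
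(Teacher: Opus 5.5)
Your graph, the set $Q=A_1\times B_1\times C_{i_t}\times D_{i_t}$, the values $K=xy$ and $\abs{Q}=x^2y^2$, and the three parameter regimes all coincide with the paper's. Using $k=\floor{n/x}$ instead of $k=1$ is harmless. However, the one non-routine step is left unresolved, and as written it is false: for $s_q=q_B$ the claimed bound $\pi_{G_q}(s_q,t)\ge(1-\alpha)^3\alpha(1-1/y)/(xy)$ does not hold. In $G_q$ the vertex $q_B$ has degree $x$, so crossing the new edge to $q_D$ costs $1/x$. From there, $t$ must be entered from a vertex of $C_{i_t}$, which again has degree $x$. The other crossing, via $q_A$, is even more expensive. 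Hence $\pi_{G_q}(q_B,t)=\Theta(1/x^2)$ whenever $q_D\neq t$, as you suspected. This forces $x^2\delta\lesssim 1$, and with $y\le\min\{x,2d\}$ this family gives only about $\Omega(\min\{m,d\delta^{-1/2},1/\delta\})$. Concretely, in your third regime $x=\floor{2/\delta}$, $y=2$, you get $\pi_{G_q}(q_B,t)=\Theta(\delta^2)$, far below $2c\delta$ for small $\delta$, so \Cref{st-universal} cannot be applied. Your first suggested repair, taking $n_C=x$ and $n_D=y$, does not work either. It breaks the $\NeighSorted$ hypothesis $d(q_1)=d(q_4)$ of \Cref{st-universal} (now $d(q_A)=y$ while $d(q_D)=x$) unless $x=y$. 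It also leaves only about $ny/x$ candidate targets, too few for the averaging.

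Your second repair, $s_q=q_A$, is what the paper does, and it closes the gap. Since $d(q_A)=y$, the walk takes the new edge to $q_C$ with probability $(1-\alpha)/y$. Since $q_C$ has degree $x$, only one more factor of order $1/x$ is needed to reach $t$. Because $q_D=t$ is allowed, and then $q_C$ is not adjacent to $t$ in $G_q$, use the route $q_A\to q_C\to D_{i_t}\setminus\{q_D\}\to C_{i_t}\setminus\{q_C\}\to t$ followed by stopping. This gives $\pi_{G_q}(q_A,t)\ge(1-\alpha)^4\alpha(1-1/x)(1-1/y)/(xy)$. With $\min\{x,y\}\ge 2$ this is $\Omega(\delta)$ whenever $xy\delta=O(1)$, which holds in all three of your regimes. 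So for $c$ a sufficiently small constant multiple of $(1-\alpha)^4\alpha$, the rest of your argument goes through unchanged and yields $\Omega(xy)=\Omega(\min\{m,1/\delta\})$.
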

\begin{proof}
  We assume $c \leq \frac{1}{8}(1-\alpha)^4\alpha$.
  Take the construction of $G$ from the proof of~\Cref{st-wc-a}, choosing $k=1$, $l=\floor{n/x}$, $n_A=n_D=x$, and $n_B=n_C=y$, where $x$ and $y$ are parameters to be chosen later, which should satisfy $y \leq x \leq 2n$ and $y \leq 2d$.
  It suffices to show the lower bound for $\Theta(n)$ fixed targets, so fix $t \in D_{i_t}$ for some $i_t$.
  Choose $Q = A_1 \times B_1 \times C_{i_t} \times D_{i_t}$, noting that this is a set of swappable quadruples.
  For each $q = (q_A, q_B, q_C, q_D) \in Q$, write $s_q = q_A$, and note that $\pi_G(s_q,t)=0$ and $\pi_{G_q}(s_q,t) \geq (1-\alpha)^4\alpha(1-1/x)(1-1/y)/(yx) \geq 2c\delta$ if we require $2 \leq \min\{x,y\}$ and $xy\delta \leq 1$ when choosing $x$ and $y$.
  By~\Cref{st-universal}, we get a lower bound of $\Omega(\abs{Q}/K)$, and $K = xy$, analogously to~\Cref{ss-ac-j-s-a} or easily verified.
  Since $\abs{Q} = x^2y^2$, we get a lower bound of $\Omega(xy)$.
  We now choose $x$ and $y$, remembering to ensure $2 \leq y \leq x \leq 2n$, $y \leq 2d$, and $xy\delta \leq 1$.

  \emph{Case 1:} For $0 < \delta \leq \frac{1}{m}$, set $x=2n$ and $y=\floor{2d}$, giving a lower bound of $\Omega(m)$.

  \emph{Case 2:} For $\frac{1}{m} \leq \delta \leq \frac{1}{n}$, set $x=2n$ and $y=\floor{2/(n\delta)}$, giving a lower bound of $\Omega(1/\delta)$.

  \emph{Case 3:} For $\frac{1}{n} \leq \delta \leq 1$, set $x=\floor{2/\delta}$ and $y=2$, giving a lower bound of $\Omega(1/\delta)$.
\end{proof}

\subsection{Single node lower bounds}\label{sec:sn-lb}

We need a seperation similar to~\Cref{sp-separation} for the single node problem. 

\begin{lemma}\label{sn-separation}
  Let $G$ be a graph containing a vertex $t$.
  Let $\hat\pi(t) \in [0, 1]$ be an estimate of $\pi(t)$ satisfying $\abs{\hat\pi(t)-\pi(t)} \leq c\pi(s,t)$.
  Then, for any $x \in [0, 1]$,
  \begin{itemize}
    \item $\hat\pi(t) \leq (1+c)x$ if $\pi(t) \leq x$, and
    \item $\hat\pi(t) > (1+c)x$ if $\pi(t) \geq (1+4c)x$.
  \end{itemize}
\end{lemma}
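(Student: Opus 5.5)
The hypothesis contains an evident typo: $c\pi(s,t)$ should read $c\pi(t)$, which matches the single node estimation requirement $\abs{\hat\pi(t)-\pi(t)} \leq c\pi(t)$. I will prove the statement under that reading, assuming $c$ is a small constant, say $c < \frac14$, in the same spirit as the assumption $c<\frac12$ in the proof of \Cref{sp-separation}. The proof is a two-case computation from the hypothesis; the only tool needed is the triangle inequality in each direction.

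\emph{First bullet.} Suppose $\pi(t) \leq x$. The hypothesis gives $\hat\pi(t) \leq \pi(t) + c\pi(t) = (1+c)\pi(t)$. Combining with $\pi(t)\leq x$ yields $\hat\pi(t) \leq (1+c)x$.

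\emph{Second bullet.} Suppose $\pi(t) \geq (1+4c)x$. The hypothesis gives $\hat\pi(t) \geq (1-c)\pi(t) \geq (1-c)(1+4c)x$. It remains to check that $(1-c)(1+4c) > 1+c$. Expanding, this is
\[
1+3c-4c^2 > 1+c,
\]
which is equivalent to $2c > 4c^2$, that is, $c<\frac12$ for $c>0$. This holds under our assumption on $c$.

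One edge case needs care: if $x=0$, the desired strict inequality $\hat\pi(t) > 0$ does not follow from the computation above. It does follow from $\pi(t) \geq \alpha/n > 0$, which we get from \Cref{eq:rec-node}, since then $\hat\pi(t) \geq (1-c)\pi(t) > 0$. If $\pi(t)$ were allowed to be zero, we would simply restrict to $x>0$, which is the only case the later lower bounds use.

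There is no real obstacle here. The only points requiring attention are fixing the typo in the hypothesis and ensuring the strict inequality via $c<\frac12$ and $x>0$ (or $\pi(t)>0$).
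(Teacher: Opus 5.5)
Your proof is correct and follows the paper's argument exactly: read the hypothesis as $\abs{\hat\pi(t)-\pi(t)}\le c\pi(t)$, get the upper bound $(1+c)\pi(t)\le(1+c)x$ and the lower bound $(1-c)(1+4c)x>(1+c)x$ from $c<\frac12$. Your handling of $x=0$ via $\pi(t)\ge\alpha/n>0$ patches a degenerate case the paper's proof passes over silently, since there both sides of its final strict inequality vanish.
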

\begin{proof}
  We assume $c < \frac12$.
  If $\pi(t) \leq x$, we have $\hat\pi(t) \leq (1+c)\pi(t) \leq (1+c)x$.
  If $\pi(t) \geq (1+4c)x$, we have $\hat\pi(t) \geq (1-c)\pi(t) \geq (1-c)(1+4c)x > (1+c)x$, where the last inequality follows easily from $c < \frac12$.
\end{proof}

We next combine \Cref{sn-separation} and \Cref{swap-lb} to obtain a general lemma for proving lower bounds for the single node problem.

\begin{lemma}\label{sn-universal}
  Let $\mc A$ be an algorithm solving the single node problem in the adjacency-list model augmented with $U \subseteq \{\Jump, \NeighSorted, \Adj\}$.
  Let $G = (V, E)$ be a graph and $Q \subseteq V^4$ a set of swappable quadruples.
  If $\NeighSorted \in U$, assume that $d(q_1)=d(q_4)$ and $d(q_2)=d(q_3)$ for all $(q_1,q_2,q_3,q_4) \in Q$.
  Let $t \in V$ be a vertex with $\pi_{G_q}(t) \geq (1+4c)\pi_G(t)$ for all $q \in Q$.

  If $\Jump \in U$, let $V_W = V$, and otherwise, let $V_W \subseteq V$ be the union of components in $G$ containing $s$ or $t$.
  If $\Adj \in U$, let $E_W = \binom{V_W}{2}$, and otherwise, let $E_W = \binom{V_W}{2} \cap E$.

  Let $K = \max_{e \in E_W}\abs{\{q \in Q \mid e \in E^\pm_q\}}$.
  Then, the expected query complexity of $\mc A$ on $G$ with node $t$ is $\Omega(\abs{Q}/K)$.
\end{lemma}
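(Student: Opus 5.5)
The plan mirrors the proofs of \Cref{sp-universal}, \Cref{ss-universal} and \Cref{st-universal}. We apply \Cref{swap-lb} with $W=(t)$; the components of $G$ meeting $W$ are exactly those containing $t$, which matches the definition of $V_W$ in the statement (the mention of $s$ there is vacuous, or simply enlarges $V_W$ harmlessly). \Cref{swap-lb} then yields an expected query complexity of $\Omega(\abs{Q}/K)$ on $(G,W)$. What remains is its hypothesis, namely $\P{\mc A(G,W)=\mc A(G_q,W)}\leq 2p_f$ for every $q\in Q$.

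To check this hypothesis, fix $q\in Q$ and set $x=\pi_G(t)\in[0,1]$. Let $\hat\pi_H(t)$ denote the estimate produced by $\mc A$ on $(H,W)$ for $H\in\mc G(G,Q)$. The algorithm solves the single node problem, so with probability at least $1-p_f$ on each input we have $\abs{\hat\pi_H(t)-\pi_H(t)}\leq c\pi_H(t)$. We then apply \Cref{sn-separation}, reading its hypothesis as the relative-error guarantee $c\pi(t)$, with threshold $x$:
\begin{itemize}
\item On $G$ we have $\pi_G(t)\leq x$, so $\hat\pi_G(t)\leq(1+c)x$ except with probability $p_f$.
\item On $G_q$ we have $\pi_{G_q}(t)\geq(1+4c)\pi_G(t)=(1+4c)x$ by assumption, so $\hat\pi_{G_q}(t)>(1+c)x$ except with probability $p_f$.
\end{itemize}

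The event $\mc A(G,W)=\mc A(G_q,W)$ forces $\hat\pi_G(t)=\hat\pi_{G_q}(t)$. Hence either $\hat\pi_G(t)>(1+c)x$ or $\hat\pi_{G_q}(t)\leq(1+c)x$ must hold, and each of these has probability at most $p_f$. A union bound gives the required $2p_f$, and the lower bound $\Omega(\abs{Q}/K)$ follows from \Cref{swap-lb}.

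There is no real obstacle; the only subtle point is the degenerate case $\pi_G(t)=0$. It cannot occur, because $\pi(t)\geq\alpha/n>0$ by \Cref{eq:rec-node}. In any case the separation argument with $x=0$ would still go through. We also note that the threshold $x=\pi_G(t)$ depends only on $G$ and not on $q$, which is harmless since the argument treats each $q$ separately.
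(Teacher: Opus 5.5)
Your proposal is correct and follows essentially the same route as the paper: apply \Cref{swap-lb} with $W=(t)$, then bound $\P{\mc A(G,W)=\mc A(G_q,W)}\leq 2p_f$ via \Cref{sn-separation} with threshold $x=\pi_G(t)$ and a union bound. Your side remarks are also sound. The hypothesis of \Cref{sn-separation} should indeed read $c\pi(t)$, and the extra $s$ in $V_W$ can only enlarge $E_W$ and hence $K$, which weakens rather than invalidates the bound.
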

\begin{proof}
  This follows from~\Cref{swap-lb} with $W = (t)$ if we show $\P{\mc A(G, W) = \mc A(G_q, W)} \leq 2p_f$ for all $q \in Q$.
  Let $\hat\pi_G(t)$ be the estimate produced by $\mc A$ on $(G, W)$ and $\hat\pi_{G_q}(t)$ be the estimate produced by $\mc A$ on $(G_q, W)$.
  Then for each $q \in Q$, the event $\mc A(G, W) = \mc A(G_q, W)$ implies that $\hat\pi_G(t)=\hat\pi_{G_q}(t)$, so either $\hat\pi_G(t) > (1+c)x$ or $\hat\pi_{G_q}(t) \leq (1+c)x$.
  Both of the latter two events happen with probability at most $p_f$ by~\Cref{sn-separation} with $x=\pi_G(t)$, since $\mc A$ solves the single node problem.
  By a union bound, $\P{\mc A(G, W) = \mc A(G_q, W)} \leq 2p_f$.
\end{proof}

We now prove worst-case lower bounds for the single node problem by applying \Cref{sn-universal} to an explicit hard instance.
In \cite{BackMC}, an $\Omega(m^{1/2})$ lower bound is shown for the standard adjacency-list model.
We further show that the same bound continues to hold for any model that does not include all three query types.

\begin{theorem}\label{sn-wc-not-all}
  Consider the adjacency-list model with the additional queries of a proper subset $U\subsetneq \{\Jump, \NeighSorted, \Adj\}$.
  For any $n$ and $m$ with $1 \leq n \leq m \leq n^2$, there exists a graph $G = (V,E)$ with $\Theta(n)$ vertices, $\Theta(m)$ edges, such that for any algorithm solving the single node problem, the expected running time on $G$ is $\Omega(m^{1/2})$.
\end{theorem}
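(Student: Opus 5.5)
The plan is to treat the three maximal proper subsets $U\in\{\{\Neigh S,\Adj\},\{\Jump,\Adj\},\{\Jump,\NeighSorted\}\}$ (writing $\Neigh S$ for $\NeighSorted$) separately. A lower bound for a maximal subset also covers all smaller ones. In each case I would apply \Cref{sn-universal} to an explicit instance with $|Q|/K=\Omega(m^{1/2})$. The instances are built from the complete bipartite blocks $K_{X,Y}$ used in \Cref{st-wc-a}, padded with an independent subgraph with $n$ vertices and $m$ edges, so that the sizes are $\Theta(n)$ and $\Theta(m)$. When $\NeighSorted\in U$, the swapped endpoints $q_1,q_4$ and $q_2,q_3$ are chosen with equal degrees, as \Cref{swap-lb} requires.

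The key quantitative fact to use is the one obtained from \Cref{lem:reverse}:
\begin{align*}
\frac{\pi(v)}{d(v)}=\frac1n\sum_u \frac{\pi(v,u)}{d(u)}\le \frac1n .
\end{align*}
Hence a single neighbor contributes at most $(1-\alpha)/n$ to $\pi(t)$ in \Cref{eq:rec-node}, while $\pi(t)\ge \alpha/n$. To get $\pi_{G_q}(t)\ge(1+4c)\pi_G(t)$, I therefore want $\pi_G(t)=\Theta(1/n)$, and I want each swap to redirect $\Theta(1/n)$ of mass toward $t$. Concretely, $t$ gets $k=\Theta(m^{1/2})$ neighbors, each currently contributing $o(1/(nk))$. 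Each such neighbor sits in a block where $\pi/d$ is tiny, for instance a $K_{X,Y}$ with both sides of size $\Theta(m^{1/2})$. A swap replaces one edge inside this block, together with one edge of a second block, by edges that attach a ``heavy'' vertex (one with $\pi(v)/d(v)=\Theta(1/n)$, such as a leaf of a large star whose hub lies in the padding) within distance $O(1)$ of $t$. I would check the constant-factor increase by exhibiting one explicit short walk, exactly as in the proof of \Cref{sp-wc-j-s-a}. The quadruples range over the $\Theta(m^{1/2})$ neighbors of $t$ times $\Theta(m^{1/2})$ choices on the other side, so $|Q|=\Theta(m)$.

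The cases differ only in how $K$ is controlled. If $\Jump\notin U$, then $V_W$ is the component of $t$, and I place the heavy vertices in a different component. Then only edges near $t$ lie in $E_W$, and each such edge belongs to $O(m^{1/2})$ quadruples. If $\Adj\notin U$, then $E_W\subseteq E$, and each existing edge lies in $O(m^{1/2})$ of the sets $E^\pm_q$. If $\NeighSorted\notin U$, the degree-equality constraint is dropped, so the heavy endpoint may have degree $1$ while the swapped partner has degree $\Theta(m^{1/2})$. Here $\Jump$ and $\Adj$ can probe the non-edges $E^+_q$ directly, so I would choose $Q$ so that every pair in $\binom V2$ lies in only $O(m^{1/2})$ quadruples; spreading the quadruples over $\Theta(m^{1/2})$ disjoint blocks should achieve this. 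In each case the result is $\Omega(|Q|/K)=\Omega(m^{1/2})$.

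The main obstacle is the first step: designing a swap that preserves degrees yet changes $\pi(t)$ by a constant factor. The bound $\pi(v)/d(v)\le 1/n$ shows that rewiring one edge can move only $O(1/n)$ of mass. If the rewired edge is not incident to $t$, that mass is further diluted by the degrees along the path to $t$. I expect that getting $\Theta(1/n)$ to actually reach $t$ forces the swap to touch an edge at $t$ or at a degree-$O(1)$ intermediate vertex. This is in tension with keeping $K$ small, and tuning this trade-off is where I expect most of the work. My first attempt would be to let $t$ have degree $\Theta(m^{1/2})$, with each neighbor $u$ having constant degree. In $G$, each $u$ leads only into a light block; in $G_q$, one $u$ becomes adjacent to a heavy vertex. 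Then $\pi(t)$ changes through a length-two path with constant branching, while the $\Theta(m^{1/2})$ choices of $u$ keep the edges at $t$ shared by only $O(m^{1/2})$ quadruples.
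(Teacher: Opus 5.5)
\textbf{There is a genuine gap.} Your framework matches the paper's: apply \Cref{sn-universal} to complete bipartite blocks with $|Q|=\Theta(m)$ and $K=\Theta(m^{1/2})$, and bring a vertex with $\pi(v)/d(v)=\Theta(1/|V|)$ next to $t$. But the proposal leaves open the part that \emph{is} the proof: an explicit degree-respecting swap with $\pi_{G_q}(t)\ge(1+4c)\pi_G(t)$. The one concrete candidate you give fails.

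\emph{Why your first attempt fails.} Suppose $t$ has $\Theta(m^{1/2})$ neighbours $u$ of constant degree. Each has $\pi(u)\ge\alpha/|V|$, from walks that start at $u$ and stop immediately. So \Cref{eq:rec-node} gives $\pi_G(t)\ge(1-\alpha)\alpha\sum_{u\in\mc N(t)}1/(|V|\,d(u))=\Omega(m^{1/2}/|V|)$. Making one such $u$ adjacent to a vertex $h$ with $\pi(h)/d(h)\le 1/|V|$ adds only $O(1/|V|)$, a relative change of $O(m^{-1/2})$. More generally you need $\sum_{u\in\mc N(t)}1/d(u)=O(1)$, so $t$'s neighbours must have degree comparable to $d(t)$. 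Routing through a few $O(1)$-degree vertices instead breaks $K$. Degree preservation forces every quadruple to remove an edge at the vertex that gains the heavy neighbour, so a few edges of $E_W$ would lie in $\Omega(|Q|)$ quadruples. Finally, a pendant leaf as the heavy vertex cannot be matched with degree-$\Theta(m^{1/2})$ partners under the equal-degree requirement when $\NeighSorted\in U$.

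\emph{The missing construction (the paper's gadget).}
\begin{itemize}
\item Make the heavy vertex a hub $b$ carrying $x$ pendant leaves $A$ plus $x$ neighbours $C$ from a block $K_{B,C}$. Then $d(b)=2x$ and $\pi(b)/d(b)\ge(1-\alpha)\alpha/(2|V|)$.
\item Put $t$ on the larger side of $K_{D,F}$ with $|D|=x$ and $|F|=2x$, so that $\pi_G(t)\le 3/(2|V|)$.
\item Take $Q=\{b\}\times C\times\{t\}\times D$. Then $d(b)=d(q_D)=2x$ and $d(q_C)=d(t)=x$, so the degree condition holds. In $G_q$, walks starting in $A$ alone add $\frac{x}{|V|}\cdot\frac{(1-\alpha)^2\alpha}{2x}=\Theta(1/|V|)$ to $\pi(t)$.
\end{itemize}
The swap does remove an edge at $t$, and this causes no tension with $K$. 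The pairs $\{b,q_C\}$ and $\{t,q_D\}$ each lie in $x$ quadruples, and $\{q_C,q_D\}$ lies in one. If $\Jump\notin U$ then $b\notin V_W$; if $\Adj\notin U$ then $\{b,t\}\notin E_W$. In either case $K=x$ and $|Q|=x^2$, and setting $x=\lfloor m^{1/2}\rfloor$ finishes.

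The only problematic pair is $\{b,t\}$, which lies in all $x^2$ sets $E^+_q$. For $U=\{\Jump,\Adj\}$ the paper handles it by subdividing the two new edges of $G_q$ through two fresh isolated vertices, which $\Jump$ finds only after $\Omega(n)$ queries. Your idea of spreading quadruples over many heavy vertices might serve as an alternative in that case, since degree equality is not required there. As written, however, it is not specified.
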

\begin{proof}
  We assume $c < \frac{1}{84}(1-\alpha)^2\alpha$.
  Let $x$ be a positive integer parameter to be chosen later.
  Define disjoint sets $A$, $B$, $C$, $D$, and $F$ with $\abs{A}=\abs{B}=\abs{C}=\abs{D}=x$ and $\abs{F}=2x$.
  Fix a vertex $b \in B$.
  Construct $G=(V, E)$ as the union of complete bipartite graphs $K_{A,\{b\}}$, $K_{B, C}$, and $K_{D, F}$.
  Add to the construction an independent subgraph with $n$ vertices and $m$ edges.
  Requiring $x^2 \leq m$, we get a graph with $\Theta(n)$ vertices and $\Theta(m)$ edges.

  Fix $t \in F$ and let $Q = \{b\} \times C \times \{t\} \times D$, noting that this is a set of swappable quadrangles.
  See~\Cref{fig:sn} for an illustration of $G$ and $G_q$ for some $q \in Q$.

  \begin{figure}[h]
    \centering
    \begin{tikzpicture}[
  styleN1/.style={circle, draw, inner sep=0, minimum size=18pt, line width=0.5pt},
  styleD1/.style={line width=0.5pt},
  styleD2/.style={line width=0.7pt, densely dashed, color=red},
  styleD3/.style={line width=0.7pt, color=blue},
  >=latex
  ]
  \pgfmathsetmacro{\K}{3};
  \pgfmathsetmacro{\L}{\K*2};
  \pgfmathsetmacro{\labelOffset}{.3}
  \pgfmathsetmacro{\layerZero}{0}
  \pgfmathsetmacro{\layerOne}{\layerZero-1}
  \pgfmathsetmacro{\layerTwo}{\layerOne-1}
  \pgfmathsetmacro{\layerThree}{\layerTwo-1}
  \pgfmathsetmacro{\layerFour}{\layerThree-1.5}
  \pgfmathsetmacro{\swapB}{2};
  \pgfmathsetmacro{\swapC}{3};
  \pgfmathsetmacro{\swapD}{2};
  \pgfmathsetmacro{\swapF}{3};
  \foreach \i in {1,...,\K} {
    \pgfmathsetmacro{\xi}{\i-1-(\K-1) / 2}
    \node[styleN1] (vA\i) at (\xi, \layerZero) {};
    \node[styleN1] (vB\i) at (\xi, \layerOne) {\ifthenelse{\i = \swapB}{$b$}{}};
    \node[styleN1] (vC\i) at (\xi, \layerTwo) {\ifthenelse{\i = \swapC}{$q_C$}{}};
    \node[styleN1] (vD\i) at (\xi, \layerThree) {\ifthenelse{\i = \swapD}{$q_D$}{}};
  }
  \foreach \i in {1,...,\L} {
    \pgfmathsetmacro{\xi}{\i-1-(\L-1) / 2}
    \node[styleN1] (vF\i) at (\xi, \layerFour) {\ifthenelse{\i = \swapF}{$t$}{}};
  }
  \foreach \i in {1,...,\K} {
    \draw[styleD1] (vA\i) -- (vB\swapB);
  }
  \foreach \i in {1,...,\K} {
    \foreach \j in {1,...,\K} {
      \ifthenelse{\i = \swapB \AND \j = \swapC}{\draw[styleD2] (vB\i) -- (vC\j)}{\draw[styleD1] (vB\i) -- (vC\j)};
    }
  }
  \foreach \i in {1,...,\K} {
    \foreach \j in {1,...,\L} {
      \ifthenelse{\i = \swapD \AND \j = \swapF}{\draw[styleD2] (vD\i) -- (vF\j)}{\draw[styleD1] (vD\i) -- (vF\j)};
    }
  }
  \draw[styleD3] (vB\swapB) to[out=-120,in=90] (vF\swapF);
  \draw[styleD3] (vC\swapC) to (vD\swapD);
  \node (VA) at (\K/2+\labelOffset, \layerZero) {$A$};
  \node (VB) at (\K/2+\labelOffset, \layerOne) {$B$};
  \node (VC) at (\K/2+\labelOffset, \layerTwo) {$C$};
  \node (VD) at (\K/2+\labelOffset, \layerThree) {$D$};
  \node (VF) at (\L/2+\labelOffset, \layerFour) {$F$};
\end{tikzpicture}
    \caption{Hard instance for detecting a swap for $x=3$, and some $q=(b,q_C,t,q_D)$.
    Swapping the {\color{red} red} edge pair with the {\color{blue} blue} edge pair, walks starting in $A$ cause a substantial increase in the PageRank of $t$.
    An algorithm has to distinguish between the two cases to satisfy the estimation requirement.
    }
    \label{fig:sn}
  \end{figure}
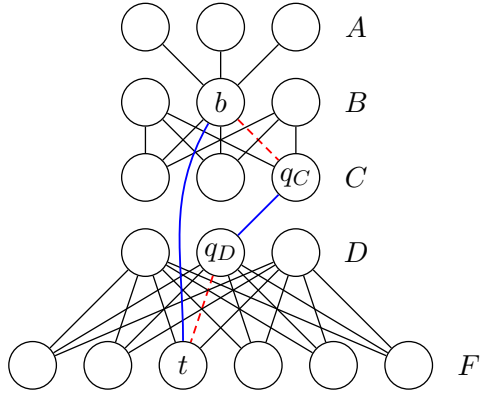
  In $G$, a walk can only end in $t$ by starting in $D$ or $F$, and a walk starting in a uniformly random vertex of $D \cup F$ will end at $t$ with probability at most $\frac{1}{2x}$ by symmetry, hence $\pi_G(t) \leq \frac{\abs{D \cup F}}{\abs{V}2x} = \frac{3x}{(6x+n)2x} \leq \frac{3}{2n}$.
  In $G_q$ for any $q \in Q$, a walk starting at a uniformly random vertex will start in $A$ and end in $t$ with probability at least $\frac{\abs{A}}{\abs{V}}\frac{(1-\alpha)^2\alpha}{d(b)} = \frac{x(1-\alpha)^2\alpha}{(6x+n)2x} \geq 4c\pi_G(t)$.
  To compute $\pi_{G_q}(t)$ exactly, we can solve the set of equations for the markov chain of a random walk in $G_q$ (e.g.\ using computer algebra software).
  We will not write this out here, as it would be quite opaque, but the above bound of $4c\pi_G(t)$ sketches why we can get $\pi_{G_q}(t) \geq (1+4c)\pi_G(t)$ with an appropriate bound on $c$.
  We can now apply~\Cref{sn-universal} in two cases.

  If $\Jump \not\in U$ or $\Adj \not\in U$, we get a lower bound of $\Omega(\abs{Q}/K)$ where the maximum in the definition of $K$ is taken over edges between $D$ and $F$, hence $K = \abs{C} = x$.
  Since $\abs{Q} = x^2$, we get a lower bound of $\Omega(x)$.
  Setting $x = \floor{m^{1/2}}$ finishes this case.

  If $\NeighSorted \not\in U$, we subdivide the edges $\{b, t\}$ and $\{q_C, q_D\}$ in $G_q$ as in~\Cref{sp-ac-j-s-xor-a}, giving a lower boun of $\Omega(\min\{n,\abs{Q}/K\})$ where we get $K=\max\{\abs{C},\abs{D}\} = x$, and finish as above.
\end{proof}

We next consider the full model with $\Jump$, $\NeighSorted$, and $\Adj$.

\begin{theorem}\label{sn-wc-all}
  Consider the adjacency-list model with $\Jump$, $\NeighSorted$, and $\Adj$.
  For any $n$ and $m$ with $1 \leq n \leq m \leq n^2$, there exists a graph $G = (V,E)$ with $\Theta(n)$ vertices, $\Theta(m)$ edges, such that for any algorithm solving the single node problem, the expected running time on $G$ is $\Omega(n^{1/2})$.
\end{theorem}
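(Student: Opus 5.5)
The plan is to reuse the construction from the proof of \Cref{sn-wc-not-all} essentially unchanged, but with the parameter $x$ chosen as $\Theta(n^{1/2})$ rather than $\Theta(m^{1/2})$, and to redo the computation of $K$ for the case where $E_W=\binom{V}{2}$. The new difficulty is that \Cref{sn-universal} with $\Jump$ and $\Adj$ lets the algorithm query any vertex pair. There is also the extra requirement from $\NeighSorted$ that $d(q_1)=d(q_4)$ and $d(q_2)=d(q_3)$.

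\textbf{Adjusting the construction.} With $q=(b,q_C,t,q_D)$ we need $d(b)=d(q_D)$ and $d(q_C)=d(t)$. In the graph of \Cref{sn-wc-not-all}, $d(b)=2x$ (from $A$ and $C$), $d(q_D)=2x$ (from $F$), $d(q_C)=x$ and $d(t)=x$. So the degree condition already holds and the earlier construction works as is. The PageRank separation $\pi_{G_q}(t)\geq(1+4c)\pi_G(t)$ was established there, and I take it as given. That argument only used $\pi_G(t)=O(1/n)$ and a walk from $A$ through $b$ to $t$ contributing $\Omega(x/((6x+n)x))=\Omega(1/n)$, which holds for any $x\leq n$.

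\textbf{Computing $K$.} Now $E_W$ is all pairs, so $K$ is the maximum over the four pair types. The types are $\{b,c\}$ with $c\in C$, $\{t,d\}$ with $d\in D$, $\{b,t\}$, and $\{c,d\}$. A pair $\{b,c\}$ lies in $E_q^\pm$ for all $x$ choices of $q_D$, so its count is $x$. The same holds for $\{t,d\}$. A pair $\{c,d\}$ lies in exactly one $q$.

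The problematic pair is $\{b,t\}$, which lies in every $q\in Q$, giving $K=x^2$ and a useless bound. The fix is to subdivide the added edge $\{b,t\}$ (and $\{q_C,q_D\}$) as in \Cref{sp-ac-j-s-xor-a}, but this breaks $\NeighSorted$ consistency. So instead I would enlarge the structure: take many disjoint copies. Let $B$ contain $x$ hub-like vertices $b_1,\dots,b_x$, each with its own pendant set $A_i$ of size $x$, and let $F$ similarly provide $x$ candidate vertices. The swaps would then range over $q=(b_i,c,t,d)$, where the edge type $\{b_i,t\}$ has multiplicity only $x^2$ while $\abs{Q}=x^3$.

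To keep $\pi_{G_q}(t)$ boosted by a constant factor, each $b_i$ with its $A_i$ must carry $\Omega(x^2/n)$ of the mass. We need $x^2\leq n$ vertices overall and $\pi_G(t)=\Theta(1/n)$. So $x=\Theta(n^{1/2})$ gives $\abs{Q}/K=\Omega(x)=\Omega(n^{1/2})$ from \Cref{sn-universal}. The edge count is padded to $\Theta(m)$ by the independent subgraph, as before.

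\textbf{Main obstacle.} The hardest step is simultaneously meeting three demands: the exact degree equalities required for $\NeighSorted$, a constant-factor PageRank increase at $t$, and $K=O(\abs{Q}/n^{1/2})$ once $\Adj$ can probe the direct pair $\{b,t\}$. My first attempt is the multi-hub version above, where the degrees of the $b_i$ and the $D$-vertices are matched by setting $\abs{A_i}+\abs{C}=\abs{F}$. I would verify the PageRank gain by the same walk-from-$A_i$ lower bound used in \Cref{sn-wc-not-all}. This also explains why the bound is $n^{1/2}$ rather than $m^{1/2}$: the hubs must have enough pendant mass, and this costs vertices, not edges.
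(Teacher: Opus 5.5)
Your argument is correct, but it takes a different route from the paper.

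\textbf{The paper's route.} The paper keeps the single-hub instance of \Cref{sn-wc-not-all} unchanged, with $x=\lfloor n^{1/2}\rfloor$. It disposes of the pair $\{b,t\}$ you flagged by a search argument rather than by changing the graph. The gadget $A\cup B\cup C$ has only $3x=O(n^{1/2})$ of the $\Theta(n)$ vertices, and it is not reachable from $t$ in $G$. So $\Jump$ needs $\Omega(n^{1/2})$ queries in expectation to land in it. Since $\Adj$ may only be applied to vertices already known, an algorithm running in $o(n^{1/2})$ time never learns $b$ or any vertex of $C$, with constant probability. One can then take $V_W=D\cup F$, which gives $K=\abs{C}=x$ via the pairs $\{t,q_D\}$, $\abs{Q}=x^2$, and a bound of $\Omega(x)$.

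\textbf{Your route.} You instead spread the hub over $B=\{b_1,\dots,b_x\}$, each with a private pendant set of size $x$, and take $Q=B\times C\times\{t\}\times D$. Then $\abs{Q}=x^3$. The pairs $\{b_i,t\}$ and $\{t,d\}$ each lie in $x^2$ quadruples, while $\{b_i,c\}$ and $\{c,d\}$ lie in only $x$, so $K=x^2$; make sure to state the $\{t,d\}$ count explicitly. The degree conditions $d(b_i)=2x=d(q_D)$ and $d(c)=x=d(t)$ hold. The gain from walks $A_i\to b_i\to t$ is still $\Omega(1/\abs{V})$ against $\pi_G(t)\le 3/(2\abs{V})$. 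Hence the separation goes through at the same sketch level as in \Cref{sn-wc-not-all}.

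\textbf{What each buys.} Your route lets \Cref{sn-universal} apply literally with $V_W=V$ and $E_W=\binom V2$. It needs no conditioning on the event that $\Jump$ misses the gadget. The paper only sketches that step, since the lemma as stated sets $V_W=V$ whenever $\Jump$ is present. The cost is a modified instance and a re-check of the separation. The two proofs also explain the exponent differently. In the paper, $n^{1/2}$ balances the $\Jump$ search cost $n/x$ against the $x$ neighbor scans at $t$. In yours, it is forced by fitting the $x^2$ pendant vertices into $\Theta(n)$ vertices.

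\textbf{Clean-ups.} The sentence about each $b_i$ carrying ``$\Omega(x^2/n)$ of the mass'' should instead say that each $A_i$ contributes $\Theta(1/n)$, because $\abs{A_i}=\Theta(x)=\Theta(d(b_i))$. The remark about $F$ providing $x$ candidate vertices is unnecessary: $t$ is fixed, and $\abs{F}=2x$ is dictated by the degree matching.
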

\begin{proof}
  Use the construction of~\Cref{sn-wc-not-all} with $x=\floor{n^{1/2}}$.
  In expectation, it takes $\Theta\p{\frac{\abs{A \cup B \cup C}}{\abs{V}}} = \Theta(n^{1/2})$ queries to find a vertex in $A$, $B$, or $C$.
  So we can assume that the algorithm with constant probability finds no such vertex, in which case~\Cref{sn-universal} can be applied with $V_W = D \cup F$, giving a lower bound of $\Omega(\abs{Q}/K)$ with $K$ using $E_W=\binom{V_W}{2}$.
  We have $\abs{Q} = x^2$ and $K = \abs{C} = x$, so we get a lower bound of $\Omega(x)=\Omega(n^{1/2})$.
\end{proof}

We now turn to average-case lower bounds.
We again use the same hard instance, but with many disjoint copies to make a constant fraction of targets hard.
For the model that does not include all three query types, we get the following lower bound.

\begin{theorem}\label{sn-ac-not-all}
  Consider the adjacency-list model with the additional queries of a proper subset $U\subsetneq \{\Jump, \NeighSorted, \Adj\}$.
  For any $n$ and $m$ with $1 \leq n \leq m \leq n^2$, there exists a graph $G = (V,E)$ with $\Theta(n)$ vertices, $\Theta(m)$ edges, such that for any algorithm solving the single node problem, the expected running time on $G$, averaging over all targets $t\in V$, is $\Omega(d)$.
\end{theorem}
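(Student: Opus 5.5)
We take many disjoint copies of the hard instance from the proof of \Cref{sn-wc-not-all}, with the parameter $x$ set to $\Theta(d)$ instead of $\Theta(m^{1/2})$, so that a constant fraction of all vertices are targets $t\in F$ of some copy. Concretely, let $x=\floor{d}$ and take $k=\floor{n/(6x)}$ disjoint copies $(A_i,B_i,C_i,D_i,F_i,b_i)$ of the gadget: $K_{A_i,\{b_i\}}\cup K_{B_i,C_i}\cup K_{D_i,F_i}$. Each copy has $6x$ vertices and $\Theta(x^2)$ edges, so the copies together have $\Theta(n)$ vertices and $\Theta(nx)=\Theta(m)$ edges. We also add the independent filler subgraph with $n$ vertices and $m$ edges, as before. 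Since $\sum_i\abs{F_i}=2kx=\Theta(n)$, it suffices to prove an $\Omega(d)$ bound for each fixed target $t\in F_{i_t}$.

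For such a $t$, choose $Q=\{b_{i_t}\}\times C_{i_t}\times\{t\}\times D_{i_t}$, which is a set of swappable quadruples. The degree condition for $\NeighSorted$ holds, since $d(b)=d(t)$ and $d(q_C)=d(q_D)$ are both $2x$ after equalizing $|A|$ appropriately, exactly as in \Cref{sn-wc-not-all}. We then need $\pi_{G_q}(t)\ge(1+4c)\pi_G(t)$, and the local computation from \Cref{sn-wc-not-all} carries over. In $G$, only walks starting in $D_{i_t}\cup F_{i_t}$ reach $t$, so $\pi_G(t)\le\frac{3x}{\abs{V}\,2x}=O(1/\abs{V})$. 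In $G_q$, walks starting in $A_{i_t}$ contribute at least $\frac{x}{\abs{V}}\cdot\frac{(1-\alpha)^2\alpha}{2x}=\Omega(1/\abs{V})$. Both quantities now scale with the total $\abs{V}=\Theta(n)$ rather than with a single copy, and the ratio is the same constant as before. So for $c$ a sufficiently small constant in $\alpha$, we get the required multiplicative gap.

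Next we apply \Cref{sn-universal} and bound $K$. If $\Jump\notin U$ or $\Adj\notin U$, the relevant pairs in $E_W$ lie inside the component of $t$, or are actual edges. The only such pairs appearing in some $E_q^\pm$ are the edges between $t$ and $D_{i_t}$ and between $b$ and $C_{i_t}$, and each lies in at most $x$ quadruples. So $K\le x$, and $\abs{Q}/K=x^2/x=\Omega(d)$. In the remaining case ($\Jump,\Adj\in U$ but $\NeighSorted\notin U$), we subdivide the swapped edges $\{b,t\}$ and $\{q_C,q_D\}$ in $G_q$ using two fresh isolated vertices, as in \Cref{sp-ac-j-s-xor-a}. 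Reaching these vertices with $\Jump$ costs $\Omega(n)\ge\Omega(d)$ queries, so we may remove them from $V_W$. The non-edge pairs then never lie in any $E_q^\pm\cap E_W$, and we again get $K\le x$, the lower bound becoming $\Omega(\min\{n,d\})=\Omega(d)$.

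The main obstacle is verifying the PageRank gap $\pi_{G_q}(t)\ge(1+4c)\pi_G(t)$ uniformly over $q$. This needs a lower bound showing that the swap does not decrease the mass reaching $t$ from $D\cup F$ by more than a small amount. Removing the edge $\{q_D,t\}$ only changes the degree-$2x$ neighborhood of $t$ slightly, by an $O(1/x)$ fraction. I would handle this by a direct path-counting estimate rather than solving the chain: walks from $F\cup D$ that never use the removed edges reach $t$ with probability close to $\pi_G(t)$, at least $(1-O(1/x))\pi_G(t)$. The gain from $A$ is a constant fraction of $\pi_G(t)$, which dominates the loss once $x$ exceeds a constant; the case of constant $d$ is trivial.
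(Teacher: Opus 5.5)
Your proposal is correct and follows essentially the paper's proof. You reuse the instance of \Cref{sn-wc-not-all} with $x=\Theta(d)$ and replicate it so that $\Theta(n)$ vertices are hard targets; the paper keeps a single $A\cup B\cup C$ gadget and replicates only $K_{D,F}$, which makes no difference. Your explicit loss-versus-gain estimate for $\pi_{G_q}(t)\ge(1+4c)\pi_G(t)$ is a more concrete justification than the paper's appeal to solving the Markov chain. Two small slips: for $q=(b,q_C,t,q_D)$ the $\NeighSorted$ condition is $d(b)=d(q_D)=2x$ and $d(q_C)=d(t)=x$, which holds automatically, not the equalities you wrote. You should also take $x=\floor{d/6}$, or force $k\ge 1$, so that at least one copy exists when $d>n/6$.
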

\begin{proof}
  Take the construction of~\Cref{sn-wc-not-all} with $x=d$, but with $\floor{n/d}$ copies of $D$, $F$, and $K_{D,F}$.
  This graph still has $\Theta(n)$ vertices and $\Theta(m)$ edges.
  Fixing $t$ in any copy of $F$, the proof follows through, giving a lower bound of $\Omega(x)=\Omega(d)$.
  Since there are $\Theta(n)$ such choices of $t$, the lower bound holds on average over all vertices.
\end{proof}

Finally, for the model with all three query types, we show the following lower bound.

\begin{theorem}\label{st-ac-all}
  Consider the adjacency-list model with $\Jump$, $\NeighSorted$, and $\Adj$.
  For any $n$ and $m$ with $1 \leq n \leq m \leq n^2$, there exists a graph $G = (V,E)$ with $\Theta(n)$ vertices, $\Theta(m)$ edges, such that for any algorithm solving the single node problem the expected running time on $G$, averaging over all targets $t\in V$, is $\Omega(\min\{d, n^{1/2}\})$.
\end{theorem}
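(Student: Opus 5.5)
We combine the constructions of \Cref{sn-wc-all} and \Cref{sn-ac-not-all}, with $x = \min\{d, n^{1/2}\}$ (rounded down). Take the gadget of \Cref{sn-wc-not-all}: sets $A,B,C$ of size $x$, a vertex $b\in B$, and edges $K_{A,\{b\}}$ and $K_{B,C}$. Add $\floor{n/x}$ disjoint copies of $D$, $F$ and $K_{D,F}$, where $\abs{D}=x$ and $\abs{F}=2x$. Finally add the independent subgraph with $n$ vertices and $m$ edges.

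The copies use $\Theta(n)$ vertices and $\Theta(nx)\le O(m)$ edges, since $x\le d$. The gadget uses $O(x^2)\le O(n)\le O(m)$ edges. Hence the graph has $\Theta(n)$ vertices and $\Theta(m)$ edges. The $F$-vertices make up a constant fraction of $V$, so it suffices to prove the lower bound for each fixed $t$ in some copy of $F$.

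Fix such a $t$, with $D$ its copy, and let $Q=\{b\}\times C\times\{t\}\times D$. The degree conditions needed for $\NeighSorted$ hold: $d(b)=2x=d(t)$ and $d(q_C)=x=d(q_D)$.

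\begin{itemize}
\item In $G$, a walk can end at $t$ only if it starts in $D\cup F$ of that copy. By symmetry, $\pi_G(t)\le 3x/(\abs{V}\cdot 2x)=O(1/n)$.
\item In $G_q$, the walks starting in $A$ alone contribute $\frac{x}{\abs{V}}\cdot\frac{(1-\alpha)^2\alpha}{2x}=\Omega(1/n)$.
\end{itemize}
Exactly as in \Cref{sn-wc-not-all}, this gives $\pi_{G_q}(t)\ge(1+4c)\pi_G(t)$ for $c$ small enough. I would reuse that bound verbatim. Note that the other copies do not interact with the gadget, so they change neither estimate.

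It remains to apply \Cref{sn-universal}, and the key step is bounding the set $V_W$ the algorithm can effectively reach. Each $\Jump$ lands in $A\cup B\cup C$ with probability $\Theta(x/n)$. Walking from the $t$-component reaches $A\cup B\cup C$ in $G$ only through edges of the swap, which the algorithm does not see without first detecting the swap. Therefore, within $o(n/x)$ expected queries, the algorithm with constant probability never visits $A\cup B\cup C$. Since $x\le n^{1/2}$, we have $n/x\ge n^{1/2}\ge x$, so an $o(n/x)$ budget already permits $x$ queries. By Markov's inequality, if the algorithm runs in $o(x)$ expected time, then with constant probability it stays away from $A\cup B\cup C$ and queries only pairs inside the $D\cup F$ parts or in other copies.

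On that event, the argument of \Cref{swap-lb} goes through with $V_W$ excluding $A\cup B\cup C$. This is handled exactly as in \Cref{sn-wc-all} or the subdivision argument of \Cref{sp-ac-j-s-xor-a}: we slightly weaken the probability bound and absorb the loss into the constant. With $\Adj$, the pairs involving $t$'s copy that lie in some $E^\pm_q$ are edges $\{q_D,t\}$ (each in $x$ quadruples, one per $q_C$), the pairs $\{b,t\}$ and $\{q_C,q_D\}$ (excluded from $V_W$), and $\{b,q_C\}$ (also excluded). So $K\le x$ and $\abs{Q}=x^2$, which yields $\Omega(x)=\Omega(\min\{d,n^{1/2}\})$.

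The main obstacle is making the restriction of $V_W$ rigorous. The algorithm might find $A\cup B\cup C$ by chance. We handle this by conditioning on the event that it does not, which has constant probability for any $o(x)$-time algorithm, and by checking that the union-bound argument in \Cref{swap-lb} tolerates a constant loss in success probability. We also need the $\Jump$-hitting estimate to be uniform over all targets in the averaged setting. This holds because the gadget is shared by all copies and $\abs{A\cup B\cup C}=3x$ regardless of which $t$ is fixed.
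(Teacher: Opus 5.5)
Your proposal is correct and follows the paper's proof. You reuse the gadget of \Cref{sn-wc-not-all} with $x=\min\{d,n^{1/2}\}$ and many disjoint $D$--$F$ copies, then rerun the argument of \Cref{sn-wc-all}: $\Jump$ needs $\Omega(n/x)\ge\Omega(x)$ expected queries to hit $A\cup B\cup C$, so $V_W$ excludes it, $K=x$, and $\abs{Q}=x^2$. Your choice of $\floor{n/x}$ copies is the right count; the paper's written $\floor{n/d}$ leaves only $\Theta(nx/d)$ hard targets when $x=n^{1/2}<d$.

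One slip to fix: $d(t)=\abs{D}=x$ and $d(q_D)=\abs{F}=2x$, not the other way round. The $\NeighSorted$ condition of \Cref{swap-lb} pairs $b$ with $q_D$ and $q_C$ with $t$, and it does hold ($2x=2x$ and $x=x$).
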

\begin{proof}
  Take the construction of~\Cref{sn-wc-not-all} with $x=\min\{d, n^{1/2}\}$, but with $\floor{n/d}$ copies of $D$, $F$, and $K_{D,F}$.
  This graph still has $\Theta(n)$ vertices and $\Theta(m)$ edges.
  Since there are $\Theta(n)$ vertices in total across all copies of $F$, it suffices to show the lower bound for a fixed $t$ in one of these copies, so fix such a $t$.
  Since $x \leq n^{1/2}$, the proof of~\Cref{sn-wc-all} follows through, giving a lower bound of $\Omega(x)=\Omega(\min\{d, n^{1/2}\})$.
\end{proof}

\bibliography{references}

\appendix
\section{Appendix}\label{sec:appendix}

\subsection{Pseudocode}\label{sec:app-pseudocode}

\begin{algorithm}[H]
    \caption{$\BackPushNew(t,r_{\max})$ }
    \label{alg:bacwards-push-improved}
    \textbf{Input:} graph $G$, target vertex $t$, residual threshold $r_{\max}$\\
    \textbf{Output:} reserves $p()$ and residuals $r()$
    \begin{algorithmic}[1]
    \State $r(), p()\gets$ \text{empty dictionary with default value $0$}
    \State $p(t) \gets \alpha$
    \For{$i$ from $1$ to $\Deg(t)$} \text{\footnotesize\textcolor{gray}{// Can be done in $O(|\{u\in \mc N(t) \mid d(u)\leq 1/{r_{max}}\}|)$ time, if $\NeighSorted$ is available}}
        \State $u \gets \Neigh(t,i)$
        \If{$\Deg(u) \leq 1/r_{\max}$}
            \State $\residue(u) \gets \residue(u) + (1-\alpha) / \Deg(u)$
        \EndIf
    \EndFor
    \While{ exists $v$ with $r(v) > r_{\max}$}
        \State $\residue \gets \residue(v)$
        \State $\residue(v)\gets 0$
        \State $\reserve(v)\gets \reserve(v) + \alpha \residue$
        \For{$i$ from $1$ to $\Deg(v)$}
            \State $u \gets \Neigh(v,i)$
            \State $\residue(u) \gets \residue(u) + (1-\alpha)\residue(v) / \Deg(u)$
        \EndFor
    \EndWhile
    \State \Return $\residue()$ and $\reserve()$
    \end{algorithmic}
\end{algorithm}

\begin{algorithm}[H]
    \caption{$\BiPPRAvg(t,r_{\max},n_r)$ }
    \label{alg:bacwards-push-improved-sp}
    \textbf{Input:} target vertex $t$, residual threshold $r_{\max}$, number of random walks $n_r$ \\
    \textbf{Output:} estimate $\pih(s,t)$ of $\pi(s,t)$
    \begin{algorithmic}[1]
    \State $r(), p()\gets \BackPushNew(t,r_{\max})$ 
    \For{$i$ from $1$ to $\Deg(t)$}
        \State $u \gets \Neigh(t,i)$
        \If{$\Deg(u) > 1/r_{\max}$}
            \State $\residue(u) \gets \residue(u) + (1-\alpha) / \Deg(u)$
        \EndIf
    \EndFor
    \State $\pih(s,t) \gets p(s)$
    \For{$i\in [n_r]$}
        \State $u \leftarrow  \RW(G, s, \alpha)$ \label{alg-line:np-rw-3}
        \State $\pih(s,t) \gets \pih(s,t) + \frac{r(u)}{n_r}$
    \EndFor
    \Return $\pih(s,t)$
    \end{algorithmic}
\end{algorithm}

\begin{algorithm}[H]
\caption{$\RW(G, s,\alpha)$ }
\label{alg:mc}
\textbf{Input:} Undirected graph $G=(V,E)$, source vertex $s\in V$, stopping probability $\alpha$\\
\textbf{Output:} Sampled vertex $v$, where $v$ is sampled w.p. $\pi(s,v)$
\begin{algorithmic}[1]
    \State $v\gets s$
    \While{True}
        \State with probability $\alpha$ \textbf{return} $v$
        \State $v \gets \Neigh(v, \text{randint}(\Deg(v)))$
    \EndWhile
\end{algorithmic}
\end{algorithm}

\subsection{Missing proofs}\label{sec:missing-proofs}

\begin{proof}[Proof of \Cref{lem:bpnew-invariant}]
    We start by showing that the invariant is true after initializing the residual and reserve values.
    By \Cref{eq:rec-target} we have
    \begin{align*}
        \pi(u,t) &= \alpha \Ind{u=t} + \sum_{v\in \mc N(t)} \frac{(1-\alpha)\pi(u,v)}{d(v)} 
        = p(u) + \sum_{v \in V}\p{r(v)+\frac{1-\alpha}{d(v)}\Ind{v\in X}} \pi(u,v).
    \end{align*}
    The latter equality follows from observing that if $u=t$ then $p(u)=\alpha$, and if $u\in X$ then $r(v)=0$ and otherwise $r(u)=(1-\alpha)/d(u)$.
    Assume that we do a push at a vertex $w$. 
    Let $r'()$ and $p'()$ be the values before the push. 
    Note that $p(u)=p'(u) + \Ind{u=w} \alpha r'(w)$, and $r(u)=r'(u)$ for all $u\notin \mc N[w]$.
    Using this, we get for any $u\in V$:
    \begin{align}
        \pi(u,t) &= p(u) + \sum_{v \in V}\p{r(v)+\frac{1-\alpha}{d(v)}\Ind{v\in X}} \pi(u,v)
        = p'(u) + \sum_{v \in V\setminus \mc N[w]}\p{r'(v)+\frac{1-\alpha}{d(v)}\Ind{v\in X}} \pi(u,v)\nonumber \\ 
        &+ \sum_{v \in N[w]}\p{r(v)+\frac{1-\alpha}{d(v)}\Ind{v\in X}} \pi(u,v) + \Ind{u=w} \alpha r'(w)\label{eq:bpnew-1}
    \end{align}
    We further have that $r(w)=0$ and if $u\in N(w)$ then $r(u)=r'(u) + \frac{1-\alpha}{d(u)} r'(w)$.
    Using this, we get
    \begin{align*}
        & \sum_{v \in N[w]}\p{r(v)+\frac{1-\alpha}{d(v)}\Ind{v\in X}} \pi(u,v) + \Ind{u=w} \alpha r'(w)\\
        &= \sum_{v \in N[w]}\p{r'(v)\Ind{v\neq w}+\frac{1-\alpha}{d(v)}\Ind{v\in X}} \pi(u,v) + r'(w)\p{\sum_{v \in N(w)}\frac{(1-\alpha)\pi(u,v)}{d(v)} + \Ind{u=w} \alpha}\\
        &= \sum_{v \in N[w]}\p{r'(v)\Ind{v\neq w}+\frac{1-\alpha}{d(v)}\Ind{v\in X}} \pi(u,v) + r'(w) \pi(u,w)\\
        &= \sum_{v \in N[w]}\p{r'(v)+\frac{1-\alpha}{d(v)}\Ind{v\in X}} \pi(u,v)
    \end{align*}
    Combining this with \Cref{eq:bpnew-1} finishes the proof.
\end{proof}

\end{document}